\Crefname{@theorem}{Theorem}{Theorems}
\Crefname{algocf}{Algorithm}{Algorithms}
\crefname{algocfline}{line}{lines}
\Crefname{invariant}{Invariant}{Invariants}
\Crefname{claim}{Claim}{Claims}
\Crefname{subclaim}{Subclaim}{Subclaims}
\definecolor{DarkGray}{rgb}{0.66, 0.66, 0.66}
\definecolor{DarkPowderBlue}{rgb}{0.0, 0.2, 0.6}
\definecolor{fluorescentyellow}{rgb}{0.8, 1.0, 0.0}
\newcounter{note}[section]
\newcommand{\initOneLiners}{%
    \setlength{\itemsep}{0pt}
    \setlength{\parsep }{0pt}
    \setlength{\topsep }{0pt}
}
  \def\\{}%
  \def\texttt#1{<#1>}%
  \def\textsf#1{<#1>}%
  \def\mathsf#1{<#1>}%
  \def\ensuremath#1{#1}%
  \def\xspace{}%
  \def\Cref#1{<Label:#1>}%
  \def\eqref#1{<Eq.:#1>}%
\newcommand{\tO}{\tilde{O}}
\newcommand{\lca}{\operatorname{lca}}
\newcommand{\poly}{\operatorname{poly}}
\newcommand{\polylog}{\operatorname{polylog}}
\renewcommand{\emptyset}{\varnothing}
\newcommand{\junk}[1]{}
\newcommand{\eat}[1]{}
\newif\ifhideproofs
\def\ShowComment{True}
\def\ruoxu#1{\marginpar{$\leftarrow$\fbox{R}}\footnote{$\Rightarrow$~{\sf\textcolor{cyan}{#1 --Ruoxu}}}}
\def\ruoxu#1{}
\newcommand{\ext}[1]{T^{\rm ext}_{#1}}
\newcommand{\ct}{{\sc Cut Threshold}\xspace}
\newcommand{\con}{\lambda}
\newcommand{\Con}{\Lambda}
\newcommand{\cut}{{\rm ct}}
\newcommand{\cnt}{{\rm cnt}}
\newcommand{\gen}{{\rm gen}}
\newcommand{\dc}{DECA\xspace}
\newcommand{\dem}{\text{dem}}
\begin{document}

\title{Augmenting Edge Connectivity via Isolating Cuts}
\author{Ruoxu Cen\thanks{Department of Computer Science, Duke University.  Email: {\tt ruoxu.cen@duke.edu}}\and Jason Li\thanks{Simons Institute, UC Berkeley. Email: {\tt jmli@cs.cmu.edu}}\and Debmalya Panigrahi\thanks{Department of Computer Science, Duke University. Email: {\tt debmalya@cs.duke.edu}}}
\date{}

\maketitle


\begin{abstract}
    We give an algorithm for augmenting the edge connectivity of an undirected graph by using the {\em isolating cuts} framework (Li and Panigrahi, FOCS '20). Our algorithm uses poly-logarithmic calls to any max-flow algorithm, which yields a running time of $\tilde O(m + n^{3/2})$ and improves on the previous best time of $\tilde O(n^2)$ (Bencz\'ur and Karger, SODA '98) for this problem. We also obtain an identical improvement in the running time of the closely related edge splitting off problem in undirected graphs.
\end{abstract}




\section{Introduction}
\label{sec:intro}

In the {\em edge connectivity augmentation} problem, we are given an undirected graph $G = (V, E)$ with (integer) edge weights $w$, and a target connectivity $\tau > 0$. The goal is to find a minimum weight set $F$ of edges on $V$ such that adding these edges to $G$ makes the graph $\tau$-connected. (In other words, the value of the minimum cut of the graph after the augmentation should be at least $\tau$.) The edge connectivity augmentation problem is known to be tractable in $\poly(m, n)$ time, where $m$ and $n$ denote the number of edges and vertices respectively in $G$. This was first shown by Watanabe and Nakamura~\cite{WatanabeN87} for unweighted graphs, and the first strongly polynomial algorithm was obtained by Frank~\cite{Frank92}. Since then, several algorithms~\cite{CaiS89,NaorGM97,Gabow16,Gabow94,NagamochiI97} have progressively improved the running time to the current best $\tO(n^2)$ obtained by Bencz\'ur and Karger~\cite{BenczurK00}.\footnote{$\tO(\cdot)$ ignores (poly)-logarithmic factors in the running time.}
In this paper, we give an algorithm to solve the edge connectivity augmentation problem using $\polylog(n)$ calls to {\em any} max-flow algorithm:
\begin{theorem}\label{thm:augment}
    There is a randomized, Monte Carlo algorithm for the edge connectivity augmentation problem that runs in $\tO(m) + \polylog(n)\cdot F(m, n)$ time where $F(m, n)$ is the running time of any maximum flow algorithm on an undirected graph containing $m$ edges and $n$ vertices.
\end{theorem}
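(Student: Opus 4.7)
The plan is to follow the classical Frank/Benczúr–Karger blueprint but replace the expensive cut-structure computation with a recursive application of the isolating cuts primitive of Li–Panigrahi. Recall that Frank's theorem characterizes the optimal augmentation value as $\max_{\mathcal{P}} \lceil \frac{1}{2}\sum_{V_i\in\mathcal{P}} \max(0,\tau - d(V_i)) \rceil$ over subpartitions $\mathcal{P}$ of $V$, where $d(V_i)$ is the weighted boundary of $V_i$. Only sets with $d(V_i)<\tau$ can contribute, and it suffices to consider the family of \emph{extreme sets} at threshold $\tau$, which form a laminar family of size $O(n)$. Moreover, given this laminar family one can extract the optimum subpartition and construct the augmenting edges (via an edge splitting-off procedure) in $\tilde O(m)$ time using known techniques.

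The main work is therefore to compute, in $\polylog(n)$ calls to max-flow, a compact representation of all cuts of value less than $\tau$ — concretely the laminar family of extreme sets together with their cut values. I would build this recursively: starting from $V$, sample a random bipartition into $A$ and $\bar A$; use the isolating cuts subroutine to find, for every vertex $v\in A$, the minimum cut separating $v$ from $\bar A$, and symmetrically for $\bar A$. Standard arguments from the Li–Panigrahi framework show that any extreme set $S$ with $d(S)<\tau$ is "isolated" on one side with constant probability, so with $O(\log n)$ independent random bipartitions we recover every extreme set $S$ whose cut $d(S)$ is less than $\tau$. We then contract the discovered extreme sets and recurse on the resulting quotient graph to find larger extreme sets; the recursion depth is $O(\log n)$, yielding an overall cost of $\polylog(n)\cdot F(m,n)$. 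An initial sparsification step (e.g.\ Benczúr–Karger) reduces the graph to $\tilde O(n)$ edges up to cuts of size $\tau$, absorbing the $\tilde O(m)$ additive term in the theorem and keeping the max-flow calls on sparse graphs.

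The hard part will be correctly and efficiently identifying \emph{all} extreme sets below the threshold $\tau$, not just the global minimum cut: the isolating cuts primitive as originally stated delivers, for a marked vertex $v$ and a marked set $S\ni v$, the minimum $v$--$(S\setminus\{v\})$ cut, which is only guaranteed to be the true extreme set containing $v$ when $v$ lies on exactly one side of the relevant cut. Handling weighted graphs (so that "deficient" cuts below threshold $\tau$ need not be near-minimum), and ensuring that nested or crossing near-minimum cuts are correctly untangled into a laminar family of extreme sets, will require a careful invariant showing that after each recursive contraction step every surviving extreme set is still found with good probability. Once this cut-structure oracle is in place, plugging it into the Frank-style splitting-off algorithm gives the claimed $\tilde O(m)+\polylog(n)\cdot F(m,n)$ running time.
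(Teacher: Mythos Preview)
Your high-level decomposition is right: compute the extreme sets tree, then feed it into the Bencz\'ur--Karger augmentation machinery (which, with the Karger--Panigrahi cactus and some care with data structures, runs in $\tilde O(m)$). The gap is in how you propose to find the extreme sets.

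The assertion that ``any extreme set $S$ with $d(S)<\tau$ is isolated on one side with constant probability'' under a random bipartition is not justified and in fact fails for large $S$. If $|S|=\Theta(n)$, a random bipartition puts $\Theta(n)$ vertices of $S$ on each side with overwhelming probability, so no single terminal is isolated in $S$. The usual multi-scale fix (terminal sets at densities $2^{-i}$) only ensures that \emph{some} terminal $v$ is alone in $S$; but then the isolating cut for $v$ is the minimum $v$--$(T\setminus\{v\})$ cut, which in general has value strictly below $\delta(S)$ and cuts out a proper subset of $S$, not $S$ itself. Your proposed recursion ``contract the discovered sets and look for larger extreme sets in the quotient'' does not obviously repair this: contracting a non-extreme subset of $S$ can destroy the extremality of $S$ in the quotient, and you have no invariant guaranteeing $S$ is ever produced. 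You flag this difficulty yourself in the last paragraph, but nothing in the proposal resolves it.

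The paper's route is different in a way that matters. It does \emph{not} try to locate extreme sets with isolating cuts. Instead it picks random $s,t$, sets $\phi=\lambda(s,t)$, and computes the \textsc{Cut Threshold} set $\cut(s,\phi)=\{v:\lambda(s,v)\le\phi\}$ (this is the primitive from Li--Panigrahi that internally uses isolating cuts). The key structural lemma, proved via submodularity of $\delta$, is that the complement $X=V\setminus\cut(s,\phi)$ \emph{respects} every extreme set: no extreme set crosses $X$ unless it is contained in $X$. Hence contracting $X$ (respectively $V\setminus X$) preserves every extreme set of $G$ as an extreme set of one of the two subproblems, and one recurses on both sides. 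A random-perturbation plus an argument adapted from Abboud--Krauthgamer--Trabelsi shows $X$ is balanced with constant probability, so the recursion depth is $O(\log n)$; a post-processing pass with a dynamic tree then prunes the spurious subtrees that contraction can introduce. The missing idea in your sketch is exactly this: use the isolating-cuts machinery to produce a separator that is provably non-crossing with \emph{all} extreme sets, rather than to detect the extreme sets directly.
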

Using the current best max-flow algorithm on undirected graphs~\cite{BrandLLSSSW21},\footnote{We note that for sparse graphs, there is a slightly faster max-flow algorithm that runs in $O(m^{3/2-\delta})$ time~\cite{GaoLP21}, where $\delta > 0$ is a small constant. If we use this max-flow algorithm in \Cref{thm:augment}, we also get a running time of $O(m^{3/2-\delta})$ for the augmentation problem.} this yields a running time of $\tO(m + n^{3/2})$, thereby improving on the previous best bound of $\tO(n^2)$.

The edge connectivity augmentation problem is closely related to {\em edge splitting off}, a widely used tool in the graph connectivity literature (e.g.,~\cite{Gabow94,NagamochiI97}). A pair of (weighted) edges $(u, s)$ and $(s, v)$ both incident on a common vertex $s$ is said to be split off by weight $w$ if we reduce the weight of both these edges by $w$ and increase the weight of their {\em shortcut} edge $(u, v)$ by $w$. Such a splitting off is valid if it does not change the (Steiner) connectivity of the vertices $V\setminus \{s\}$. 
If all edges incident on $s$ are eliminated by a sequence of splitting off operations, we say that the vertex $s$ is split off. We call the problem of finding a set of edges to split off a given vertex $s$ the edge splitting off problem. 

Lov\'asz~\cite{Lovasz79} initiated the study of edge splitting off by showing that any vertex $s$ with even degree in an undirected graph can be split off while maintaining the (Steiner) connectivity of the remaining vertices. (Later, more powerful splitting off theorems~\cite{Mader78} were obtained that preserve stronger properties and/or apply to directed graphs, but these come at the cost of slower algorithms. We do not consider these extensions in this paper.) The splitting off operation has emerged as an important inductive tool in the graph connectivity literature, leading to many algorithms with progressively faster running times being proposed for the edge splitting off problem~\cite{CaiS89,Frank92,Gabow94,NagamochiI97}. Currently, the best running time is $\tO(n^2)$, which was obtained in the same paper of Bencz\'ur and Karger that obtained the edge connectivity augmentation result~\cite{BenczurK00}. We improve this bound as well:
\begin{theorem}\label{thm:splitting}
    There is a randomized, Monte Carlo algorithm for the edge splitting off problem that runs in $\tO(m) + \polylog(n)\cdot F(m, n)$ time where $F(m, n)$ is the running time of any maximum flow algorithm on an undirected graph containing $m$ edges and $n$ vertices.
\end{theorem}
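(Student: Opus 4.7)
My plan is to prove \Cref{thm:splitting} by reducing edge splitting off at $s$ to a single call of the augmentation algorithm of \Cref{thm:augment}, together with combinatorial post-processing.

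\textbf{Reduction.} Let $G' := G - s$ denote the graph obtained by deleting $s$ and all its incident edges, and let $\tau$ be the Steiner edge connectivity of $V \setminus \{s\}$ in $G$; a single min-cut query computes $\tau$ in $\tO(F(m,n))$ time. Invoke \Cref{thm:augment} on $G'$ with target $\tau$ to obtain a minimum-weight augmentation $F^*$. For every vertex $v \neq s$, let $d_{G_s}(v)$ denote the total weight of $(s,v)$ edges in $G$. By Lov\'asz's splitting-off theorem (valid when $\deg_G(s)$ is even), there exists a multiset $F$ of $\deg_G(s)/2$ shortcut edges on $V \setminus \{s\}$ such that $G' + F$ has Steiner connectivity $\geq \tau$ on $V\setminus\{s\}$, and the endpoint multiplicity $d_F(v)$ matches $d_{G_s}(v)$ for every $v$. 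Any such $F$ corresponds to a valid complete splitting of $s$.

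\textbf{Realizing the splitting from $F^*$.} The plan is to convert $F^*$ into such an $F$. I would first match each edge $(u,v) \in F^*$ with a pair $\{(s,u),(s,v)\}$ of edges at $s$ via bipartite matching on the demands $d_{G_s}$, and then pair the remaining unused edges at $s$ arbitrarily into dummy shortcut edges (possibly self-loops); adding extra edges on $V \setminus \{s\}$ cannot decrease connectivity. If $d_{F^*}(v) > d_{G_s}(v)$ at some vertex $v$ so the matching is infeasible, I would locally reshape $F^*$ using exchange arguments on the cactus of minimum cuts of $G'$: the cactus identifies the cuts of $G'$ that $F^*$ must cross, and standard exchanges along cactus cycles reroute endpoints while preserving both the total weight and the coverage of every min cut. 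The cactus of $G'$ can itself be constructed with $\polylog(n)$ max-flow calls via the isolating-cuts framework, consistent with the budget.

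\textbf{Running time and main obstacle.} The dominant cost is the single invocation of \Cref{thm:augment}, contributing $\tO(m) + \polylog(n) \cdot F(m,n)$; the computation of $\tau$, the cactus, the bipartite matching, and the exchanges all fit within the same budget. The main obstacle is the reshaping step: the minimum-weight augmentation $F^*$ need not respect the demand sequence $d_{G_s}$ at $s$, so some care is needed to rewrite it into a compatible augmentation without inflating its weight. This step is the crux of the proof, and its success rests on the cactus structure of $G'$ together with an exchange lemma that can be implemented using only $\polylog(n)$ additional max-flow calls, yielding the claimed bound.
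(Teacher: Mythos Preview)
Your reduction is to the \emph{unconstrained} augmentation problem of \Cref{thm:augment}, followed by an ad hoc ``reshaping'' step to enforce the degree profile $d_{G_s}(\cdot)$ at $s$. The paper goes the other way: it formulates the \emph{degree-constrained} augmentation problem (\dc), shows that both edge augmentation and splitting off are special cases of \dc, and proves \Cref{thm:deg-augment} directly. For splitting off one sets $\beta(v)=d_{G_s}(v)$ and $\tau$ equal to the Steiner connectivity; the \dc algorithm then returns an $F$ with $d_F(v)\le\beta(v)$ that already augments $G'$ to connectivity $\tau$, and any residual slack is filled by an arbitrary matching. No reshaping is needed because the degree constraints are respected throughout the algorithm (they enter in the external-augmentation phase and persist through the augmentation-chain iterations).

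The gap in your plan is precisely the reshaping step, which you correctly identify as the crux but do not actually carry out. Two concrete problems: first, the cactus of $G'$ records only the cuts of value exactly $\lambda(G')$, whereas $F^*$ must cover \emph{every} cut of $G'$ with value below $\tau$; an exchange that preserves coverage of min-cuts can easily uncover a cut of intermediate value, so ``standard exchanges along cactus cycles'' do not suffice. Second, even granting a suitable structure, you give no algorithm for the exchange step nor any argument that it terminates in $\polylog(n)$ max-flow calls; this is not a known primitive, and in effect it is asking to solve \dc after the fact. The paper avoids all of this by building the degree constraints into the algorithm from the start via the extreme sets tree, so that each augmentation chain is already a partial solution compatible with the constraints.
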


As in previous work (e.g.,~\cite{BenczurK00}), instead of giving separate algorithms for the edge connectivity augmentation and the edge splitting off problems, we give an algorithm for the {\em degree-constrained} edge connectivity augmentation (\dc) problem, which generalizes both these problems. In this problem, given an edge connectivity augmentation instance, we add additional {\em degree constraints} $\beta(v)\ge 0$ requiring the total weight of added edges incident on each vertex to be at most its degree constraint. The goal is to either return an optimal set of edges for the augmentation problem that satisfy the degree constraints, or to say that the instance is infeasible. 

Clearly, \dc generalizes the edge connectivity augmentation problem. To see why \dc also generalizes splitting off, create the following \dc instance from a splitting off instance: Remove the edges incident on $s$ and set $\beta(v)$ to the weighted degree of $v$ in these edges. Then, set $\tau$ to the (Steiner) connectivity of $V$ in the input graph. Once the \dc solution $F$ is obtained, for vertices $v$ whose degree in $F$ is smaller than $\beta(v)$, use an arbitrary weighted matching to increase the degrees to exactly $\beta(v)$.

For the \dc problem, we show that:
\begin{theorem}\label{thm:deg-augment}
    There is a randomized, Monte Carlo algorithm for the degree-constrained edge connectivity augmentation problem that runs in $\tO(m) + \polylog(n)\cdot F(m, n)$ time where $F(m, n)$ is the running time of any maximum flow algorithm on an undirected graph containing $m$ edges and $n$ vertices.    
\end{theorem}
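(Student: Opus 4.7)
The plan is to follow the classical Frank-style framework for degree-constrained edge connectivity augmentation, in which the problem is reduced to a combinatorial question about the laminar family of \emph{extreme deficient sets} of $G$---sets $S \subsetneq V$ for which $d_G(S) < \tau$ but $d_G(S') \geq d_G(S)$ for every proper subset $S' \subsetneq S$. It is classical that these form a laminar family $\cL$ of size $O(n)$, and that once $\cL$ is known, \dc reduces to an $\tO(n)$-time covering/$b$-matching problem on the tree representing $\cL$ (together with Frank-type feasibility conditions on $\beta$ and $\tau$). Under this strategy, the computational bottleneck is the efficient construction of $\cL$, so the core task reduces to computing $\cL$ in $\tO(m) + \polylog(n) \cdot F(m,n)$ time.

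To construct $\cL$ I would adapt the isolating cuts framework of Li--Panigrahi. The key idea behind isolating cuts is that, via a random labeling of vertices together with a recursive application of max-flow on appropriately contracted instances, one can identify, for every vertex $v$, a minimum cut separating $v$ from a distinguished side while making only $\polylog(n)$ max-flow calls in aggregate. I would apply this idea recursively over the laminar structure of $\cL$: compute the outermost layer of extreme sets first (those of smallest deficiency), contract each such set into a single super-vertex, and recurse inside on the ``children'' layer. Each level of recursion consumes $\polylog(n)$ max-flow calls plus $\tO(m)$ processing, and the laminarity of $\cL$ bounds the total work at $\tO(m) + \polylog(n) \cdot F(m, n)$.

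With $\cL$ in hand, we apply the classical reduction to finish. Frank's theorem characterizes the optimum augmentation as a minimum-weight edge set that covers every element of $\cL$ while respecting the degree budgets $\beta$; both feasibility and the optimum can be extracted in $\tO(n)$ time through a laminar-tree $b$-matching / greedy ``fill-up'' procedure. Combined with the reductions already described in the introduction, this establishes \Cref{thm:deg-augment}, and \Cref{thm:augment,thm:splitting} follow as immediate corollaries.

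The step I anticipate to be the main obstacle is adapting the isolating cuts framework beyond its original minimum-cut / Gomory--Hu setting to recover \emph{all} sub-$\tau$ extreme sets. Native isolating cuts operate at a single cut value, whereas extreme sets span a range of deficiencies between the global minimum cut and $\tau - 1$. Addressing this likely requires interleaving the recursive contraction strategy with a geometric scaling of the deficiency $\tau - d_G(\cdot)$, together with a careful union bound ensuring that every extreme set is isolated with high probability in only $\polylog(n)$ rounds. A secondary---though lower-risk---difficulty is implementing the Frank-style reduction combinatorially within the $\tO(m)$ budget without invoking further max-flow calls.
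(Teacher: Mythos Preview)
Your high-level decomposition---compute the extreme-sets laminar family, then solve a combinatorial problem on the resulting tree---matches the paper. But both halves of your plan have real gaps.

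\textbf{Extreme sets.} Your proposed recursion (find the outermost layer, contract, recurse inside) has no depth bound: the laminar family can be a chain of length $\Omega(n)$, so ``$\polylog(n)$ max-flow calls per level'' yields $\tO(n)\cdot F(m,n)$, not $\polylog(n)\cdot F(m,n)$. Laminarity caps the \emph{number} of sets at $O(n)$ but says nothing about nesting depth. The paper avoids this by \emph{not} recursing along the laminar structure at all. Instead it picks random vertices $s,t$, computes the \ct $\cut(s,\con(s,t))$ (a single-threshold object, so no scaling over deficiencies is needed), and proves that the complement of $\cut(s,\phi)$ \emph{respects every extreme set simultaneously} (\Cref{lem:ct-noncross}). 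This lets the algorithm contract each side and recurse on two balanced subproblems; balance comes from an Abboud--Krauthgamer--Trabelsi argument that a random $(s,t)$ yields a near-even split with constant probability. The recursion depth is then $O(\log n)$ regardless of how deeply the extreme sets nest. Your anticipated obstacle---that isolating cuts natively work at one cut value---is handled not by geometric scaling but by this single structural lemma about \ct{}s.

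\textbf{Augmentation given the tree.} Reducing \dc to an $\tO(n)$ laminar $b$-matching is too optimistic. The greedy fill-up you describe is only the \emph{external augmentation} step (add a dummy vertex $s$ and attach edges); what remains is to split off $s$, and that is where all the work is. The paper follows Bencz\'ur--Karger: iteratively add \emph{augmentation chains} across the maximal high-demand extreme sets, maintain the chain and the relevant minima ($t_1,t_2,t_3$) via dual priority queues and a top-tree on the extreme-sets tree so that the total number of updates over all iterations is $O(n)$, and finish the last unit of demand with the Naor--Gusfield--Martel cactus step. This costs $\tO(m)$ (not $\tO(n)$), because the final step needs the min-cut cactus. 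A one-shot Frank-style covering argument does not deliver a feasible augmentation here; you still have to convert the star at $s$ into edges on $V$, and that conversion is exactly the splitting-off problem.
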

\Cref{thm:deg-augment} implies \Cref{thm:augment} and \Cref{thm:splitting}. 
The rest of this paper focuses on proving \Cref{thm:deg-augment}.

\subsection{Our Techniques}
A key tool in many augmentation/splitting off algorithms (e.g., in \cite{WatanabeN87,NaorGM97,Gabow16,Benczur94,BenczurK00}) is that of {\em extreme sets}. A non-empty set of vertices $X\subset V$ is called an extreme set in graph $G = (V, E)$ if for every proper subset $Y\subset X$, we have $\delta_G(Y) > \delta_G(X)$, where $\delta_G(X)$ (resp., $\delta_G(Y)$) is the total weight of edges with exactly one endpoint in $X$ (resp., $Y$) in $G$. (If the graph is unambiguous, we drop the subscript $G$ and write $\delta(\cdot)$.) The extreme sets form a laminar family, therefore allowing an $O(n)$-sized representation in the form of an {\em extreme sets tree}. The main bottleneck of the Bencz\'ur-Karger algorithm is in the construction of the extreme sets tree. They use the {\em recursive contraction} framework of Karger and Stein~\cite{KargerS96} for this construction, which takes $\tO(n^2)$ time. In this paper, we obtain a faster algorithm for finding the extreme sets of a graph:
\begin{restatable}{theorem}{Extreme}\label{thm:extreme}
    There is a randomized, Monte Carlo algorithm for finding the extreme sets tree of an undirected graph that runs in $\tO(m) + \polylog(n)\cdot F(m, n)$ time where $F(m, n)$ is the running time of any maximum flow algorithm on an undirected graph containing $m$ edges and $n$ vertices.
\end{restatable}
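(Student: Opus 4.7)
I will build the extreme sets tree top-down, exploiting its laminar structure: the children of any node $X$ are the maximal extreme sets strictly contained in $X$, and these partition $X$ (singletons are trivially extreme). The core subroutine, given a graph $H$ together with a distinguished \emph{external} vertex $\star$ (representing the complement of the current node in $V$, contracted to one point), returns the maximal extreme sets of $H$ disjoint from $\star$. Given this subroutine, I recurse into each returned set by contracting its complement inside the current node to form a new sub-instance.

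\textbf{Finding maximal extreme sets via isolating cuts.} To implement the core subroutine I invoke the Li--Panigrahi isolating-cuts lemma on sampled terminal sets. I sample terminal subsets $T \subseteq V(H) \setminus \{\star\}$ at all $O(\log n)$ geometric inclusion rates $p = 2^{-i}$ simultaneously and $O(\log n)$ times per rate. For each sampled $T$ I compute, in $O(\log n)$ max-flow calls, for every $t \in T$ the minimum cut separating $t$ from $(T \setminus \{t\}) \cup \{\star\}$; each returned cut is a candidate extreme set. The structural claim is that for any maximal extreme set $X$ with $\star \notin X$, if $T \cap X = \{t\}$ (at the geometric rate matching $|X|$ this happens with probability $\Omega(1)$), then the minimum $t$-$((T\setminus\{t\}) \cup \{\star\})$ cut is exactly $X$. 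Indeed, any $\emptyset \neq Y \subsetneq X$ with $t \in Y$ has $\delta(Y) > \delta(X)$ by the extremity of $X$, and any extreme set $X' \supsetneq X$ must contain $\star$ by maximality of $X$, so it is disqualified as a candidate. Repeating $O(\log n)$ times per rate amplifies the success probability to $1 - 1/\poly(n)$ for each particular $X$, and a union bound over the $O(n)$ sets in the laminar family covers every maximal extreme set encountered at every recursion level.

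\textbf{Verification and runtime.} Each candidate returned by the sampler must be validated as genuinely extreme: I handle this in a linear-algebraic post-processing pass that filters candidates by comparing $\delta$-values along the nested inclusion chain they induce, since any non-extreme candidate will be dominated by a proper subset returned at a finer sampling rate. The sparsification of Bencz\'ur--Karger, computable in $\tilde O(m)$ time, reduces the graph to near-linear size while preserving all cut values up to the thresholds relevant for extreme sets, giving the $\tilde O(m)$ term. Each invocation of the subroutine costs $\polylog(n) \cdot F(m,n)$ max-flow time, and because disjoint sibling subproblems at the same recursion level share no edges, the aggregate edge mass at each level is $O(m)$.

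\textbf{Main obstacle.} The hardest point to control is the total max-flow work across a potentially unbalanced recursion: a chain-like laminar family has depth $\Omega(n)$, which would multiply the cost by a linear factor if each recursion level were treated separately. The remedy -- which I anticipate being the main technical content of the argument -- is to collapse the \emph{effective} depth to $O(\log n)$ by simultaneously sampling at all $O(\log n)$ geometric rates in a single call, so that extreme sets at many nested tiers are discovered in one shot rather than one tier per recursive descent. Making this precise requires arguing that for each extreme set $X$ there exists a rate at which $X$ is isolated with constant probability \emph{regardless} of which ancestor of $X$ plays the role of the current node, so that all discoveries can be charged to a single global pool of $\polylog(n)$ max-flow calls rather than to a per-level budget.
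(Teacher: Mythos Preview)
Your proposal has two genuine gaps that the paper's proof avoids by taking a different route.

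\textbf{The structural claim fails.} You assert that when $T\cap X=\{t\}$ the minimum $t$--$((T\setminus\{t\})\cup\{\star\})$ cut equals $X$, arguing that ``any extreme set $X'\supsetneq X$ must contain $\star$ by maximality of $X$.'' This is false: the current node $U=V(H)\setminus\{\star\}$ is itself an extreme set of $H$ (it is a node of the tree), it strictly contains $X$, and it does not contain $\star$. Since $U$ is extreme and $X\subsetneq U$, we have $\delta(U)<\delta(X)$, so whenever $T\subseteq X$ the isolating cut is $U$, not $X$. More generally, a posimodularity uncrossing shows the $t$-minimal isolating cut is always a union $\bigcup_{j\in S}X_j$ of children of $U$ with $i\in S$ and $T\cap X_j=\emptyset$ for $j\in S\setminus\{i\}$; to force $S=\{i\}$ you need $T$ to hit \emph{every} sibling $X_j$. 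But at the rate $p\approx 1/|X_i|$ needed to isolate a large child $X_i$, the chance of hitting a singleton sibling is only $p\approx 1/|U|$, so the success probability is $O(1/n)$ rather than $\Omega(1)$, and $O(\log n)$ repetitions do not amplify it.

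\textbf{The depth remedy does not work.} Multi-rate sampling inside one contracted graph $H$ can only return isolating cuts computed against \emph{that} graph's $\star$. Discovering the children of a set $X$ requires the different contracted graph in which $V\setminus X$ is the external vertex; those cuts are not among the isolating cuts you compute in $H$. So nested tiers cannot be found ``in one shot,'' and with a chain-like laminar family you genuinely pay for $\Omega(n)$ levels. (Separately, Bencz\'ur--Karger sparsification preserves cuts only up to $1\pm\eps$ and therefore does not preserve the strict inequalities $\delta(Y)>\delta(X)$ that define extreme sets; the paper does not use it.)

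\textbf{What the paper does instead.} Rather than isolating extreme sets directly, the paper computes a \textsc{Cut Threshold} $\cut(s,\phi)$ for random $s,t$ with $\phi=\con(s,t)$, and proves the structural lemma that $X=V\setminus\cut(s,\phi)$ \emph{respects} all extreme sets. Contracting either side therefore loses no extreme set, and an argument of Abboud--Krauthgamer--Trabelsi guarantees that $X$ is balanced with constant probability. This gives a balanced divide-and-conquer of depth $O(\log n)$---sidestepping both of your obstacles---followed by a simple post-processing pass to prune spurious subtrees.
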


Our extreme sets algorithm is based on the {\em isolating cuts} framework that we introduced in a recent paper~\cite{LiP20deterministic}. (This was independently discovered by Abboud~{\em et al.}~\cite{AbboudKT21}.) Given a set of $k$ terminal vertices, this framework uses $O(\log k)$ max-flows to find the minimum cuts that separate each individual terminal from the remaining terminals (called isolating cuts). In the current paper, instead of using the framework directly, we use a gadget called a \ct that is defined as follows: for a given vertex $s$ and threshold $\phi\ge 0$, the \ct $\cut(s, \phi)$ is the set of vertices $t$ such that the value of the minimum $s-t$ cut $\con(s, t) \le \phi$. We showed recently~\cite{LiP21approximate} that the isolating cuts framework can be used to find the \ct for any vertex $s$ and threshold $\phi$ in $\polylog(n)$ max-flows. We use this result here, and focus on obtaining extreme sets using a \ct subroutine.

Our main observation is that if an extreme set $Y$ partially overlaps the {\em complement} $X$ of a \ct, then it must actually be wholly contained in $X$. (Intuitively, one may interpret this property as saying that an extreme set and a \ct are {\em non-crossing}, although our property is actually stronger, and only the non-crossing property does not suffice for our algorithm.) This allows us to design a divide and conquer algorithm that runs a recursion on two subproblems generated by contracting each side of a carefully chosen \ct. The above property ensures that every extreme set in the original problem continues to be an extreme set in either of the two subproblems. In order to bound the depth of recursion, it is important to use a \ct that produces a {\em balanced} partition of vertices. We ensure this by adapting a recent observation of Abboud {\em et al.}~\cite{AbboudKT20} which asserts that a \ct based on the connectivity between two randomly chosen vertices is balanced with constant probability. One additional complication is that while the contraction of the \ct (or its complement) does not eliminate any extreme set, it might actually add new extreme sets. We run a {\em post-processing} phase where we use a dynamic tree data structure to eliminate these spurious extreme sets added by the recursive algorithm.

After obtaining the extreme sets tree, the next step (in our algorithm and in previous work such as \cite{BenczurK00}) is to add a vertex $s$ and use a postorder traversal on the extreme sets tree to find an optimal set of edges incident on $s$ for edge connectivity augmentation. This step takes $O(n)$ time. 

Next, we split off vertex $s$ using an iterative algorithm that again uses the extreme sets tree. At a high level, this splitting off algorithm follows a similar structure to the Bencz\'ur-Karger algorithm, but with a couple of crucial differences that improves the running time from $\tO(n^2)$ to $\tO(m)$. The first difference is in the construction of a min-cut {\em cactus} data structure. At the time of the Bencz\'ur-Karger result, the fastest cactus algorithm was based on recursive contraction~\cite{KargerS96} and had a running time of $\tO(n^2)$. But, this has since been improved to $\tO(m)$ by Karger and Panigrahi~\cite{KargerP09}. Using this faster algorithm removes the first $\tO(n^2)$ bottleneck in the augmentation algorithm. 

The second and more significant improvement is in the use of data structures in the splitting off algorithm. This is an iterative algorithm that has $O(n)$ iterations and adds $O(n)$ edges in each iteration. The Bencz\'ur-Karger algorithm updates its data structures for each edge in all these iterations, thereby incurring $O(n^2)$ updates. Instead, we use the following observation (this was known earlier): there are only $O(n)$ {\em distinct} edges used across the $O(n)$ iterations, and the total number of {\em changes} in the set of edges from one iteration to the next is $O(n)$. To exploit this property, we use a {\em lazy} procedure based on the top tree data structure due to Goldberg {\em et al.}~\cite{goldberg1991use} (and additional priority queues to maintain various ordered lists). Our data structure only performs updates on edges that are added/removed in an iteration, thereby reducing the total number of updates to $O(n)$, and each update can be implemented in $O(\log n)$ using standard properties of top trees and priority queues. We obtain the following:

\begin{restatable}{theorem}{Deca}\label{thm:deca}
Given an input graph and its extreme set tree, there is an $\tilde O(m)$ time algorithm that solves the degree-constrained edge connectivity problem.
\end{restatable}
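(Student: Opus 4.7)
The plan is to implement the two-stage pipeline outlined in the techniques paragraphs: first augment edges from a new auxiliary vertex $s$, then split $s$ off, and bound each stage in $\tilde O(m)$ given the extreme set tree as input.

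For the augmentation stage, I introduce an auxiliary vertex $s$ and perform a postorder traversal of the extreme set tree to determine the weights of $s$-incident edges that achieve $\tau$-connectivity at minimum total weight while obeying the degree bounds $\beta(v)$. At each node $X$ I maintain the residual demand $\tau - \delta(X)$, greedily assign $s$-edges to vertices in $X$ with remaining slack under $\beta$, and propagate any unmet demand to $X$'s parent. This is standard bookkeeping on a tree of size $O(n)$ and runs in $O(n)$ time; either the traversal terminates with a valid edge set or infeasibility is certified at the root.

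The second stage splits $s$ off while preserving the $\tau$-connectivity of $V\setminus\{s\}$. I first build a min-cut cactus on $V\setminus\{s\}$ in $\tilde O(m)$ time using the Karger--Panigrahi algorithm~\cite{KargerP09}, then follow the iterative splitting-off template of~\cite{BenczurK00}: repeatedly locate a pair of $s$-edges $(s,u),(s,v)$ that can be split off by a positive weight without violating any min-cut constraint, perform the split, and update the cactus. There are $O(n)$ such iterations, since each one either exhausts an $s$-edge or strictly reduces the number of distinct active endpoints. To avoid recomputing the candidate structure in every iteration, I maintain the cactus via top trees~\cite{goldberg1991use} augmented with priority queues over the currently active $s$-edges at each cluster. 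A splittable pair is identified by an $O(\log n)$ query against these top trees, and the subsequent update reweights only $O(1)$ amortized edges per iteration.

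The main obstacle is the amortized argument that the total number of insertions and deletions into the priority queues, summed across all iterations, is only $\tilde O(n)$ rather than $\tilde O(n^2)$. I would prove this by charging each structural change in the active edge set either to an $s$-edge being exhausted or to a cactus cut becoming newly tight; both event classes occur only $O(n)$ times globally, so the telescoping bound follows. A secondary technical complication is that cacti contain cycles, whereas top trees maintain forests, so each cactus cycle must be decomposed into a spanning path plus a designated auxiliary edge, with queries and updates routed accordingly. Combining the $\tilde O(m)$ cost of cactus construction with the $\tilde O(n)$ cost of the splitting loop then yields the claimed $\tilde O(m)$ total bound.
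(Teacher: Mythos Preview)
Your first stage matches the paper: external augmentation via a postorder traversal of the extreme sets tree, in $O(n)$ time.

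The second stage, however, has a genuine gap. You propose to drive the entire splitting-off loop with the min-cut cactus, maintained dynamically via top trees. But the cactus only encodes cuts of a \emph{single} value (the current minimum), whereas the constraints that govern whether a split-off is valid are the extreme sets at all demand levels. The Bencz\'ur--Karger template that both you and the paper invoke does \emph{not} iterate on the cactus: it repeatedly builds an \emph{augmentation chain}, which is a path through the maximal extreme sets of demand at least~$2$, adds that chain as many times as possible, and only falls back to the cactus at the very end, once every remaining extreme set has demand at most~$1$. Your proposal skips the augmentation-chain machinery entirely, so the claim that a ``splittable pair'' can be identified by an $O(\log n)$ query against a cactus-based top tree is unsupported---the cactus simply does not see the demand-$\ge 2$ extreme sets that can block a split.

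Concretely, what the paper does (and what you are missing) is: maintain the current chain $F$ implicitly with birth-time tags and a global timer; compute the three stopping times $t_1(F),t_2(F),t_3(F)$ (vacancy exhaustion, demand dropping below~$2$, an $X_i$ ceasing to be extreme) via dual priority queues and a top tree built on the \emph{extreme sets tree}, not on the cactus; and charge each edge modification to either a vertex losing vacancy or an extreme set entering/leaving the list $X_1,\ldots,X_r$, each of which happens at most once per node of the extreme sets tree. That is the amortization that gives $O(n\log n)$ updates. Your charging to ``a cactus cut becoming newly tight'' does not correspond to the events that actually terminate an iteration, and dynamically maintaining a cactus under split-offs in $\tilde O(n)$ total time is not something you (or the literature) have established. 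The cactus appears only once, statically, for the final demand-$1$ cleanup via Naor et~al.
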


\Cref{thm:deg-augment} now follows from \Cref{thm:extreme} and \Cref{thm:deca}.

\paragraph{Roadmap.} We give the algorithm for finding extreme sets that establishes \Cref{thm:extreme} in \Cref{sec:extreme}. The algorithm for the \dc problem that uses the extreme sets tree and establishes \Cref{thm:deca} is given in \Cref{sec:augment}.

\section{Algorithm for Extreme Sets}
\label{sec:extreme}

In this section, we present our extreme sets algorithm and prove \Cref{thm:extreme}, restated below.
\Extreme*

Recall that the input graph $G = (V, E)$ is an undirected graph with integer edge weights $w$. An extreme set is a set of vertices $X \subset V$ such that for every proper subset $Y\subset X$, we have $\delta(Y) > \delta(X)$. 
Note that all singleton vertices are also extreme sets by default since they do not have non-empty strict subsets. 

The following is a well-known property of extreme sets (see, e.g.,~\cite{BenczurK00}):
\begin{lemma}\label{lem:extreme-laminar}
    The extreme sets of an undirected graph form a laminar family, i.e., for any two extreme sets, either one is contained in the other, or they are entirely disjoint.
\end{lemma}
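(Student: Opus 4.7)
The plan is to prove the contrapositive: if two extreme sets $A, B \subseteq V$ are not laminar, meaning that all three of $A \cap B$, $A \setminus B$, and $B \setminus A$ are non-empty, then we derive a contradiction. The key tool is the \emph{posimodularity} of the undirected cut function, i.e., the inequality
\[ \delta(A) + \delta(B) \;\geq\; \delta(A \setminus B) + \delta(B \setminus A) \]
valid for any two non-empty subsets $A,B$ of $V$. My first step would be to verify this by a routine edge-by-edge accounting: for each edge $\{u,v\}$, I would check the eight essentially distinct cases arising from which of the four regions $A \cap B,\ A \setminus B,\ B \setminus A,\ V \setminus (A \cup B)$ contain $u$ and $v$, and observe that in each case its contribution to the left-hand side is at least its contribution to the right-hand side (the only strict case being an edge between $A \cap B$ and $V \setminus (A \cup B)$).

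Next, I would invoke the extreme set property of $A$ and $B$. Since $A \cap B$ is non-empty (crossing), $A \setminus B$ is a non-empty \emph{proper} subset of $A$, and therefore the extreme set property of $A$ gives $\delta(A \setminus B) > \delta(A)$. Symmetrically, $\delta(B \setminus A) > \delta(B)$. Summing these two strict inequalities yields
\[ \delta(A \setminus B) + \delta(B \setminus A) \;>\; \delta(A) + \delta(B), \]
which directly contradicts posimodularity. Hence no two extreme sets may cross, and the extreme sets form a laminar family.

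The main obstacle is really just choosing the correct inequality: the most natural first attempt would be to use standard submodularity $\delta(A) + \delta(B) \geq \delta(A \cap B) + \delta(A \cup B)$ together with $\delta(A \cap B) > \max(\delta(A), \delta(B))$, but this does not yield a contradiction because $\delta(A \cup B)$ can be zero (for instance when $A \cup B = V$). Recognizing that the posimodular variant — which is specific to undirected graphs and uses the two \emph{differences} $A \setminus B$ and $B \setminus A$ rather than the intersection and union — is exactly what pairs with the extreme set property applied symmetrically to $A$ and $B$ is the only real insight required; the rest is bookkeeping.
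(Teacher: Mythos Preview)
Your argument is correct: posimodularity of the undirected cut function combined with applying the extreme-set inequality to both $A\setminus B\subsetneq A$ and $B\setminus A\subsetneq B$ yields the desired contradiction, and your remark that submodularity alone (with $A\cap B$ and $A\cup B$) does not suffice is on point.

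As for comparison with the paper: the paper does not actually prove \Cref{lem:extreme-laminar}. It states the lemma as a well-known property and refers the reader to~\cite{BenczurK00}. Your posimodularity argument is the standard proof of this fact and is exactly what one finds in that reference, so there is nothing to add.
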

This lemma allows us to represent the extreme sets of $G = (V, E)$ as a rooted tree $\ext{G}$ with the following properties:
\begin{itemize}
    \item 
    The set of vertices in $V$ exactly correspond to the set of leaf vertices in $\ext{G}$.
    \item
    The extreme sets in $G$ exactly correspond to the (proper) subtrees of $\ext{G}$ in the following sense: for any extreme set $X\subset V$, there is a unique subtree of $G$ denoted $\ext{G}(X)$ such that the vertices in $X$ are exactly the leaves in $\ext{G}(X)$. Overloading notation, we also use $\ext{G}(X)$ to denote the root of the subtree corresponding to $X$ in $\ext{G}$.
\end{itemize}  We call $\ext{G}$ the {\em extreme sets tree} of $G$, and give an algorithm to construct it in this section.

We will use a {\em \ct} procedure from our recent work~\cite{LiP21approximate}. Recall that a \ct is defined as follows:
\begin{Definition}
    Let $\con(s, t)$ denote the value of the max-flow between two vertices $s$ and $t$; we call $\con(s, t)$ the {\em connectivity} between $s$ and $t$.
    Then, the \ct for vertex $s$ and threshold $\phi \ge 0$, denoted $\cut(s, \phi)$, is the set of all vertices $t\in V\setminus \{s\}$ such that $\con(s, t) \le \phi$. 
\end{Definition}
In recent work, we gave an algorithm for finding a \ct~\cite{LiP21approximate} based on our isolating cuts framework~\cite{LiP20deterministic}:
\begin{theorem}[Li and Panigrahi~\cite{LiP21approximate}]\label{thm:ct-alg}
    Let $G = (V, E)$ be an undirected graph containing $m$ edges and $n$ vertices.
    For any given vertex $s\in V$ and threshold $\phi\ge 0$, there is a randomized Monte Carlo algorithm for finding the \ct $\cut(s, \phi)$ in $\tO(m) + \polylog(n)\cdot F(m, n)$ time, where $F(m, n)$ is the running time of any max-flow algorithm on undirected graphs containing $m$ edges and $n$ vertices.
\end{theorem}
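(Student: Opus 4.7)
The plan is to reduce the cut-threshold problem to $\polylog(n)$ invocations of the isolating cuts framework of~\cite{LiP20deterministic}. Recall that this framework, given a terminal set $T \subseteq V$ with $|T|=k$, uses $O(\log k)$ max-flow computations to return, for each $t \in T$, a minimum cut $S_t$ satisfying $t \in S_t$ and $S_t \cap (T\setminus\{t\}) = \emptyset$. I will always include $s$ in $T$, so each returned $S_t$ is in particular an $s$-$t$ cut and, for every $u \in S_t$, the set $S_t$ is an $s$-$u$ cut as well. Consequently, whenever $|\delta(S_t)| \le \phi$, the entire set $S_t$ is contained in $\cut(s,\phi)$, and the algorithm can safely collect it into its output. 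The algorithm's output is the union of all such ``certified'' $S_t$ produced across the invocations.

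For completeness, I would argue via random sampling at geometric scales. For each $v \in \cut(s,\phi)$, fix a minimum $s$-$v$ cut $C_v$ (the side containing $v$), so $|\delta(C_v)| \le \phi$; let $k_v = |C_v|$. The key observation is that if a random terminal set $T\ni s$ contains $v$ and no other vertex of $C_v$, then $C_v$ is a feasible isolating set for $v$, so the isolating cuts procedure returns some $S_v$ with $|\delta(S_v)| \le |\delta(C_v)| \le \phi$ and $v \in S_v$. Thus $v$ is added to the output. The algorithm sweeps over $\Theta(\log n)$ sampling scales $i = 0, 1, \ldots, \lceil \log n \rceil$ and, at each scale, runs $\Theta(\polylog n)$ independent trials; in each trial, every vertex of $V\setminus\{s\}$ is included in $T$ independently with probability $2^{-i}$ and $s$ is always included, and then the isolating cuts subroutine is invoked on $T$. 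At the scale $i$ with $2^i = \Theta(k_v)$, this sampling yields the ``good'' event for $v$ with reasonable probability.

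The main obstacle is the amplification, since a naive calculation of the good event gives only $\Omega(1/k_v)$ per trial, which would require $\poly(n)$ trials for large $k_v$. Overcoming this requires exploiting structural properties of minimum $s$-cuts: for varying $v$, the canonical cuts $C_v$ organize into a laminar family (a Gomory--Hu-style tree rooted at $s$ restricted to cuts of value at most $\phi$). A single successful trial at scale $i$ simultaneously certifies entire subtrees of vertices whose cuts are of size $\Theta(2^i)$, so the required amplification is only $\polylog(n)$, not $\poly(n)$. Making this accounting precise—showing that each $v$ is captured (possibly inside a larger $S_t$ rather than as $S_v$ itself) with high probability after $\polylog(n)$ trials total—is the technical heart of the argument.

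Putting the pieces together, the algorithm performs $\polylog(n)$ invocations of the isolating cuts subroutine, each costing $O(\log n)$ max-flow calls, for a total of $\polylog(n) \cdot F(m,n)$ max-flow-cost work. The additive $\tO(m)$ term covers the non-max-flow bookkeeping: constructing the sampled terminal sets, contracting/uncontracting auxiliary graphs inside the isolating cuts routine, aggregating the collected sets into $\cut(s,\phi)$, and verifying cut values, each of which can be done in $\tO(m)$ time across all invocations.
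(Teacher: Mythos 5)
This theorem is not proved in the paper; it is cited from~\cite{LiP21approximate}, so there is no in-paper argument to compare against, and I am evaluating your sketch on its own. The reduction to isolating cuts and the soundness direction are fine: whenever $s\in T$ and a minimum isolating cut $S_t$ satisfies $\delta(S_t)\le\phi$, every $u\in S_t$ has $\lambda(s,u)\le\delta(S_t)\le\phi$. But the completeness direction --- exactly the part you flag as the technical heart --- has a genuine gap, and the hand-wave that laminarity lets a single trial ``certify entire subtrees'' does not repair it. Concretely, take the path $u_1 - u_2 - \cdots - u_n - s$ with $w(u_j,u_{j+1})=j$, $w(u_n,s)=n$, and $\phi=n$. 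Then $\cut(s,\phi)=\{u_1,\ldots,u_n\}$, and the minimal $s$--$u_j$ cuts are $C_j=\{u_1,\ldots,u_j\}$, a single nested chain. In any one trial, the vertex $u_n$ can be captured only if the sampled terminal set is exactly $\{s,u_n\}$: if any $u_k$ with $k<n$ is also a terminal, the minimum isolating cut for $u_n$ is the middle segment $\{u_{k+1},\ldots,u_n\}$ with boundary $k+n>\phi$ and is discarded, and if the only non-$s$ terminal is some $u_j$ with $j<n$, then its isolating cut is $C_j\not\ni u_n$. The event $T=\{s,u_n\}$ has probability at most about $1/(en)$ at every sampling scale, so $\polylog(n)$ \emph{independent} trials miss $u_n$ with probability $1-o(1)$. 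The non-adaptive geometric-scale scheme you describe therefore cannot achieve the stated $\polylog(n)$ bound on this instance.

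What is missing is adaptivity between rounds. After certifying a set of vertices, one must remove them from the sampling pool (equivalently, restrict subsequent terminal samples to the still-undetermined vertices) so that, in the example above, the ``frontier'' vertex $u_n$ becomes progressively more likely to be the unique sampled terminal inside its cut. With that change, each round shrinks the undetermined portion of $\cut(s,\phi)$ by a constant factor in expectation, giving $O(\log n)$ rounds and hence $\polylog(n)$ isolating-cut invocations overall. Your sketch correctly identifies the framework and the right ``good event'' for each $v$, but the independent-trials amplification argument as stated is incorrect, and the laminar-family observation alone does not rescue it.
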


In order to use this result, we first relate extreme sets to \ct. We need the following definition:
\begin{Definition}
We say that a set of vertices $X$ {\em respects} the extreme sets of $G$ if for any extreme set $Y$ of $G$, one of the following holds: (a) $Y\subseteq X$ or (b) $X\subseteq Y$ or (c) $X\cap Y = \emptyset$. In other words, if there exist two vertices $x_1, x_2\in X$ such that $x_1\in Y$ and $x_2\notin Y$, then it must be that $Y\subset X$. 
\end{Definition}
Our main observation that relates extreme sets to \ct is the following:
\begin{lemma}\label{lem:ct-noncross}
    Let $G = (V, E)$ be an undirected graph. For any vertex $s\in V$ and threshold $\phi\ge 0$, the complement of the \ct  $\cut(s, \phi)$, denoted $X := V\setminus \cut(s, \phi)$, respects the extreme sets of $G$.
\end{lemma}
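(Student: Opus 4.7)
The plan is to prove the contrapositive: assume that $Y$ is an extreme set of $G$ that ``crosses'' $X$ in the sense that $Y \cap X$, $Y \setminus X$, and $X \setminus Y$ are all nonempty, and derive a contradiction.

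The starting observation is a key structural property of $X$: every set $T$ with $s \in T$ and $\delta(T) \le \phi$ must satisfy $X \subseteq T$. This is immediate from the definition of the cut threshold, since any $v \notin T$ is separated from $s$ by the cut $(T, V \setminus T)$ of value at most $\phi$, so $\con(s,v) \le \phi$ and hence $v \in \cut(s, \phi) = V \setminus X$.

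Using this, I would pick an arbitrary $y \in Y \setminus X \subseteq \cut(s, \phi)$ and let $S$ be a minimum $s$-$y$ cut with $s \in S$, $y \notin S$, so $\delta(S) = \con(s,y) \le \phi$. The key property then gives $X \subseteq S$. Consequently $Y \cap S \supseteq Y \cap X$ is nonempty (containing $s$ if $s \in Y$, or any witness in $Y \cap X$ otherwise), while $Y \setminus S$ is a proper nonempty subset of $Y$ since it contains $y$. The core of the argument is posimodularity of the cut function, $\delta(Y) + \delta(S) \ge \delta(Y \setminus S) + \delta(S \setminus Y)$, combined with the extremality inequality $\delta(Y \setminus S) > \delta(Y)$, which together yield $\delta(S \setminus Y) < \delta(S) \le \phi$.

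To close the argument I would split on whether $s \in Y$. If $s \notin Y$, then $s \in S \setminus Y$, and applying the key property to $S \setminus Y$ forces $X \subseteq S \setminus Y$; this is contradicted by any vertex in $Y \cap X$, which lies in $Y$ and hence not in $S \setminus Y$. If instead $s \in Y$, then $s \notin S \setminus Y$, so for every $v \in S \setminus Y$ the cut $(S \setminus Y, V \setminus (S \setminus Y))$ separates $s$ from $v$ with value less than $\phi$, placing all of $S \setminus Y$ inside $\cut(s, \phi)$; combined with $X \setminus Y \subseteq S \setminus Y$ and $X \cap \cut(s, \phi) = \emptyset$, this forces $X \setminus Y = \emptyset$, contradicting the crossing assumption.

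The main subtlety is choosing the correct submodular inequality and the right proper subset of $Y$ on which to invoke extremality. Posimodularity (rather than plain submodularity) applied to the pair $Y, S$, together with the extremality bound on $Y \setminus S$ (rather than $Y \cap S$), is precisely what produces a bound on $\delta(S \setminus Y)$; and the sign of $\delta(S \setminus Y) - \phi$ combined with the location of $s$ relative to $Y$ is what allows us to re-invoke the key property of $X$ to reach the contradiction in either case.
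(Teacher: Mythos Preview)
Your proof is correct and is essentially the paper's argument in complementary form: you work with the $s$-side $S$ of a minimum cut and apply posimodularity to $(Y,S)$, whereas the paper works with the $t$-side $Z = V\setminus S$ and applies submodularity to $(Y,Z)$---the same inequality, since $Y\setminus S = Y\cap Z$ and $\delta(S\setminus Y)=\delta(Y\cup Z)$, and your ``key structural property of $X$'' is the contrapositive of the paper's Lemma~\ref{lem:contain}. The only organizational difference is your closing case split on whether $s\in Y$; the paper sidesteps this by directly observing that $Y\cup Z$ still separates $s$ from some witness $t'\in X$ (since $Z\subseteq\cut(s,\phi)$ contains neither), which gives $\delta(Y\cup Z)>\phi$ and finishes the contradiction in one stroke.
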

Note that $s\in X$ by definition of $\cut(s, \phi)$. The crucial ingredient in the proof of \Cref{lem:ct-noncross} is that minimum $s-t$ cuts for any $t\notin X$ are non-crossing with respect to the cut $\cut(s, \phi)$:
\begin{lemma}\label{lem:contain}
    For any vertex $t\in \cut(s,\phi)$, the side containing $t$ of a minimum $s-t$ cut must be entirely contained in $\cut(s, \phi)$.
\end{lemma}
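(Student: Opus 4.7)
The plan is to prove this directly from the definition of $\cut(s,\phi)$ and the basic monotonicity property of minimum cuts. Let $(S, T)$ be a minimum $s$-$t$ cut, with $s \in S$ and $t \in T$. Since $t \in \cut(s,\phi)$, we have $\delta(T) = \con(s,t) \le \phi$. I would then pick an arbitrary $u \in T$ and observe that the partition $(S, T)$ is itself a valid $s$-$u$ cut, because $s \in S$ while $u \in T$. Consequently $\con(s,u) \le \delta(T) \le \phi$, and since $u \neq s$ (as $u$ and $s$ lie on opposite sides of the cut, so $u \in V \setminus \{s\}$), the definition of a \ct yields $u \in \cut(s,\phi)$.

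Because $u$ was arbitrary, this immediately gives $T \subseteq \cut(s,\phi)$, which is exactly the claim. The argument also shows that the conclusion does not rely on minimality per se: any $s$-$t$ cut of value at most $\phi$ has its $t$-side contained in $\cut(s,\phi)$; the role of minimality is just to guarantee the value bound $\delta(T) = \con(s,t) \le \phi$ coming from $t \in \cut(s,\phi)$.

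There is essentially no obstacle here; the lemma is a one-line consequence of the fact that a single cut simultaneously separates $s$ from every vertex on the other side, together with the definition of the \ct. The only bookkeeping step is verifying $u \neq s$, which is immediate from $u \in T$ and $s \in S$. I would present the proof in a few lines right after the statement, and then invoke it in the proof of \Cref{lem:ct-noncross} to argue that any extreme set meeting both $X = V \setminus \cut(s,\phi)$ and $\cut(s,\phi)$ must be contained in $X$.
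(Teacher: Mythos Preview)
Your proof is correct and is essentially the same argument as the paper's, just phrased directly rather than by contradiction: both observe that the minimum $s$-$t$ cut separates $s$ from every vertex on the $t$-side, so each such vertex has connectivity at most $\con(s,t)\le\phi$ to $s$ and hence lies in $\cut(s,\phi)$.
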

\begin{proof}
    Suppose not; then, there is at least one vertex $t'\notin \cut(s, \phi)$ such that the minimum $s-t$ cut also separates $s$ and $t'$. But, then $\con(s, t') \le \con(s, t) \le \phi$. This contradicts $t'\notin \cut(s, \phi)$.
\end{proof}
Now, we use \Cref{lem:contain} to prove \Cref{lem:ct-noncross}. 
\begin{proof}[Proof of \Cref{lem:ct-noncross}]
An extreme set $Y$ that violates \Cref{lem:ct-noncross} has the following properties: (a) $Y$ separates $s, t'$ for some vertex $t'\notin \cut(s, \phi)$, and (b) $Y$ contains some vertex $t\in \cut(s, \phi)$. Let $Z$ denote the side containing $t$ of a minimum $s-t$ cut. 

Now, since the cut function is submodular, we have:
\begin{equation}\label{eq:submodular}
    \delta(Y) + \delta(Z) \ge \delta(Y\cap Z) + \delta(Y\cup Z).
\end{equation}
But, by \Cref{lem:contain}, we have $Z\subseteq \cut(s, \phi)$. Now, since $Y$ separates $s, t'\notin \cut(s, \phi)$, if follows that $Y\cup Z$ also separates $s, t'$. As a consequence, 
\begin{equation}\label{eq:1}
    \delta(Y\cup Z) \ge \lambda(s, t') > \phi.    
\end{equation}
Finally, since $t\in \cut(s, \phi)$, we have $\lambda(s, t) \le \phi$. Since $Z$ is a minimum $s-t$ cut, it follows that:
\begin{equation}\label{eq:2}
    \delta(Z) \le \phi.
\end{equation}
Combining \Cref{eq:1} and \Cref{eq:2}, we get:
\begin{equation}\label{eq:3}
    \delta(Z) < \delta(Y\cup Z).
\end{equation}
Finally, we note $Y\cap Z$ is a {\em proper} subset of $Y$. This is because $Y$ contains one vertex among $s, t'\notin \cut(s, \phi)$ by virtue of separating them, but $Z$ is entirely contained in $\cut(S, \phi)$ by \Cref{lem:contain}. Now, since $Y$ is an extreme set, we have
\begin{equation}\label{eq:4}
    \delta(Y\cap Z) > \delta(Y).    
\end{equation}
The lemma follows by noting that \Cref{eq:3} and \Cref{eq:4} contradict \Cref{eq:submodular}.
\end{proof}

\subsection{Description of the Algorithm}

We now use \Cref{lem:ct-noncross} to design a divide and conquer algorithm for extreme sets. The algorithm has two phases. In the first phase, we construct a tree $T$ that includes all extreme sets of $G$ as subtrees, but might contain other subtrees that do not correspond to extreme sets. In the second phase, we remove all subtrees of $T$ that are not extreme sets and obtain the final extreme sets tree $\ext{G}$. 


\paragraph{Phase 1:}
The first phase of the algorithm uses a recursive {\em divide and conquer} strategy. A general recursive subproblem is defined on a graph $G^\gen  = (V^\gen, E^\gen)$ that is obtained by contracting some sets of vertices in $G$ that will be defined below. The contracted vertices are denoted $C^\gen$ and the uncontracted vertices $U^\gen\subseteq V$. Thus, $V^\gen = C^\gen \cup U^\gen$. Note that the contracted vertices $C^\gen$ form a partition of the vertices in $V\setminus U^\gen$. The graph $G^\gen$ is obtained from $G$ by contracting each set of vertices that is represented by a single contracted vertex in $C^\gen$, deleting self-loops and unifying parallel edges into a single edge whose weight is the cumulative weight of the parallel edges. The goal of the recursive subproblem on $G^\gen$ is to build a tree $T(G^\gen)$ that contains all extreme sets in $G^\gen$. Initially, $U^\gen = V$ and $C^\gen = \emptyset$, i.e., $G^\gen = G$. Therefore, the overall goal of the algorithm is to find all extreme sets of $G$.

First, we perturb the edge weights of the input graph $G^\gen$ as follows: We independently generate a random value $r(u, v)$ for each edge that is drawn from the uniform distribution defined on $\{1, 2, \ldots, N\}$. (We will set the precise value of $N$ later, but it will be polynomial in the size of the graph $G^\gen$.) We define new edge weights $w'(u, v) := mN\cdot w(u, v) + r(u, v)$ for all edges $(u, v)\in E$. We first show that all extreme sets under the original edge weights $w$ continue to be extreme sets under the new edge weights $w'$:
\begin{lemma}\label{lem:perturb-extreme}
    All extreme sets in $G^\gen$ under edge weights $w$ are also extreme sets under edge weights $w'$.    
\end{lemma}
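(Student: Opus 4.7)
The plan is to observe that any two cuts whose $w$-weights differ are forced to differ by at least $1$ (since $w$ is integer valued), and this integrality gap, after multiplication by $mN$, dominates the total perturbation on any cut. Concretely, for any $S \subseteq V^\gen$ we have
\[
  \delta_{w'}(S) \;=\; mN\cdot \delta_w(S) + \sum_{e\in \delta(S)} r(e).
\]
I will fix an extreme set $X$ of $G^\gen$ under the weights $w$ and show that $\delta_{w'}(Y) > \delta_{w'}(X)$ for every proper non-empty $Y\subset X$, which is exactly the definition of $X$ being extreme under $w'$.

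The main computation is then direct. Since $w$ is integral and $X$ is extreme under $w$, we have $\delta_w(Y) - \delta_w(X) \ge 1$, so the ``scaled'' part of the difference satisfies
\[
  mN\bigl(\delta_w(Y) - \delta_w(X)\bigr) \;\ge\; mN.
\]
For the perturbation part, each $r(e)$ lies in $\{1,2,\dots,N\}$ and every cut in $G^\gen$ contains at most $m$ edges, which gives the crude bound
\[
  \sum_{e\in \delta(Y)} r(e) - \sum_{e\in \delta(X)} r(e) \;\ge\; 1 - mN,
\]
using $\sum_{e\in \delta(Y)} r(e) \ge 1$ (any non-empty cut has at least one edge, each with $r\ge 1$) and $\sum_{e\in \delta(X)} r(e) \le mN$. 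Adding the two contributions yields $\delta_{w'}(Y) - \delta_{w'}(X) \ge 1 > 0$, as desired.

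I expect no significant obstacle: the argument is a standard ``scale then perturb'' calculation, and the multiplier $mN$ was chosen precisely so that integrality wins over the additive noise. The only points to be careful about are (i) that after contraction, the edge weights of $G^\gen$ are still integers (they are sums of integer weights from $G$ minus removed self-loops), so the integrality gap of $1$ genuinely applies; and (ii) that $m$ in the definition of $w'$ refers to the edge count of the current graph $G^\gen$, so the bound $|\delta(X)|\le m$ is valid. Everything else is bookkeeping, and the stronger converse direction (that with high probability the perturbation does not \emph{introduce} new extreme sets) is a separate statement that this lemma does not need.
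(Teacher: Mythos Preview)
Your proof is correct and is essentially the paper's argument: the paper factors the same computation through an intermediate lemma (\Cref{lem:consistency}) stating that $\delta_w(X)<\delta_w(Y)$ implies $\delta_{w'}(X)<\delta_{w'}(Y)$, using exactly your bounds $r(X)\le mN$ and $r(Y)\ge 1$ together with the integrality gap, and then applies it to each proper subset of an extreme set. Your caveats (i) and (ii) are well taken but do not affect the argument, since $|E^\gen|\le m$ after contraction.
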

To show this lemma, we will prove that the (strict) relative order of cut values is preserved by the transformation from $w$ to $w'$. Let $\delta_w(\cdot)$  and $\delta_{w'}(\cdot)$ respectively denote the value of $\delta(\cdot)$ under edge weights $w$ and $w'$. Then, we have the following: 
\begin{lemma}\label{lem:consistency}
    If $\delta_w(X) < \delta_w(Y)$ for two sets of vertices $X, Y\subset V^\gen$, then $\delta_{w'}(X) < \delta_{w'}(Y)$.
\end{lemma}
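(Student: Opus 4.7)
The plan is to decompose $\delta_{w'}(S)$ into a coarse-scale integer part and a small perturbation part, and then show that the scale factor $mN$ dominates any perturbation gap. Concretely, for every $S \subset V^\gen$,
\[
  \delta_{w'}(S) \;=\; mN \cdot \delta_w(S) \;+\; R(S), \qquad R(S) \;:=\; \sum_{e \in \delta(S)} r(e),
\]
so it suffices to compare the two pieces separately for $X$ and $Y$ and combine.

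First I would bound the perturbation term. Since each $r(e) \in \{1, \ldots, N\}$ and $|\delta(S)| \le m$, we have $0 \le R(S) \le mN$ for every $S$, giving $R(Y) - R(X) \ge -mN$. Next I would use the integrality of $w$: the hypothesis $\delta_w(X) < \delta_w(Y)$ together with integer edge weights forces $\delta_w(Y) - \delta_w(X) \ge 1$, so $mN \cdot (\delta_w(Y) - \delta_w(X)) \ge mN$. Adding the two estimates yields
\[
  \delta_{w'}(Y) - \delta_{w'}(X) \;=\; mN\bigl(\delta_w(Y) - \delta_w(X)\bigr) + \bigl(R(Y) - R(X)\bigr) \;\ge\; mN - mN \;=\; 0.
\]

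To turn this into the required \emph{strict} inequality, I would recover one unit of slack from $R$ itself. The hypothesis $\delta_w(Y) > \delta_w(X) \ge 0$ implies $\delta_w(Y) \ge 1$, so the edge set $\delta(Y)$ is nonempty, and since every $r(e) \ge 1$ this upgrades the lower bound to $R(Y) \ge 1$. Combined with $R(X) \le mN$, the estimate sharpens to $\delta_{w'}(Y) - \delta_{w'}(X) \ge mN + 1 - mN = 1 > 0$, which is the claimed inequality. The main (minor) obstacle is precisely this last step: the coarse $|R(Y) - R(X)| \le mN$ only yields a weak inequality, so one must extract the extra $+1$ from integrality of the $r(e)$'s together with the fact that a cut achieving strictly positive $\delta_w$ value is necessarily nonempty. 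Everything else is a direct computation.
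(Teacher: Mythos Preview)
Your proof is correct and follows essentially the same approach as the paper: decompose $\delta_{w'}$ into the scaled integer part $mN\cdot\delta_w$ plus the perturbation $R$, use integrality of $w$ to gain a gap of $mN$, bound $R(X)\le mN$, and then recover strictness from $R(Y)\ge 1$ via the nonemptiness of $\delta(Y)$. The paper presents this as a chain of inequalities rather than as a bound on the difference, but the ingredients are identical.
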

\begin{proof}
    Since all edge weights are integers, $\delta_w(X) < \delta_w(Y)$ implies
    \begin{equation}\label{eq:integer}
        \delta_w(X) \le \delta_w(Y) - 1.     
    \end{equation}
    Let $r(X)$ (resp., $r(Y)$) denote the sum of the random values $r(u, v)$ over all edges $(u, v)$ that have exactly one endpoint in $X$ (resp., $Y$). Then,
    \begin{align*}
    \delta_{w'}(X) 
    = mN\cdot \delta_w(X) + r(X) 
    &\le mN\cdot (\delta_w(Y) - 1) + r(X) &\text{(by \Cref{eq:integer})}&\\
    &\le mN\cdot \delta_w(Y) & (\text{since } r(u, v)\le N, r(X) \le mN)&\\
    &< mN\cdot \delta_w(Y) + r(Y) = \delta_{w'}(Y). & (\text{since } r(u, v)\ge 1, r(Y) \ge 1) 
    \end{align*}
\end{proof}
We now prove \Cref{lem:perturb-extreme} using \Cref{lem:consistency}:
\begin{proof}[Proof of \Cref{lem:perturb-extreme}]
    Suppose $X$ is an extreme set under edge weights $w$. Then, $\delta_w(Y) > \delta_w(X)$ for all non-empty proper subsets $Y\subset X$. By \Cref{lem:consistency}, this implies that $\delta_{w'}(Y) > \delta_{w'}(X)$. Thus, $X$ is an extreme set under edge weights $w'$ as well.
\end{proof}

\Cref{lem:perturb-extreme} implies that we can use edge weights $w'$ instead of $w$ since our goal is to obtain a tree $T(G^\gen)$ that includes as subtrees all the extreme sets in $G^\gen$ under edge weights $w$. 

We are now ready to describe the recursive algorithm.
There are two base cases: if $|V^\gen| \le 32$ or if $U^\gen = \emptyset$, we use the Bencz\'ur-Karger algorithm~\cite{BenczurK00} to find the extreme sets tree and return it as $T(G^\gen)$.
\eat{
\begin{itemize}
    \item If $|V^\gen| \le 32$, then we use any $O(1)$-time algorithm to find the extreme sets tree and return it as $T(G^\gen)$.
    \item If $U^\gen = \emptyset$, we use the Bencz\'ur-Karger algorithm~\cite{BenczurK00} to find the extreme sets tree and return it as $T(G^\gen)$.
\end{itemize}
}

For the recursive case, we have $|V^\gen| > 32$. Let $s, t$ be two distinct vertices sampled uniformly at random from $V^\gen$ (these vertices may either be contracted or uncontracted vertices), and let $\phi := \con(s, t)$ be the connectivity between $s$ and $t$ in $G^\gen$. We invoke \Cref{thm:ct-alg} on $G^\gen$ to find the \ct $\cut(s, \phi)$ on $G^\gen$ and define $X := V^\gen \setminus \cut(s, \phi)$. We repeat this process until we get an $X$ that satisfies:
\begin{equation}\label{eq:X}
    \frac{|V^\gen|}{16}\le |X|\le \frac{15\cdot |V^\gen|}{16}.
\end{equation} 
Once \Cref{eq:X} is satisfied, we create the following two subproblems:
    \begin{itemize}
        \item In the first subproblem, we contract the vertices in $X$ into a single (contracted) vertex to form a new graph $G^\gen_X$. We find the tree $T(G^\gen_X)$ on $G^\gen_X$ by recursion. 
        \item In the second subproblem, we contract the vertices in $V^\gen\setminus X$ into a single (contracted) vertex to form a new graph $G^\gen_{V^\gen\setminus X}$. We find the tree $T(G^\gen_{V^\gen\setminus X})$ on $G^\gen_{V^\gen\setminus X}$ by recursion. 
    \end{itemize}
    We combine the trees $T(G^\gen_X)$ and $T(G^\gen_{V^\gen\setminus X})$ to obtain the overall tree $T(G^\gen)$ as follows: in tree $T(G^\gen_{V^\gen\setminus X})$, we discard the leaf representing the contracted vertex $V^\gen\setminus X$; let $T_X$ denote this new tree whose leaves correspond to the vertices in $X$. Next, note that $X$ is a contracted vertex in $G^\gen_X$ that appears as a leaf in tree $T(G^\gen_X)$. We replace the contracted vertex $X$ in this tree with the tree $T_X$ to obtain our eventual tree $T(G^\gen)$. (This is illustrated in \Cref{fig:combine}.)

\begin{figure}
    \centering
    \foreach \cap/\i/\clr/\txt in {$T(G_X^{gen})$/0/yellow!80/$X$, $T(G_{V^{gen}\setminus X}^{gen})$/0/blue!80/$V^{gen} \setminus X$} {
\begin{subfigure}{.3\textwidth}
\caption{\cap}
\begin{tikzpicture}
\tikzset{
 roundnode/.style={circle, draw, thin, minimum size=4mm},
 triang/.style={isosceles triangle, draw, anchor=south, shape border rotate=90, thin, minimum size=6mm}
}
 \path
 (1.4+\i,2.6) node[roundnode](root) {}
 (.4+\i,1.6) node[roundnode](a) {}
 (2.4+\i,1.6) node[roundnode](b) {}
 (0+\i,0) node[triang,fill=\clr](c) {}
 (.8+\i,0) node[triang,fill=\clr](d) {}
 (1.6+\i,0) node[triang,fill=\clr](e) {}
 (2.4+\i,0) node[triang,fill=\clr](f) {}
 (3.2+\i,.3) node[roundnode,fill=black](g) {} 
 (3.2+\i,-.3) node {\txt};
 \draw (root)--(a)--(c.north);
 \draw (a)--(d.north);
 \draw (root)--(b)--(e.north);
 \draw (f.north)--(b)--(g);

\end{tikzpicture}
\end{subfigure}
}
\begin{subfigure}{.3\textwidth}
\caption{$T(G^{gen})$}
\begin{tikzpicture}
\tikzset{
 roundnode/.style={circle, draw, thin, minimum size=4mm},
 triang/.style={isosceles triangle, draw, anchor=south, shape border rotate=90, thin, minimum size=6mm}
}
\foreach \label/\i/\j/\clr in {0/0/0/yellow!80,1/1.7/-2.3/blue!80} {
 \path
 (1.2+\i,2.6+\j) node[roundnode](root) {}
 (.35+\i,1.6+\j) node[roundnode](a) {}
 (2.1+\i,1.6+\j) node[roundnode](b\label) {}
 (0+\i,\j) node[triang,fill=\clr](c) {}
 (.7+\i,\j) node[triang,fill=\clr](d) {}
 (1.4+\i,\j) node[triang,fill=\clr](e) {}
 (2.1+\i,\j) node[triang,fill=\clr](f) {};
 \draw (root)--(a)--(c.north);
 \draw (a)--(d.north);
 \draw (root)--(b\label)--(e.north);
 \draw (f.north)--(b\label);
}
\draw(b0)--(root);
\end{tikzpicture}
\end{subfigure}
    \caption{This figure illustrates how the trees obtained from recursive calls $T(G^\gen_X)$ and $T(G^\gen_{V^\gen\setminus X})$ are combined in the first phase of the extreme sets algorithm to obtain the tree $T(G^\gen)$. Here, $s\in X$. Yellow leaves are in $V^{gen}\setminus X$, and blue leaves are in $X$.}
    \label{fig:combine}
\end{figure}
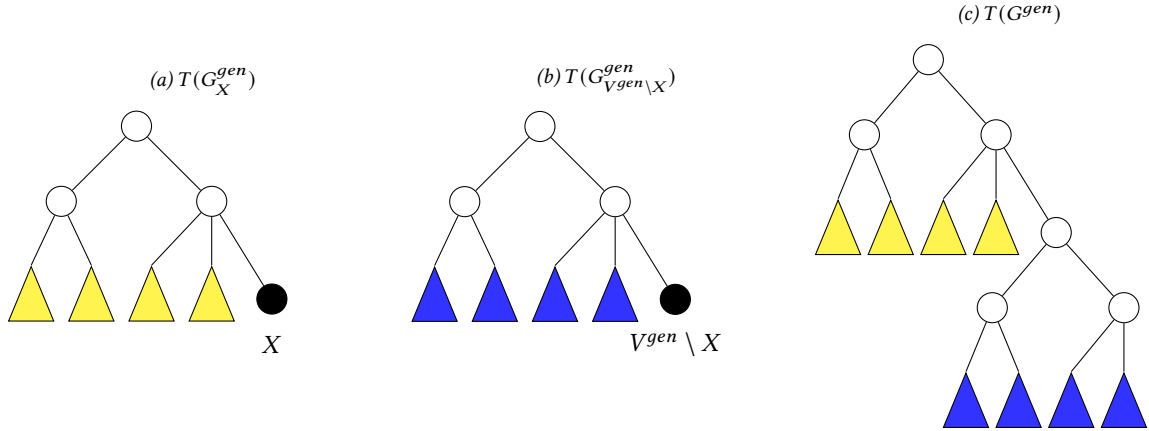

The following is the main claim after the first phase of the algorithm, where $T = T(G)$:
\begin{lemma}\label{lem:first}
    Every extreme set of the input graph $G$ is a subtree of tree $T$ returned by the first phase of the extreme sets algorithm.
\end{lemma}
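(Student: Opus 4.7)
The plan is to prove by induction on the depth of the recursion the following stronger statement: for every recursive subproblem $G^\gen$, every extreme set of $G^\gen$ (measured under the edge weights $w$ inherited from $G$) appears as a subtree of the returned tree $T(G^\gen)$. Setting $G^\gen = G$ at the top level then yields the lemma. The base cases are immediate since the Bencz\'ur–Karger algorithm correctly recovers the entire extreme sets tree.

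For the inductive step, fix an extreme set $Y$ of $G^\gen$. I first use \Cref{lem:perturb-extreme} to pass from the original weights $w$ to the perturbed weights $w'$: $Y$ is also an extreme set under $w'$, which is the weighting with respect to which the cut threshold $\cut(s,\phi)$ and hence $X := V^\gen \setminus \cut(s,\phi)$ is computed. Then \Cref{lem:ct-noncross} (applied in $G^\gen$ with weights $w'$) gives the trichotomy $Y \subseteq V^\gen \setminus X$, or $Y \subseteq X$, or $X \subseteq Y$. I would handle these three cases separately, in each case producing a corresponding extreme set in one of the recursive subproblems and then checking that the combine step places it as a subtree of $T(G^\gen)$.

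The two "disjoint" cases are straightforward: if $Y \cap X = \emptyset$, then contracting $X$ does not touch $Y$ or any of its subsets, so $\delta_{G^\gen}(Y') = \delta_{G^\gen_X}(Y')$ for every non-empty $Y' \subseteq Y$; hence $Y$ remains an extreme set of $G^\gen_X$ (with respect to the inherited weights $w$), and by induction it sits as a subtree of $T(G^\gen_X)$, which is preserved verbatim by the combine operation. The case $Y \subseteq X$ is symmetric, using $G^\gen_{V^\gen\setminus X}$ and noting that the subtree rooted at $Y$ lives entirely inside $T_X$, the portion of $T(G^\gen_{V^\gen\setminus X})$ that gets spliced in for the leaf $X$ in $T(G^\gen_X)$.

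The main obstacle is the third case $X \subseteq Y$, which needs a short calculation: I claim that in $G^\gen_X$, the set $\widehat Y := \{[X]\} \cup (Y\setminus X)$ (where $[X]$ is the contracted vertex) is itself an extreme set. Any non-empty proper subset of $\widehat Y$ either (i) does not contain $[X]$, in which case it corresponds to a non-empty proper subset $Y'$ of $Y\setminus X \subsetneq Y$ with $\delta_{G^\gen_X}(Y') = \delta_{G^\gen}(Y')$, or (ii) contains $[X]$, in which case it corresponds to a set of the form $X \cup Y'$ with $Y' \subsetneq Y\setminus X$, and contraction again preserves the cut value. Since $\delta_{G^\gen_X}(\widehat Y) = \delta_{G^\gen}(Y)$ and $Y$ was extreme in $G^\gen$, both possibilities give strict inequality, so $\widehat Y$ is extreme in $G^\gen_X$. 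By induction it appears as a subtree of $T(G^\gen_X)$ with $[X]$ as one of its leaves; the combine step replaces that leaf by $T_X$, whose leaves are exactly $X$, yielding a subtree of $T(G^\gen)$ whose leaves are precisely $(Y\setminus X) \cup X = Y$. This completes the induction.
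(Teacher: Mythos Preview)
Your proposal is correct and follows essentially the same approach as the paper: the paper packages the three-case argument (that every extreme set of $G^\gen$ under $w$ remains an extreme set in one of the two contracted subproblems) into a separate lemma (\Cref{lem:contract}), and then proves \Cref{lem:dandc} by induction plus the observation that the only subtrees lost in the combine step correspond to sets containing all of $V^\gen\setminus X$ and a proper subset of $X$, which cannot be extreme by \Cref{lem:ct-noncross}. Your inline treatment of the $X\subseteq Y$ case via the explicit set $\widehat Y$ is exactly the paper's type-(b) argument written out in more detail, and your direct verification that each case survives the combine step is equivalent to the paper's contrapositive formulation.
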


\paragraph{Phase 2:}
The second phase retains only the subtrees of $T$ that are extreme sets in $G$ and eventually returns $\ext{G}$. In this phase, we do a postorder traversal of $T$. For any vertex $y\in T$, let $V(y)$ denote the set of leaves in the subtree under $y$. During the postorder traversal, we label each vertex $y$ in $T$ with the value of $\delta(V(y))$ in $G$ under the original edge weights $w$. (We will describe the data structures necessary for this labeling when we analyze the running time of the algorithm.) If the label for $y$ is strictly smaller than the labels of all its children nodes, then $V(y)$ is an extreme set and we keep $y$ in $T$. Otherwise, we remove node $y$ from $T$ and make its parent node the new parent of all of its children nodes.

At the end of the second phase of the algorithm, we claim the following:
\begin{lemma}\label{lem:second}
    Every extreme set of the input graph $G$ is a (proper) subtree of tree $T$ returned by the second phase of the extreme sets algorithm, and vice-versa.
\end{lemma}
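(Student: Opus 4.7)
The plan is to prove the biconditional by structural induction on nodes of $T$ (the tree produced by phase 1), in the postorder used by phase 2. The induction hypothesis at a node $y$ is that every strict descendant $z$ of $y$ survives phase 2 if and only if $V(z)$ is an extreme set of $G$. The base case is when $y$ is a leaf of $T$: singletons are extreme by default and have no children, so they survive vacuously.

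For the forward direction (``$V(y)$ extreme implies $y$ survives'') of the inductive step, let $V(y)=X$ be extreme. When phase 2 visits $y$, its current children are the maximal surviving proper descendants of $y$; since all singletons survive, the sets $X_1,\ldots,X_k$ they induce partition $X$ with each $X_i \subsetneq X$. The extreme-set property then gives $\delta_w(X_i) > \delta_w(X)$ for every $i$, so the survival test succeeds and $y$ is kept.

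For the reverse direction (``$y$ survives implies $V(y)$ extreme''), suppose for contradiction that $y$ survives but $X = V(y)$ is not extreme. Pick a nonempty $Y \subsetneq X$ minimal (under inclusion) with $\delta_w(Y) \le \delta_w(X)$; minimality forces $Y$ itself to be extreme. By \Cref{lem:first}, $Y$ corresponds to some strict descendant $y^\ast$ of $y$ in $T$, and by the inductive hypothesis $y^\ast$ survives. Let $p^\ast$ be the closest surviving ancestor of $y^\ast$ in the original $T$ that lies on the path to $y$ (inclusive). If $p^\ast = y$, then $y^\ast$ is a current child of $y$ with $\delta_w(V(y^\ast)) = \delta_w(Y) \le \delta_w(X)$, directly violating $y$'s survival test. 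Otherwise $p^\ast$ is a strict descendant of $y$ that is a strict ancestor of $y^\ast$; by the inductive hypothesis $V(p^\ast)$ is extreme with $Y \subseteq V(p^\ast) \subsetneq X$, and applying the extreme-set inequality across $V(p^\ast) \subsetneq X$ (and also across $Y \subsetneq V(p^\ast)$ when the inclusion is strict) yields $\delta_w(Y) > \delta_w(X)$, a contradiction.

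The main obstacle I foresee is this last subcase: ruling out that $y^\ast$ hides beneath some intermediate surviving ancestor $p^\ast$ strictly below $y$. It is resolved by the inductive hypothesis applied to $p^\ast$---any such $p^\ast$ must itself correspond to an extreme set, so chaining strict extreme-set inequalities across $Y \subseteq V(p^\ast) \subsetneq X$ delivers $\delta_w(Y) > \delta_w(X)$ and contradicts the choice of $Y$. The minor bookkeeping of how removed nodes re-parent their children is handled automatically by postorder: when $y$ is visited, every descendant has already been processed, so each current child of $y$ is simply the closest surviving descendant of some original child.
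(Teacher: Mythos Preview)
Your forward direction and the overall inductive setup match the paper's approach, but there is a genuine gap in Case~2 of your reverse direction. You invoke ``the extreme-set inequality across $V(p^\ast)\subsetneq X$'' to get $\delta_w(V(p^\ast))>\delta_w(X)$, but that inequality would require $X$ itself to be extreme---and the whole hypothesis of this case is that $X=V(y)$ is \emph{not} extreme. The only inequality you legitimately obtain is $\delta_w(Y)>\delta_w(V(p^\ast))$ from $V(p^\ast)$ being extreme, and that alone does not contradict $\delta_w(Y)\le\delta_w(X)$.

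The problem is that your $p^\ast$ (the closest surviving proper ancestor of $y^\ast$, i.e., $y^\ast$'s current parent) is the wrong node: it need not be a current child of $y$. What you want is the node $x$ on the $y$--$y^\ast$ path that \emph{is} a current child of $y$ (equivalently, the highest surviving proper descendant of $y$ on that path). Then two facts become available simultaneously: (i) by the inductive hypothesis $V(x)$ is extreme, and (ii) since $y$ passed its survival test, $\delta_w(V(x))>\delta_w(X)$. Now either $x=y^\ast$ and $\delta_w(Y)=\delta_w(V(x))>\delta_w(X)$, or $Y\subsetneq V(x)$ and extremeness of $V(x)$ gives $\delta_w(Y)>\delta_w(V(x))>\delta_w(X)$; either way you get the contradiction. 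This is precisely how the paper argues (its $x$ is ``the child of $y$ that is also an ancestor of $z$''), using the observation that in the already-processed subtree below $y$ cut values strictly increase as one goes down.
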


\subsection{Correctness of the Algorithm}

We now establish the correctness of the algorithm by proving \Cref{lem:first} and \Cref{lem:second} that respectively establish correctness for the first and second phases of the algorithm. 

In order to prove \Cref{lem:first}, we show that the following more general property holds for any recursive step of the algorithm:
\begin{lemma}\label{lem:dandc}
    Let $G^\gen$ be the input graph in a recursive step of the algorithm. Then, every extreme set of $G^\gen$ under edge weights $w$ is a subtree of tree $T(G^\gen)$ returned by the recursive algorithm.
\end{lemma}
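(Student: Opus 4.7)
The plan is to prove \Cref{lem:dandc} by strong induction on $|V^\gen|$. The base cases are $|V^\gen| \le 32$ and $U^\gen = \emptyset$, where the algorithm invokes Bencz\'ur--Karger directly; their correctness guarantees that $T(G^\gen)$ is literally the extreme sets tree of $G^\gen$, so every extreme set appears as a subtree. For the inductive step with $|V^\gen| > 32$, fix an extreme set $Y$ of $G^\gen$ under weights $w$. First I would apply \Cref{lem:perturb-extreme} to promote $Y$ to an extreme set under the perturbed weights $w'$ used inside this recursive call, and then apply \Cref{lem:ct-noncross} to the set $X := V^\gen \setminus \cut(s,\phi)$ (computed under $w'$). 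This yields exactly three cases: (a) $Y \subseteq X$, (b) $X \subseteq Y$, or (c) $Y \cap X = \emptyset$.

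For each case I would identify the subproblem in which $Y$ naturally ``survives,'' invoke the inductive hypothesis there, and then trace the subtree through the combination step. In case (a), $Y \subseteq X$ consists entirely of uncontracted vertices in $G^\gen_{V^\gen\setminus X}$; the contraction of $V^\gen\setminus X$ preserves $\delta(Y')$ for every $Y' \subseteq Y \subseteq X$, so $Y$ is extreme in $G^\gen_{V^\gen\setminus X}$, and the IH gives a subtree of $T(G^\gen_{V^\gen\setminus X})$ with leaves $Y$. This subtree avoids the leaf $V^\gen\setminus X$, hence survives the deletion that forms $T_X$, and survives the substitution of $T_X$ into the leaf $X$ of $T(G^\gen_X)$. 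Case (c) is symmetric but uses $G^\gen_X$ instead: $Y$ lies entirely in the uncontracted part of $G^\gen_X$, is extreme there by the same cut-preservation identity, and its subtree in $T(G^\gen_X)$ is disjoint from the leaf $X$ and thus untouched by the substitution.

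Case (b), $X \subseteq Y$, is the subtle one. I would let $x_X$ denote the contracted vertex representing $X$ in $G^\gen_X$ and work with $\tilde Y := \{x_X\} \cup (Y \setminus X)$, noting $\delta_{G^\gen_X}(\tilde Y) = \delta_{G^\gen}(Y)$. To show $\tilde Y$ is extreme in $G^\gen_X$, I would consider any proper non-empty $\tilde Y' \subset \tilde Y$ and split on whether $x_X \in \tilde Y'$: if yes, un-contract $x_X$ to obtain $Y' := X \cup (\tilde Y' \setminus \{x_X\})$, a proper subset of $Y$ with $\delta_{G^\gen_X}(\tilde Y') = \delta_{G^\gen}(Y') > \delta_{G^\gen}(Y) = \delta_{G^\gen_X}(\tilde Y)$; if no, then $\tilde Y' \subseteq Y\setminus X$ is a proper subset of $Y$ (proper because $s \in X \subseteq Y$ and $X \neq \emptyset$), and the same inequality follows from $\delta_{G^\gen_X}(\tilde Y') = \delta_{G^\gen}(\tilde Y')$. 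The IH then yields a subtree of $T(G^\gen_X)$ with leaves $\tilde Y$; after substituting $T_X$ for the leaf $x_X$, the resulting subtree has leaves exactly $X \cup (Y \setminus X) = Y$.

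The main obstacle I expect is case (b): both the subcase analysis for verifying extremeness of $\tilde Y$ and the bookkeeping showing that the substitution produces a subtree whose leaves are precisely $Y$ require care. A secondary subtlety is that \Cref{lem:perturb-extreme} is only a one-way implication, but that is exactly enough: we only need $Y$'s extremeness to transfer from $w$ to $w'$ so that we may apply \Cref{lem:ct-noncross}, after which the three-way containment is a purely combinatorial conclusion that feeds cleanly into the recursion on the child graphs (which carry weights $w$ inherited by contraction).
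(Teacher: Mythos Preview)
Your proof is correct and follows essentially the same approach as the paper. The paper packages your three-case analysis into a separate lemma (\Cref{lem:contract}) and then, rather than tracing each case positively through the combination step, argues negatively that the only subtrees discarded when merging $T(G^\gen_X)$ and $T(G^\gen_{V^\gen\setminus X})$ are those containing all of $V^\gen\setminus X$ together with a proper subset of $X$, which by \Cref{lem:ct-noncross} cannot be extreme; the content is the same as your case-by-case tracking, and your treatment of case~(b) is in fact more explicit than the paper's.
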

Note that \Cref{lem:first} follows from \Cref{lem:dandc} when the latter is applied to the first step of the algorithm, i.e., $G^\gen = G$.

Recall that $X = V\setminus \cut(s, \phi)$, where $s$ is a randomly chosen vertex and $\phi = \con(s, t)$ for a randomly chosen vertex $t\in V\setminus \{s\}$. The two recursive subproblems are on graphs $G^\gen_X$ and $G^\gen_{V^\gen\setminus X}$. To prove \Cref{lem:dandc}, we first relate the extreme sets in $G^\gen_X$ and $G^\gen_{V^\gen\setminus X}$ to the extreme sets in $G^\gen$. We show the following general property that holds for any graph $G^\gen = (V^\gen, E^\gen)$, vertex $s\in V^\gen$, and threshold $\phi\ge 0$:
\begin{lemma}\label{lem:contract}
    Let $G^\gen = (V^\gen, E^\gen)$ be an undirected graph, and for any vertex $s\in V^\gen$ and threshold $\phi\ge 0$, let $X := V^\gen\setminus \cut(s, \phi)$ for \ct $\cut(s, \phi)$ in $G^\gen$ under edge weights $w'$. Let $G^\gen_X$ and $G^\gen_{V^\gen\setminus X}$ be graphs obtained from $G^\gen$ by contracting $X$ and $V^\gen\setminus X$ respectively. Then, every extreme set in $G^\gen$ under edge weights $w$ is an extreme set in either $G^\gen_X$ or $G^\gen_{V^\gen\setminus X}$ under edge weights $w$.
\end{lemma}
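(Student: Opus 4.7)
The plan is to combine two facts established earlier in this section: Lemma \ref{lem:perturb-extreme}, which says every $w$-extreme set of $G^\gen$ is also $w'$-extreme, and Lemma \ref{lem:ct-noncross}, applied to $G^\gen$ under weights $w'$, which implies that $X = V^\gen \setminus \cut(s,\phi)$ respects the $w'$-extreme sets of $G^\gen$. Chaining these, for any $w$-extreme set $Y$ of $G^\gen$, exactly one of $Y \cap X = \emptyset$, $Y \subseteq X$, or $X \subseteq Y$ must hold. I would then case split and show that the natural image of $Y$ in either $G^\gen_X$ or $G^\gen_{V^\gen \setminus X}$ is extreme under the original weights $w$.

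The supporting observation I would record up front is that contraction of a set $A$ preserves $\delta_w(Z)$ for every $Z$ that is either disjoint from $A$ or contains $A$: the edges leaving $Z$ either go to the uncontracted part (unchanged) or into $A$ (merged into a single edge to the super-vertex carrying the same total weight). When $Y \cap X = \emptyset$, both $Y$ and every proper nonempty subset of $Y$ lie inside $V^\gen \setminus X$, so their $\delta_w$-values are unchanged in $G^\gen_X$ and extremeness transfers directly. The symmetric argument handles $Y \subseteq X$ by contracting $V^\gen \setminus X$ and working in $G^\gen_{V^\gen \setminus X}$.

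The case needing real work is $X \subseteq Y$. Here I would show that the image $\hat Y := (Y \setminus X) \cup \{x\}$ in $G^\gen_X$, where $x$ denotes the super-vertex obtained by contracting $X$, is extreme. Any proper nonempty subset $A \subsetneq \hat Y$ lifts to $B = A \subsetneq Y \setminus X$ if $x \notin A$, and to $B = X \cup (A \setminus \{x\}) \subsetneq Y$ if $x \in A$; in both sub-cases $B$ is a nonempty proper subset of $Y$, so extremeness of $Y$ under $w$ yields $\delta_w(B) > \delta_w(Y)$. By the preservation observation, $\delta_w(B)$ equals $\delta_w(A)$ computed in $G^\gen_X$ and $\delta_w(Y)$ equals $\delta_w(\hat Y)$ in $G^\gen_X$, so the strict inequality transfers as required. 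The degenerate sub-case $X = Y$ gives $\hat Y = \{x\}$, which is extreme by the default convention for singletons.

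The main obstacle I anticipate is the bookkeeping in the $X \subseteq Y$ case: subsets of $\hat Y$ in the contracted graph correspond to a restricted family of subsets of $Y$ in $G^\gen$, and I need to carefully verify in each sub-case that the lifted set $B$ is both nonempty and a proper subset of $Y$ (so that extremeness of $Y$ is applicable), and that contraction preserves the relevant $\delta$-value on both sides of the inequality. Once this is done, the three-way case analysis covers every possibility delivered by the non-crossing structure and completes the lemma.
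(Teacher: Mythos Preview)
Your proposal is correct and follows essentially the same approach as the paper: invoke \Cref{lem:perturb-extreme} so that $w$-extreme sets are $w'$-extreme, apply \Cref{lem:ct-noncross} under $w'$ to get the trichotomy $Y\subseteq X$, $X\subseteq Y$, or $Y\cap X=\emptyset$, and then check in each case that contraction preserves the relevant $\delta_w$-values so extremeness transfers to the appropriate contracted graph. Your treatment of the $X\subseteq Y$ case is more explicit than the paper's (which simply notes that proper subsets of $Y$ in $G^\gen_X$ lift to proper subsets of $Y$ in $G^\gen$ with the same cut value), but the argument is the same.
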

\begin{proof}
    First, note that by \Cref{lem:consistency}, every extreme set in $G^\gen$ under edge weights $w$ is also an extreme set under edge weights $w'$. Therefore,
    by applying \Cref{lem:ct-noncross} on $G^\gen$ with edge weights $w'$, we can claim that the extreme sets $Y\subset V^\gen$ under edge weights $w$ are of one of the following types: (a) $Y\subseteq X$ or (b) $X\subseteq Y$ or (c) $X\cap Y = \emptyset$. Extreme sets $Y$ of type (a) are also extreme sets in $G^\gen_{V^\gen\setminus X}$ since the value of $\delta(Y)$ and that of $\delta(Z)$ for any $Z\subset Y$ are identical between $G^\gen$ and $G^\gen_{V^\gen\setminus X}$. Similarly, extreme sets $Y$ of type (c) are also extreme sets in $G^\gen_X$ since the value of $\delta(Y)$ and that of $\delta(Z)$ for any $Z\subset Y$ are identical between $G^\gen$ and $G^\gen_X$. For extreme sets $Y$ of type (b), note that every proper subset of $Y$ in $G^\gen_X$ is also a proper subset of $Y$ in $G^\gen$, and has the same cut value. Then, if $\delta(Z_{G^\gen}) > \delta(Y)$ for all proper subsets $Z_{G^\gen}\subset Y$ in $G^\gen$, then it must be that $\delta(Z_{G^\gen_X}) > \delta(Y)$ for all proper subsets $Z_{G^\gen_X}\subset Y$ in $G^\gen_X$. Therefore, an extreme set of type (b) in $G^\gen$ is also an extreme set in $G^\gen_X$. (Note that because of this last case, it is possible that there are extreme sets in $G^\gen_X$ that are not extreme sets in $G^\gen$.)
\end{proof}
This now allows us to prove \Cref{lem:dandc}:
\begin{proof}[Proof of \Cref{lem:dandc}] 
%
    First, note that the correctness of the base case follows from the correctness of the Bencz\'ur-Karger algorithm~\cite{BenczurK00}.
    %
    Thus, we consider the inductive case. Inductively, we assume that $T(G^\gen_X)$ and $T(G^\gen_{V^\gen\setminus X})$ contain as subtrees all extreme sets of $G^\gen_X$ and $G^\gen_{V^\gen\setminus X}$ under edge weights $w$. Therefore, by \Cref{lem:contract}, every extreme set in $G^\gen$ under edge weights $w$ is a subtree of either $T(G^\gen_X)$ or $T(G^\gen_{V^\gen\setminus X})$. Now, note that any subtree $Y$ eliminated by the algorithm that combines $T(G^\gen_X)$ and $T(G^\gen_{V^\gen\setminus X})$ into $T(G^\gen)$ has the property that $Y$ contains the entire set $V^\gen\setminus X$ and a proper subset of $X$. But, by \Cref{lem:ct-noncross}, such a set $Y$ cannot be an extreme set in $G^\gen$. Therefore, all the extreme sets in $G^\gen$ under edge weights $w$ are subtrees in $T(G^\gen)$.
\end{proof}

\smallskip
Next, we establish correctness of the second phase of the algorithm, i.e., prove \Cref{lem:second}. We will need the following property of extreme sets:
\begin{lemma}\label{lem:subset}
    Let $G = (V, E)$ be an undirected graph, and let $Y \subset V$ be a set of vertices that is {\em not} an extreme set. Then, there exists a set $Z\subset Y$ such that $Z$ is an extreme set and $\delta(Z) \le \delta(Y)$.
\end{lemma}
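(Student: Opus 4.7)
The plan is to exhibit $Z$ as a minimum-cardinality subset of $Y$ whose cut value is at most $\delta(Y)$, and then argue that minimality forces $Z$ to be both a proper subset of $Y$ and an extreme set.

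First, I would form the family $\calS := \{Z' : \emptyset \ne Z' \subseteq Y \text{ and } \delta(Z') \le \delta(Y)\}$ and observe that $\calS$ is non-empty. Indeed, since $Y$ is not an extreme set, by definition there is a non-empty proper subset $Y' \subset Y$ with $\delta(Y') \le \delta(Y)$, so $Y' \in \calS$. Let $Z^\star \in \calS$ be an element of minimum cardinality.

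Next I would verify that $Z^\star$ is a proper subset of $Y$. Since $|Z^\star| \le |Y'| < |Y|$, we have $Z^\star \subset Y$. Finally, I would show $Z^\star$ is an extreme set. Suppose for contradiction that it is not; then by definition there is a non-empty proper subset $W \subset Z^\star$ with $\delta(W) \le \delta(Z^\star) \le \delta(Y)$. Thus $W \in \calS$ and $|W| < |Z^\star|$, contradicting the minimality of $Z^\star$. Setting $Z := Z^\star$ gives the lemma.

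The main subtlety is choosing the correct extremal object: one must extract a \emph{minimum-cardinality} witness rather than, say, a subset-minimal one or one with smallest cut value, because the descent ``not extreme $\Rightarrow$ strictly smaller-cardinality witness'' is precisely what powers the contradiction. Beyond this choice, the argument is a direct unwinding of the definition of an extreme set, so I do not anticipate a genuine obstacle.
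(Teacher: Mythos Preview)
Your proof is correct and follows essentially the same minimality strategy as the paper: the paper first sets $\xi=\min\{\delta(W):W\subset Y\}$ and then takes $Z$ of minimum cardinality among sets achieving $\xi$, whereas you directly take $Z$ of minimum cardinality among nonempty subsets with $\delta\le\delta(Y)$; both choices yield an extreme set via the same one-step descent contradiction. (One small aside: contrary to your closing remark, a subset-minimal element of $\calS$ would also work, since $Y$ itself is not subset-minimal in $\calS$ and the descent argument goes through verbatim.)
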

\begin{proof}
    Let $\xi$ be the minimum cut value among all proper subset of $Y$, i.e., 
    $\xi := \min\{\delta(W): W\subset Y\}$. 
    Since $Y$ is not an extreme set, it must be that $\xi \le \delta(Y)$. Now, consider the smallest set $Z\subset Y$ such that $\delta(Z) = \xi$, i.e., $Z := \arg\min\{|W|: W\subset Z, \delta(W) = \xi\}$. Now, for any non-empty proper subset $R\subset Z$, we have: (a) $\delta(R) \ge \xi$ by definition of $\xi$, and (b) $\delta(R) \not= \xi$ by definition of $Z$. Therefore, $\delta(R) > \xi$ for all non-empty proper subsets $R\subset Z$. Hence, $Z$ is an extreme set.
\end{proof}
We are now ready to prove \Cref{lem:second}:
\begin{proof}[Proof of \Cref{lem:second}]
    Recall that for any node $y$ in $T$, $V(y)\subseteq V$ denotes the set of leaves in the subtree under $y$. Now, if $y$ is removed by the algorithm in the second phase from $T$, it must be that there is a child $z$ of $y$ such that $\delta(V(z)) \le \delta(V(y))$. Since each node in $T$ has at least two children, it must be that $V(z)$ is a proper subset of $V(y)$, and hence $V(y)$ is not an extreme set. This implies that the second phase of the algorithm does not remove any extreme set from being a subtree of $T$. 
    
    It remains to show that this phase {\em does} remove all subtrees that are not extreme sets. Suppose $y$ is a node in $T$ after the first phase of the algorithm such that $V(y)$ is not an extreme set in $G$. Consider the stage when the postorder traversal of $T$ in the second phase reaches $y$. We need to argue that there is a child $x$ of $y$ such that $\delta(V(x))\le \delta(V(y))$. Inductively, we assume that at this stage, the subtree under $y$ exactly represents the extreme sets that are proper subsets of $V(y)$. Then, by \Cref{lem:subset}, there is a descendant $z$ of $y$ such that $\delta(V(z))\le \delta(V(y))$. But, note that in any extreme sets tree, the cut value of a parent subtree is strictly smaller than that of a child subtree, since the child subtree represents a proper subset of the parent subtree. Thus, if $x$ is the child of $y$ that is also an ancestor of $z$, then $\delta(V(x)) \le \delta(V(z)) \le \delta(V(y))$. Since $\delta(V(x))\le \delta(V(y))$ and $x$ is a child of $y$, the node $y$ will be discarded from $T$ when the postorder traversal reaches $y$.
\end{proof}

This concludes the proof of correctness of the extreme sets algorithm.

\subsection{Running Time Analysis of the Algorithm}

We analyze the running times of the first and second phases of the algorithm separately. It follows from \Cref{thm:ct-alg} that the running time $T_1(m, n)$ of the first phase can be written as:
\begin{equation}\label{eq:recurse}
    T(m, n) = T(m_1, n_1) + T(m_2, n_2) + \tO(m) + \polylog(n)\cdot F(m, n),
\end{equation}    
where $n_1 + n_2 = n+2$ and $m_1 + m_2 = m + d(X)$, where $d(X)$ is the number of edges that have exactly one endpoint in $X$. Note that all other steps, i.e., generating edge weights $w'$, creating the graphs $G^\gen_X$ and $G^\gen_{V^\gen\setminus X}$, and recombining the trees $T(G^\gen_X)$ and $T(G^\gen_{V^\gen\setminus X})$ to obtain the overall tree $T(G^\gen)$, can be done in $O(m)$ time. Thus, the running time is dominated by the time taken in the \ct algorithm in \Cref{thm:ct-alg}.

First, we bound the depth of the recursion tree:
\begin{lemma}\label{lem:depth}
    The depth of the recursion tree in the first phase of the extreme sets algorithm is $O(\log n)$.
\end{lemma}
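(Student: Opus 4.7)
The plan is to observe that the balance condition enforced by the repeat-until loop (\Cref{eq:X}) already guarantees a constant-factor shrinkage in problem size at every recursive call, so the depth bound follows by a one-line geometric argument. In particular, the depth bound is really a deterministic consequence of the invariant \Cref{eq:X}; no probabilistic analysis per se is needed for the depth itself.

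Concretely, I would fix a recursive call on a graph $G^\gen$ with $|V^\gen| = n' > 32$. By \Cref{eq:X} the chosen set $X$ satisfies $n'/16 \le |X| \le 15n'/16$, and the two recursive subproblems are on $G^\gen_X$ and $G^\gen_{V^\gen \setminus X}$, whose vertex counts are $n' - |X| + 1$ and $|X| + 1$ respectively (the $+1$ in each case is the single super-vertex produced by contraction). Both of these are bounded by $\tfrac{15}{16}n' + 1$. Since the recursive case only fires when $n' \ge 33$, one checks the arithmetic inequality $\tfrac{15}{16}n' + 1 \le \tfrac{31}{32}n'$ (equivalently, $n' \ge 32$), so each child subproblem has at most $\tfrac{31}{32}n'$ vertices. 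Iterating, after $k$ levels the subproblem size is at most $(\tfrac{31}{32})^k n$, and the recursion reaches the base-case threshold of $32$ once $(\tfrac{31}{32})^k n \le 32$, giving $k = O(\log n)$.

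The one mild technical point to watch is the additive $+1$ created by contracting one side of $X$: the per-level shrinkage is not exactly $15/16$, and we need the base-case threshold to be large enough (here $32$) so that the $+1$ can be absorbed into the multiplicative slack between $15/16$ and $31/32$. This is precisely why the base case is stated as $|V^\gen| \le 32$ rather than, say, $|V^\gen| \le 2$. The genuinely nontrivial ingredient — that the repeat-until loop actually finds a valid $X$ in expected $O(1)$ tries, so that the recursion tree is well defined with high probability — is orthogonal to the depth claim and is the concern of the running-time analysis that follows, where an Abboud--Krauthgamer--Trabelsi-style Gomory--Hu-tree argument ensures that a random pair $(s,t)$ induces a balanced \ct with constant probability.
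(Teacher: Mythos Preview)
Your proof is correct and essentially identical to the paper's own argument: the paper also observes that \Cref{eq:X} gives $\max\{|X|,|V^\gen\setminus X|\}\le \tfrac{15}{16}|V^\gen|$, then absorbs the $+1$ from contraction via the same inequality $\tfrac{15}{16}|V^\gen|+1<\tfrac{31}{32}|V^\gen|$ (valid since $|V^\gen|>32$) to conclude the geometric decay. Your additional commentary on why the base-case threshold is $32$ and why the probabilistic analysis is orthogonal is accurate and a nice elaboration beyond what the paper states.
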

\begin{proof}
   Note that \Cref{eq:X} ensures that in every recursive step, we have:
    \[\max\{|X|, |V^\gen\setminus X|\}\le \frac{15\cdot |V^\gen|}{16}.\]
    Therefore, in each recursive subproblem, the number of vertices is $\le \frac{15\cdot |V^\gen|}{16} + 1 < \frac{31\cdot |V^\gen|}{32}$ since $|V^\gen| > 32$. The lemma follows.
\end{proof}

\Cref{lem:depth} is sufficient to bound the total cost of the base cases of the algorithm:
\begin{lemma}\label{lem:bk}
    The total running time of the invocations of the Bencz\'ur-Karger algorithm for the base cases is $\tO(n)$.
\end{lemma}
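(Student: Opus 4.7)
The plan is to combine three observations: (i) Bencz\'ur-Karger runs in $\tilde O((n')^2)$ time on a graph with $n'$ vertices (any edges), so it suffices to bound $\sum_{\text{base}} |V^\gen|^2$; (ii) the maximum size of any base case is $O(\log n)$; and (iii) the sum of sizes across all base cases is $O(n)$. Together these give
\[
\sum_{\text{base}} \tilde O(|V^\gen|^2) \le \tilde O\bigl(\max_{\text{base}} |V^\gen|\bigr) \cdot \sum_{\text{base}} |V^\gen| \le \tilde O(\log n) \cdot O(n) = \tilde O(n).
\]

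For observation (ii), type-1 base cases trivially have $|V^\gen| \le 32$. For type-2 base cases we have $U^\gen = \emptyset$, hence $|V^\gen| = |C^\gen|$. I would prove by induction on recursion depth $d$ that $|C^\gen| \le d$: each split replaces the contracted set by a single new vertex, so $|C^\gen_X| = |C^\gen \setminus X| + 1 \le |C^\gen| + 1$, and symmetrically for $|C^\gen_{V^\gen\setminus X}|$. Starting from $|C^\gen| = 0$ at the root and applying \Cref{lem:depth}, every base case has $|C^\gen| \le O(\log n)$.

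For observation (iii), each split satisfies $|V^\gen_X| + |V^\gen_{V^\gen\setminus X}| = |V^\gen| + 2$. In a binary recursion tree with $L$ leaves there are exactly $L-1$ internal nodes, so a telescoping sum gives $\sum_{\text{base}} |V^\gen| = n + 2(L-1)$. To bound $L$, I would use the balance condition \Cref{eq:X} together with the fact that splits only occur when $|V^\gen| > 32$: each child of a split has size at least $|V^\gen|/16 + 1 \ge 3$. Consequently every non-root leaf has $|V^\gen| \ge 3$, which combined with the telescoping identity yields $3L \le n + 2(L-1)$, i.e., $L \le n$, so $\sum_{\text{base}} |V^\gen| \le 3n$. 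The degenerate case $n \le 32$ is immediate since then the root is the only base case.

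The main obstacle is the bound $L = O(n)$ on the number of base cases. The naive argument from depth alone only yields $L \le 2^{O(\log n)} = n^{O(1)}$, which is far too weak; it is precisely the balance guarantee \Cref{eq:X} (combined with the size threshold $|V^\gen| > 32$ for splitting) that forces every non-root leaf to carry a non-trivial share of the vertex budget, and thereby keeps $L$ linear in $n$.
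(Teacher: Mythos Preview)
Your proof is correct but follows a different route from the paper's. The paper handles the two base-case types separately: for type-1 ($|V^\gen|\le 32$) it observes that each such leaf contains at least one uncontracted vertex and these are disjoint across leaves, giving at most $n$ type-1 leaves of constant size; for type-2 ($U^\gen=\emptyset$) it bounds $|C^\gen|=O(\log n)$ via the depth (same as your observation~(ii)), and then counts type-2 leaves by noting that each has a parent with at least one uncontracted vertex, yielding $O(n\log n)$ such leaves across the $O(\log n)$ layers. Your approach instead treats all leaves uniformly via the telescoping identity $\sum_{\text{base}}|V^\gen|=n+2(L-1)$ and uses the balance condition \Cref{eq:X} directly to force every non-root leaf to have size at least $3$, which immediately gives $L\le n$ and $\sum_{\text{base}}|V^\gen|\le 3n$. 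This is tighter on the leaf count ($L\le n$ versus the paper's $O(n\log n)$) and arguably cleaner, since it avoids the layer-by-layer accounting; on the other hand, the paper's counting argument uses only the depth bound and disjointness of uncontracted vertices, so it would go through even under a weaker balance guarantee that still gave logarithmic depth.
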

\begin{proof}
    First, consider the base cases of constant size: $|V^\gen| \le 32$. Since the other base case truncates the recursion whenever $U^\gen = \emptyset$, it must be that $V^\gen$ contains at least one uncontracted vertex in each invocation of this base case. Now, since each uncontracted vertex is assigned to exactly one of the two subproblems by the recursive algorithm, it follows that each uncontracted vertex can be in only one base case. Therefore, the total number of these bases cases is $\le n$. Since each base case is on a graph of $O(1)$ size, the total running time of the Bencz\'ur-Karger algorithm over these base cases is $O(n)$.
    
    Next, we consider the other base case: $U^\gen  = \emptyset$. Since the depth of the recursion tree is $O(\log n)$ by \Cref{lem:depth}, and each branch of the recursion adds a single contracted vertex in each step, the total number of contracted vertices in any instance is $O(\log n)$. Thus, the Bencz\'ur-Karger algorithm has a running time of $O(\log^2 n\cdot  \polylog(\log n))$ for each instance of this base case. To count the total number of these instances, we note that the parent subproblem of any base case must contain at least one uncontracted vertex. Since the depth of the recursion tree is $O(\log n)$ and an uncontracted vertex can be in only one subproblem at any layer of recursion, it follows that the total number of instances of this base case is $O(n\log n)$. Therefore, the cumulative running time of all the base cases of this type is $\tO(n)$.
\end{proof}

The rest of the proof will focus on bounding the cumulative running time of the recursive instances of the algorithm. 
Our first step is to show that the expected number of iterations in any subproblem before we obtain an $X$ that satisfies \Cref{eq:X} is a constant:
\begin{lemma}\label{lem:X}
    Suppose $s, t$ are vertices chosen uniformly at random from $V^\gen$, and let $\phi := \con(s, t)$ be the $s-t$ connectivity in $G^\gen$. Then, $X := V^\gen \setminus \cut(s, \phi)$ satisfies \Cref{eq:X} with probability $\ge 1/32$.
\end{lemma}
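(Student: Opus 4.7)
The plan is to analyze $|X|$ through the Gomory--Hu tree $T$ of $G^\gen$ under the perturbed weights $w'$. A standard union-bound argument shows that after perturbation all distinct cuts have distinct values with high probability, so the edges of $T$ may be treated as having distinct weights. Under this identification, $\phi=\con(s,t)$ equals the minimum weight on the $s$--$t$ path in $T$, and $X$ equals the component of $s$ in $T$ after removing every edge of weight at most $\phi$.

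Sorting the edges $e_1<e_2<\cdots<e_{n-1}$ of $T$ by weight (where $n := |V^\gen|$) and processing them in order, each $e_i$ splits the component of $T\setminus\{e_1,\ldots,e_{i-1}\}$ containing $e_i$ into two pieces of sizes $\alpha_i,\beta_i$. A standard double count (each unordered pair of vertices has a unique bottleneck edge on its path in $T$) yields $\sum_i \alpha_i\beta_i=\binom{n}{2}$. For a uniformly random ordered pair $(s,t)$, the event ``$e_i$ is the bottleneck on the $s$--$t$ path'' has probability $2\alpha_i\beta_i/(n(n-1))$; conditional on this, $|X|$ equals $\alpha_i$ or $\beta_i$ with equal probability. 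Hence, writing $G := [n/16,\, 15n/16]$,
\[
\Pr(|X|\in G) \;=\; \frac{1}{n(n-1)}\sum_i \alpha_i\beta_i\,\bigl([\alpha_i\in G]+[\beta_i\in G]\bigr).
\]

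It then suffices to bound the ``bad'' contribution $\sum_i \alpha_i\beta_i\,[\alpha_i\notin G][\beta_i\notin G]\le n^2/8$, since the complementary good contribution is then at least $\binom{n}{2}-n^2/8\ge n^2/4$ once $n>32$, already giving $\Pr(|X|\in G)\ge 1/4\gg 1/32$. I handle two cases. In the \emph{shave} case $\max(\alpha_i,\beta_i)>15n/16$, the edge must split the unique dominant component (of size $>15n/16$), and after the split the new dominant is still $>15n/16$; the cumulative shrinkage of the dominant across all shave edges is therefore at most $n-15n/16=n/16$, giving $\sum_{\text{shave}}\alpha_i\beta_i\le n\sum \min(\alpha_i,\beta_i)<n^2/16$. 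In the \emph{both-small} case $\alpha_i,\beta_i<n/16$, I define $F(v)$ as the component of $v$ the first time $v$'s component has size less than $n/8$; since components only split (never merge) as $i$ grows, the $F(v)$'s partition $V^\gen$ into blocks of size $<n/8$ summing to $n$, and every both-small pair lies within a single block, giving $\sum_{\text{both-small}}\alpha_i\beta_i\le\sum_j\binom{|F_j|}{2}\le\tfrac{1}{2}(\max_j|F_j|)\sum_j|F_j|<n^2/16$. The main step to verify carefully is the partition-by-$F(v)$ claim, which relies on the simple monotonicity observation that two vertices ever together in a small component remain together thereafter.
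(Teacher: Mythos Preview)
Your approach is correct and genuinely different from the paper's. The paper reaches the $1/32$ bound by invoking an external lemma of Abboud, Krauthgamer, and Trabelsi (their \Cref{lem:tournament}) to show that, with probability $\ge 1/2$ over $s$, no ``run'' in the sorted sequence $\Lambda(s)$ exceeds $3|V^\gen|/4$; it then picks $t$ in a specific index window with probability $1/16$. Your argument instead works directly on the Gomory--Hu tree, expressing $\Pr(|X|\in G)$ exactly as a weighted sum over bottleneck edges and bounding the bad mass by the shave/both-small dichotomy. This is more self-contained (no black-box lemma from \cite{AbboudKT20}) and yields the sharper constant $>1/4$; the paper's route is shorter only because it outsources the combinatorics.

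One point deserves tightening. Your sentence ``after perturbation all distinct cuts have distinct values with high probability'' is false as written: there are $2^{n-1}$ cuts and $N$ is only polynomial. What you actually need---and what does hold---is that the $n-1$ Gomory--Hu edge weights are distinct. This follows from the same Isolation-Lemma reasoning as the paper's \Cref{lem:unique}, applied symmetrically: for every two vertex pairs $(a,b)$ and $(c,d)$, with probability $\ge 1-m/N$ either $\con_{w'}(a,b)\ne\con_{w'}(c,d)$ or the (unique) minimum $a$--$b$ cut coincides with the (unique) minimum $c$--$d$ cut. A union bound over $O(n^4)$ pairs of pairs then shows that two distinct tree edges cannot share a weight, since they induce different bipartitions. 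With this correction, the rest of your argument goes through cleanly; in particular your shave bound via the monotone shrinkage of the dominant component and your both-small bound via the $F(v)$ partition (which uses that components only shrink as edges are removed, so that $C\subseteq F(v)$ for any $v\in C$) are both sound.
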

To show this, we first need to establish some properties of the random transformation that changes edge weights from $w$ to $w'$.
First, we establish uniqueness of the minimum $s-t$ cut for any vertex pair $s, t\in V$ under $w'$. We need the {\em Isolation Lemma} for this purpose:
\begin{lemma}[Isolation Lemma~\cite{MulmuleyVV87}]\label{lem:isolation}
    Let $m$ and $N$ be positive integers and let $\cal F$ be a collection of subsets of $\{1, 2, \ldots, m\}$. Suppose each element $x\in \{1, 2, \ldots, m\}$ receives a random number $r(x)$ uniformly and independently from $\{1, 2, \ldots, N\}$. Then, with probability $\ge 1-m/N$, there is a unique set $S\in {\cal F}$ that minimizes $\sum_{x\in S} r(x)$.
\end{lemma}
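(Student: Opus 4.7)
The plan is a standard union-bound argument via the principle of deferred decisions. First I would observe that if two distinct sets $S_1, S_2 \in {\cal F}$ both achieve the minimum weight, then their symmetric difference is nonempty, so there is some element $x$ lying in exactly one of them. Call such an $x$ \emph{singular}: there exist minimum-weight sets in $\cal F$ both containing and not containing $x$. It therefore suffices to show that for each fixed $x$, the probability that $x$ is singular is at most $1/N$, and then apply a union bound over the $m$ elements to conclude that the minimum is unique with probability at least $1 - m/N$.

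Next I would fix an element $x \in \{1, 2, \ldots, m\}$ and condition on the random values $\{r(y) : y \neq x\}$, leaving only $r(x)$ random. Given this conditioning, define the two quantities
$$\alpha := \min\Bigl\{\tsty\sum_{y \in S, \, y \neq x} r(y) \; : \; S \in {\cal F}, \; x \in S\Bigr\}, \qquad \beta := \min\Bigl\{\tsty\sum_{y \in S} r(y) \; : \; S \in {\cal F}, \; x \notin S\Bigr\},$$
with the convention that an empty minimum is $+\infty$. Then the minimum weight of a set in $\cal F$ that contains $x$ equals $r(x) + \alpha$, while the minimum weight of a set that excludes $x$ equals $\beta$. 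Crucially, both $\alpha$ and $\beta$ depend only on $\{r(y) : y \neq x\}$, and thus are independent of $r(x)$ under the conditioning.

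The element $x$ is singular if and only if $r(x) + \alpha = \beta$, i.e., $r(x) = \beta - \alpha$. Since the right-hand side is a single value fixed by the conditioning, at most one of the $N$ possible values of $r(x)$ makes $x$ singular. As $r(x)$ is drawn uniformly from $\{1, 2, \ldots, N\}$ independently of the other weights, the conditional probability that $x$ is singular is at most $1/N$; averaging over the conditioning preserves the bound. A union bound over $x \in \{1, 2, \ldots, m\}$ then shows that the probability that \emph{some} element is singular is at most $m/N$, and when no element is singular, the minimizer must be unique.

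The main (and really only) subtlety is the deferred-decisions step: one must be careful to argue that $\alpha$ and $\beta$ are well-defined functions of the other coordinates alone so that the event $\{r(x) = \beta - \alpha\}$ truly has probability at most $1/N$. This independence is immediate from the joint independence of the $r(\cdot)$'s in the hypothesis, so no further work is required.
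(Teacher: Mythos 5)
The paper states the Isolation Lemma as a cited result from Mulmuley, Vazirani, and Vazirani~\cite{MulmuleyVV87} and gives no proof of its own, so there is nothing internal to compare against. Your argument is the standard deferred-decisions proof from that reference and is correct: the key reductions (nonuniqueness implies some element is ``singular,'' and singularity of a fixed $x$ is the event $r(x) = \beta - \alpha$ where $\alpha,\beta$ depend only on the other coordinates) are exactly right, the $+\infty$ convention cleanly disposes of the degenerate cases where no set of one type exists, and the union bound over the $m$ elements yields the claimed $1 - m/N$.
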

We choose $N = m\cdot n^d$ for some constant $d > 0$. (Note that this increases the edge weights from $w$ to $w'$ by a $\poly(n)$ factor only, thereby ensuring that the efficiency of elementary operations is not affected.) Then, we can apply the Isolation Lemma to prove the following property:
\begin{lemma}\label{lem:unique}
    Fix any vertex $s\in V^\gen$. For every vertex $t\in V^\gen\setminus \{s\}$, the minimum $s-t$ cut under edge weights $w'$ is unique with probability $\ge 1-1/n^d$. Moreover, let $t, t'\in V^\gen\setminus \{s\}$. With probability at least $\ge 1 - 1/n^d$, one of the following must hold: (a) $\lambda(s, t)\not= \lambda(s, t')$, or (b) the unique minimum $s-t$ cut is identical to the unique minimum $s-t'$ cut in $G^\gen$.
\end{lemma}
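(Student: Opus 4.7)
The plan is to reduce everything to the Isolation Lemma applied to carefully chosen collections of $s$-$t$ cuts. First I would establish a preliminary observation: since $r(e) \in \{1, \ldots, N\}$ and the cut has at most $m$ edges, the random term satisfies $1 \le r(C) \le mN$ for every cut $C$, while $w$ is integral. Hence, if two $s$-$t$ cuts $C_1, C_2$ satisfy $w(C_1) < w(C_2)$, the scaling factor $mN$ dominates and gives $w'(C_1) < w'(C_2)$ (essentially a restatement of \Cref{lem:consistency} applied to cut values). Consequently, every $w'$-minimum $s$-$t$ cut must already be a $w$-minimum $s$-$t$ cut. Letting $\mathcal{F}_{st}$ denote the collection of edge sets of all $w$-minimum $s$-$t$ cuts, the $w'$-minimum $s$-$t$ cuts are exactly the $r$-minimizers in $\mathcal{F}_{st}$, with $w'$-value $mN\lambda_w(s,t) + r(C)$ for any such minimizer $C$.

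For the first claim, I would apply the Isolation Lemma directly to $\mathcal{F}_{st}$, viewed as a collection of subsets of the ground set $E^\gen$. With $N = m n^d$, the lemma gives a unique $r$-minimizer in $\mathcal{F}_{st}$ with probability $\ge 1 - m/N = 1 - 1/n^d$, and by the preceding paragraph this unique $r$-minimizer is the unique minimum $w'$-cut separating $s$ from $t$.

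For the second claim, I would split into two cases. If $\lambda_w(s,t) \neq \lambda_w(s,t')$, then \Cref{lem:consistency} applied to the minimum cut values forces $\lambda(s,t) \neq \lambda(s,t')$ with certainty, giving (a). Otherwise, $\lambda_w(s,t) = \lambda_w(s,t') =: \lambda$, and the trick is to apply the Isolation Lemma not to $\mathcal{F}_{st}$ and $\mathcal{F}_{st'}$ separately but to their union $\mathcal{F}_{st} \cup \mathcal{F}_{st'}$. Let $C^*$ be the unique $r$-minimizer in the union, which exists with probability $\ge 1 - 1/n^d$. I would then analyse three subcases: (i) if $C^* \in \mathcal{F}_{st} \cap \mathcal{F}_{st'}$, then uniqueness in the union forces $C^*$ to be the unique $r$-minimizer in each of $\mathcal{F}_{st}$ and $\mathcal{F}_{st'}$ individually, so the unique minimum $w'$ $s$-$t$ cut and the unique minimum $w'$ $s$-$t'$ cut coincide, giving (b); (ii) if $C^* \in \mathcal{F}_{st} \setminus \mathcal{F}_{st'}$, then every $C' \in \mathcal{F}_{st'}$ lies in the union and differs from $C^*$, hence satisfies $r(C') > r(C^*)$, so $\lambda(s,t') \ge mN\lambda + \min_{C'\in\mathcal{F}_{st'}} r(C') > mN\lambda + r(C^*) = \lambda(s,t)$, giving (a); (iii) the symmetric subcase is handled identically.

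\paragraph{Main obstacle.}
The crux is the second claim: naively one might separately isolate within $\mathcal{F}_{st}$ and $\mathcal{F}_{st'}$, but then one would still need to rule out the coincidence $r(C_1^*) = r(C_2^*)$ for the two independent $r$-minimizers, which the Isolation Lemma does not directly bound. Replacing the two separate applications by a single application to the union $\mathcal{F}_{st} \cup \mathcal{F}_{st'}$ is what turns a coincidence of $r$-values into a strict inequality, and is the key step that makes the proof go through in a clean single invocation of \Cref{lem:isolation}.
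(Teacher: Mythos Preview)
Your proposal is correct and follows essentially the same approach as the paper's proof: reduce to \Cref{lem:consistency} to restrict attention to $w$-minimum cuts, apply the Isolation Lemma to $\mathcal{F}_{st}$ for the first claim, and for the second claim case on whether $\lambda_w(s,t)=\lambda_w(s,t')$, applying the Isolation Lemma to the union $\mathcal{F}_{st}\cup\mathcal{F}_{st'}$ in the equal case. Your three-subcase analysis of where the unique minimizer in the union lies is exactly the paper's dichotomy of whether that cut separates both of $t,t'$ from $s$ or only one of them.
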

\begin{proof}
    We first establish the uniqueness of the minimum $s-t$ cut. Note that by \Cref{lem:consistency}, the only candidates for minimum $s-t$ cut under $w'$ are the minimum $s-t$ cuts under $w$. For any two such cuts $X, Y\subset V^\gen$, we have $\delta_w(X) = \delta_w(Y)$, i.e., $mN\cdot \delta_w(X) = mN\cdot \delta_w(Y)$. Therefore, the $s-t$ minimum cuts under $w'$ are those minimum $s-t$ cuts $X$ under $w$ that have the minimum value of $r(X)$, which is defined as the sum of $r(u, v)$ over all edges $(u, v)$ with exactly one endpoint in $X$. The uniqueness of the minimum $s-t$ cut under edge weights $w'$ now follows from \Cref{lem:isolation} by setting $\cal F$ to the collection of subsets of edges that form the minimum $s-t$ cuts under edge weights $w$.
    
    Next, consider two vertices $t, t'\in V\setminus \{s\}$. If $\con(s, t) \not= \con(s, t')$ under edge weights $w$, assume wlog that $\con(s, t) < \con(s, t')$. This implies that for every $s-t'$ cut $Y$, we have $\delta_w(Y) > \delta_w(X)$, where $X$ is a minimum $s-t$ cut under edge weights $w$. But then, by \Cref{lem:consistency}, we have $\delta_{w'}(Y) > \delta_{w'}(X)$. This implies that $\lambda(s, t) \not= \lambda(s, t')$ under edge weights $w'$. In this case, we are in case (a). Next, suppose $\lambda(s, t) = \lambda(s, t')$ under edge weights $w$. Apply \Cref{lem:isolation} by setting $\cal F$ to be the collection of subsets of edges where each subset forms a minimum $s-t$ cut or a minimum $s-t'$ cut under edge weights $w$. With probability $\ge 1-1/n^d$, we get a unique minimum cut among these cuts under edge weights $w'$. If this unique minimum is a cut that separates both $t, t'$ from $s$, then we are in case (b), while if it only separates one of $t$ or $t'$ from $s$, then we are in case (a).
\end{proof}

Using $d > 3$, and applying a union bound over all choices of $s, t, t'$, we can assume that \Cref{lem:unique} holds for all choices of vertices $s, t, t'$. (This holds {\em with high probability}, which is sufficient for our purpose because our algorithm is Monte Carlo.) 

We also need the following lemma due to Abboud {\em et al.}~\cite{AbboudKT20}:
\begin{lemma}[Abboud {\em et al.}~\cite{AbboudKT20}]\label{lem:tournament}
    Let $G = (V, E)$ be an undirected graph. If $s$ is a vertex chosen uniformly at random from $V$, then with probability $\ge 1/2$, there are $\ge |V|/4$ vertices $t\in V\setminus \{s\}$ such that the {\em $t$-minimal} minimum $s-t$ cut has $\le |V|/2$ vertices on the side of $t$.
\end{lemma}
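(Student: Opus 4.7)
The plan is to derive \Cref{lem:tournament} from an uncrossing property of $t$-minimal minimum cuts, followed by a ``tournament'' style double-counting argument. For each ordered pair of distinct vertices $s, t \in V$, let $C(s,t)$ denote the $t$-side of the $t$-minimal minimum $s$-$t$ cut, so that the lemma asserts that with probability at least $1/2$ over a uniformly random $s$, there are at least $|V|/4$ vertices $t$ with $|C(s,t)| \le |V|/2$.

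The first and crucial step is to show the pairwise disjointness $C(s,t) \cap C(t,s) = \emptyset$, and hence $|C(s,t)| + |C(t,s)| \le |V|$, for every distinct pair $s,t$. This follows from a standard submodular uncrossing argument: since $C(t,s)$ is the \emph{minimum} $s$-side among all minimum $s$-$t$ cuts, its complement $V \setminus C(t,s)$ is the \emph{maximum} $t$-side, while $C(s,t)$ is the \emph{minimum} $t$-side. If $C(s,t)$ were not contained in $V \setminus C(t,s)$, then applying the submodular inequality to the cut function on $C(s,t)$ and $V \setminus C(t,s)$ (both sides of minimum $s$-$t$ cuts) would produce a $t$-side strictly smaller than $C(s,t)$ while still achieving the minimum cut value, contradicting the $t$-minimality of $C(s,t)$.

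Given this, orient each unordered pair $\{s,t\}$ by $s \to t$ whenever $|C(s,t)| \le |V|/2$; the pairwise inequality $|C(s,t)| + |C(t,s)| \le |V|$ guarantees that at least one of the two directions is always valid for each pair (otherwise both would exceed $|V|/2$). Writing $f(s) = |\{t \in V \setminus \{s\} : |C(s,t)| \le |V|/2\}|$ for the out-degree in this orientation, we get the counting bound $\sum_s f(s) \ge \binom{|V|}{2}$.

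The final step is a Markov-type averaging: combined with the trivial upper bound $f(s) \le |V|-1$, the counting bound forces a constant fraction of $s \in V$ to satisfy $f(s) \ge |V|/4$. The main obstacle is hitting the precise probability $1/2$ claimed in the lemma, since a naive Markov gives only roughly $1/3$; the proof therefore likely exploits the additional slack coming from pairs where \emph{both} orientations are valid, so that $\sum_s f(s)$ strictly exceeds $\binom{|V|}{2}$ by an amount that tightens the averaging (or, alternatively, shifts the thresholds slightly while still furnishing the constant probability required by \Cref{lem:X}). The structural Step~1 is the conceptual core; everything after is combinatorial bookkeeping.
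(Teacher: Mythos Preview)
The paper does not prove \Cref{lem:tournament}; it is simply quoted from Abboud et~al.~\cite{AbboudKT20}, so there is no ``paper's own proof'' to compare against. I can only evaluate your sketch on its own merits.

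Your structural Step~1 is correct and is indeed the heart of the argument: the submodular uncrossing of $C(s,t)$ against $V\setminus C(t,s)$ shows $C(s,t)\cap C(t,s)=\emptyset$, hence $|C(s,t)|+|C(t,s)|\le|V|$, and the tournament orientation follows.

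The gap is in the averaging step, and your own proposed patch does not close it. You suggest the extra slack from pairs where \emph{both} directions satisfy $|C(\cdot,\cdot)|\le|V|/2$ might lift the probability from $1/3$ to $1/2$. But in the worst case there are no such pairs (e.g., whenever $C(s,t)$ and $C(t,s)$ partition $V$ with $|V|$ odd), so the inequality $\sum_s f(s)\ge\binom{|V|}{2}$ is tight and you are stuck at $1/3$.

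The missing idea is to localize the counting. Let $B=\{s:f(s)<|V|/4\}$. For every pair $\{s,t\}\subseteq B$ at least one orientation exists, and each such oriented edge contributes to $f$ of a vertex in $B$; hence $\sum_{s\in B}f(s)\ge\binom{|B|}{2}$. Combined with the trivial upper bound $\sum_{s\in B}f(s)<|B|\cdot|V|/4$, this gives $|B|-1<|V|/2$, i.e., $|B|\le|V|/2$, which is exactly the claimed probability $1/2$. This ``edges inside the bad set'' trick is the standard way to sharpen the naive Markov bound in tournament arguments, and it is what your sketch is missing.
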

Here, {\em $t$-minimal} refers to the minimum $s-t$ cut where the side containing $t$ is minimized. But, for our purposes, we do not need this qualification since by \Cref{lem:unique}, the minimum $s-t$ cut in $G^\gen$ is unique under edge weights $w'$.

Now, for any vertex $s$, let $\Con(s)$ denote the sequence of vertices $t\in V^\gen\setminus \{s\}$ in non-increasing order of the value of $\con(s, t)$. (If $\con(s, t) = \con(s, t')$, then the relative order of $t, t'$ in $\Con(s)$ is arbitrary.) We define a {\em run} in this sequence as a maximal subsequence of consecutive vertices that have an identical value of $\con(s, t)$. Combining \Cref{lem:unique} and \Cref{lem:tournament}, we make the following claim:
\begin{lemma}\label{lem:run}
    Let $s$ be a vertex chosen uniformly at random from $V^\gen$. Then, with probability $\ge 1/2$, the longest run in $\Con(s)$ is of length $\le \frac{3|V^\gen|}{4}$.
\end{lemma}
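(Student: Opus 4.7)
The plan is to derive the bound by combining Lemma \ref{lem:tournament} (which yields many vertices whose minimum $s$-cut is small on their own side) with Lemma \ref{lem:unique} (which forces any two vertices with the same $s$-connectivity under $w'$ to share a single minimum cut). These two facts together make a ``long run'' geometrically impossible.

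First I would condition on the high-probability event $\mathcal E$ that the conclusions of Lemma \ref{lem:unique} hold simultaneously for every triple $(s, t, t')$; a union bound over $O(n^3)$ triples with the parameter $d > 3$ makes the failure probability $O(n^{-c})$ for some $c > 0$, which is acceptable since the overall algorithm is Monte Carlo. Under $\mathcal E$, for any two vertices $t, t'$ lying in the same run of $\Con(s)$ we have $\con(s,t) = \con(s,t')$, so case (a) of Lemma \ref{lem:unique} is ruled out and case (b) must hold: the unique minimum $s$-$t$ cut in $G^\gen$ under $w'$ equals the unique minimum $s$-$t'$ cut. Consequently, all vertices of a single run of $\Con(s)$ sit on the common non-$s$ side of one shared cut $X \subseteq V^\gen$.

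Next I would invoke Lemma \ref{lem:tournament}: with probability at least $1/2$ over the random choice of $s$, there is a set $S \subseteq V^\gen \setminus \{s\}$ with $|S| \ge |V^\gen|/4$ such that for every $t \in S$ the $t$-minimal minimum $s$-$t$ cut has at most $|V^\gen|/2$ vertices on $t$'s side. Because $\mathcal E$ ensures uniqueness of the minimum $s$-$t$ cut under $w'$, this unique cut coincides with the $t$-minimal one, so for each $t \in S$ the unique minimum $s$-$t$ cut places at most $|V^\gen|/2$ vertices on $t$'s side.

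Now suppose for contradiction that the longest run in $\Con(s)$ contains $k > 3|V^\gen|/4$ vertices. Then fewer than $|V^\gen|/4$ vertices of $V^\gen$ (including $s$) lie outside this run, so by pigeonhole at least one vertex $t \in S$ belongs to the run. By the first step, the unique minimum $s$-$t$ cut has all $k > 3|V^\gen|/4 > |V^\gen|/2$ run vertices on $t$'s side, contradicting the $|V^\gen|/2$ bound just established. Union-bounding the failure events yields success probability at least $1/2 - O(n^{-c})$, which is at least $1/2$ for all sufficiently large $n$. The principal obstacle is the first step — arguing that equal $s$-connectivity inside a run forces a single common unique minimum cut under the perturbed weights $w'$; once that structural fact is locked in, the remainder is a clean pigeonhole against Lemma \ref{lem:tournament}.
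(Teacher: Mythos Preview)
Your proposal is correct and follows essentially the same approach as the paper: both use Lemma~\ref{lem:unique} to argue that all vertices in a single run share one common unique minimum $s$--$t$ cut, and then invoke Lemma~\ref{lem:tournament} to show that a run of length exceeding $3|V^\gen|/4$ would contradict the existence of $\ge |V^\gen|/4$ vertices with small $t$-side. The paper phrases the final step as a direct count (fewer than $|V^\gen|/4$ vertices can have small $t$-side) rather than your pigeonhole-plus-contradiction, but the logic is identical; your only imprecision is the claim that $1/2 - O(n^{-c}) \ge 1/2$, which should instead be absorbed into the Monte Carlo failure probability as the paper does just before stating Lemma~\ref{lem:tournament}.
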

\begin{proof}
    First, note that all vertices $t$ in a run share the same unique minimum $s-t$ cut (and not just the value of $\con(s, t)$) by \Cref{lem:unique}. Thus, if there is a run in $\Con(s)$ has $> \frac{3|V^\gen|}{4}$ vertices, then for all these vertices $t$, the unique minimum $s-t$ cut has $> \frac{3|V^\gen|}{4}$ vertices on the side of $t$. It follows that there are $< \frac{|V^\gen|}{4}$ vertices $t$ that have $\le \frac{|V^\gen|}{2}$ vertices on the side of $t$  in the (unique) minimum $s-t$ cut. The lemma now follows by observing that this can only happens with probability $< 1/2$ by \Cref{lem:tournament} since $s$ is a vertex chosen uniformly at random from $V^\gen$.
\end{proof}
\Cref{lem:run} now allows us to derive the probability of choosing vertices $s$ and $t$ such that \Cref{eq:X} is satisfied:
\begin{proof}[Proof of \Cref{lem:X}]
    By \Cref{lem:run}, the longest run in $\Con(s)$ is of length $\le \frac{3|V^\gen|}{4}$ with probability $\ge 1/2$. Next, the index of $t$ in $\Con(s)$ is between $\frac{7 |V^\gen|}{8}$ and $\frac{15 |V^\gen|}{16}$ with probability $1/16$ since $t$ is chosen uniformly at random. If this happens, then we immediately get $|V^\gen\setminus X| = |\cut(s, \phi)| \ge \frac{|V^\gen|}{16}$ where $\phi = \con(s, t)$. This is because the suffix of $\Con(s)$ starting at $t$ is in $\cut(s, \phi)$. But, we also have $|V^\gen\setminus X| = |\cut(s, \phi)|\le \frac{|V^\gen|}{8} + \frac{3|V^\gen|}{4} = \frac{7|V^\gen|}{8}$ since the longest run in $\Con(s)$ has $\le \frac{3|V^\gen|}{4}$ vertices, and all vertices before the start of the run containing $t$ are not in $\cut(s, \phi)$. The lemma follows.
\end{proof}

Next, we bound the total number of vertices and edges at any level of the recursion tree:
\begin{lemma}\label{lem:layer}
    The total number of vertices and edges in all the recursive subproblems at any level of the recursion tree in the first phase of the extreme sets algorithm is $O(n\log n)$ and $O(m + n\log^2 n)$ respectively.
\end{lemma}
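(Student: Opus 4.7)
The plan is to bound the total vertex count $V_\ell$ and total edge count $E_\ell$ summed over all subproblems at depth $\ell$ of the recursion tree separately, and in that order (since the edge bound will use the vertex bound). For the vertex bound, I would first observe that each recursive step replaces a subproblem $G^\gen$ with two children whose total vertex count is $|V^\gen|+2$: the two contractions (one on each side of the partition) collapse one side and introduce one fresh contracted vertex per child. Hence $V_{\ell+1}\le V_\ell + 2N'_\ell$, where $N'_\ell$ denotes the number of subproblems at depth $\ell$ that split further.

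The key structural estimate is $N'_\ell \le \min(2^\ell, n)$. The $2^\ell$ bound is immediate from binary branching. For the $n$ bound, I would use the fact that a subproblem that splits must satisfy $|V^\gen|>32$ and $U^\gen\ne \emptyset$ (otherwise the other base case would have triggered), so it contains at least one uncontracted vertex. Since each recursive step sends each uncontracted vertex to exactly one of the two children, the uncontracted vertices at any fixed depth form a subset of the original $n$ vertices partitioned across the subproblems at that depth. Hence at most $n$ subproblems at depth $\ell$ can be splitting. Combined with $\ell=O(\log n)$ from \Cref{lem:depth}, summing $\sum_{k=0}^{\ell-1}\min(2^k,n)$ splits into a geometric tail bounded by $2n$ and a linear tail bounded by $\ell n$, giving $\sum_{k=0}^{\ell-1} N'_k = O(n\log n)$ and therefore $V_\ell = O(n\log n)$.

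For the edge bound, the main observation is that when a set is contracted to a single vertex and parallel edges are unified, the new vertex has at most one edge to each remaining vertex. Concretely $|E(G^\gen_X)|\le |E^\gen(V^\gen\setminus X)| + |V^\gen\setminus X|$, and symmetrically for $G^\gen_{V^\gen\setminus X}$. Summing these two shows that the two children together have at most $|E^\gen| + |V^\gen|$ edges. Summing over all splitting subproblems at depth $\ell$ yields the recurrence $E_{\ell+1}\le E_\ell + V_\ell$. Iterating from $E_0 = m$ and substituting the per-level bound $V_k = O(\max(n,kn))$ gives $E_\ell \le m + \sum_{k=0}^{\ell-1} V_k = m + O(\ell^2 n) = m + O(n\log^2 n)$.

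The main technical subtlety is justifying $N'_\ell \le n$: one must carefully distinguish splitting subproblems (which each carry at least one uncontracted vertex) from terminated base cases (which are leaves and do not contribute descendants at deeper levels), and note that uncontracted vertices are partitioned, not duplicated, across the subproblems at any fixed depth. Everything else is a direct recurrence calculation coupled with the depth bound of \Cref{lem:depth}.
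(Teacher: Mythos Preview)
Your proof is correct and reaches the stated bounds, but it proceeds by a genuinely different route from the paper's. The paper argues structurally rather than via recurrences: since the recursion has depth $O(\log n)$ and each branching step adds exactly one contracted vertex to each child, every individual subproblem satisfies $|C^\gen| = O(\log n)$; combining this with the fact that uncontracted vertices are partitioned across at most $n$ non-base subproblems at each level immediately gives the $O(n\log n)$ vertex bound. For edges, the paper categorizes by endpoint type: edges with at least one uncontracted endpoint sum to $O(m)$ across a level (each original edge of $G$ appears in at most two subproblems, once per endpoint), while edges between two contracted vertices contribute at most $O(\log^2 n)$ per subproblem and hence $O(n\log^2 n)$ in total.

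Your approach instead tracks per-level increments $V_{\ell+1}-V_\ell \le 2N'_\ell$ and $E_{\ell+1}-E_\ell \le V_\ell$ and telescopes. The edge recurrence exploits parallel-edge unification directly (the new contracted vertex has degree at most the number of surviving vertices), which sidesteps the paper's case analysis entirely and is arguably slicker. On the other hand, the paper's argument yields the per-subproblem fact $|C^\gen|=O(\log n)$ as a byproduct, which it reuses in \Cref{lem:bk}; your aggregate recurrence does not isolate this. Both arguments rely on the same underlying observation that splitting subproblems each contain at least one uncontracted vertex and that these are partitioned, so the core combinatorics is shared.
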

\begin{proof}
    Since each step of the recursion adds one contracted vertex to each of the two subproblems, it follows from \Cref{lem:depth} that any subproblem in the recursion tree has at most $O(\log n)$ contracted vertices, i.e., $|C^\gen| = O(\log n)$. Next, note that every uncontracted vertex belongs to exactly one subproblem at any level of the recursion tree. Conversely, because of the base case for $U^\gen = \emptyset$, every recursive subproblem contains at least one uncontracted vertices. Therefore, the recursive subproblems at any level of the recursion tree contain $\le n$ uncontracted vertices and $O(n\log n)$ contracted vertices in total.
    
    The edges in a subproblem are in three categories: (a) edges between two uncontracted vertices, i.e., $\{(u, v)\in E^\gen: u, v\in U^\gen\}$ (b) edges between contracted and uncontracted vertices, i.e., $\{(u, v)\in E^\gen: u\in C^\gen, v\in U^\gen\}$ and (c) edges between two contracted vertices, i.e., $\{(u, v)\in E^\gen: u, v\in C^\gen\}$. Edges in (a) are distinct between subproblems at any level of the recursion tree since the sets of uncontracted vertices $U^\gen$ in these subproblems are disjoint. An edge $(u, v)\in E$ can appear in at most two subproblems as a category (b) edge, namely the subproblems containing the uncontracted vertices $u$ and $v$ respectively. As a result, there are $O(m)$ edges of category (a) and (b) in total across all the subproblems at a single level of the recursion tree. Finally, since the number of contracted vertices is $O(\log n)$ in any single subproblem, there are at most $O(\log^2 n)$ edges of category (c) in any subproblem. Since each recursive subproblem contains at least one uncontracted vertex, the total number of subproblems in a single layer of the recursion tree is $\le n$. Consequently, the total number of edges in category (c) across all subproblems at a single level of the recursion tree is $O(n\log^2 n)$.
\end{proof}

This lemma allows us to bound the running time of the first phase of the algorithm:
\begin{lemma}\label{lem:runtime1}
    The expected running time of the first phase of the algorithm is $\tO(m) + \polylog(n)\cdot F(m, n)$, where $F(m, n)$ is the running time of a max-flow algorithm on an undirected graph of $n$ vertices and $m$ edges.
\end{lemma}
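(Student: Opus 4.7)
The plan is to combine the per-subproblem cost bound coming from \Cref{thm:ct-alg} with the structural bounds of \Cref{lem:depth}, \Cref{lem:bk}, \Cref{lem:X}, and \Cref{lem:layer}, summing level-by-level over the recursion tree.

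First I would analyze the cost of a single non-base-case recursive call on a graph $G^\gen$ with $n' := |V^\gen|$ and $m' := |E^\gen|$. Everything except the loop that searches for a good $X$ (generating the perturbed weights $w'$, computing $\phi = \con(s,t)$, building the contracted subgraphs $G^\gen_X$ and $G^\gen_{V^\gen\setminus X}$, and stitching together the two returned trees) can be done in $\tO(m')$ time using standard graph operations and one max-flow call. Inside the loop, each attempt invokes the \ct algorithm of \Cref{thm:ct-alg} once at cost $\tO(m') + \polylog(n')\cdot F(m',n')$ and tests the size condition \Cref{eq:X}. By \Cref{lem:X}, each attempt succeeds independently with probability at least $1/32$, so in expectation $O(1)$ iterations suffice. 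Hence the expected cost of a single recursive call is $\tO(m') + \polylog(n)\cdot F(m',n')$.

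Next I would sum these per-subproblem costs across a single level of the recursion tree. By \Cref{lem:layer}, the total number of edges across all subproblems at any fixed level is $O(m + n\log^2 n) = \tO(m)$ and the total number of vertices is $O(n\log n)$, so the cumulative $\tO(m')$ overhead at a level contributes $\tO(m)$. For the max-flow term, I would use the standard superadditivity property of $F$: any reasonable max-flow algorithm satisfies $\sum_i F(m_i,n_i) \le F\!\left(\sum_i m_i,\; \sum_i n_i\right)$ when $F$ is at least linear and grows polynomially, and so
\[
\sum_i \polylog(n)\cdot F(m_i,n_i) \;\le\; \polylog(n)\cdot F\!\left(\tO(m),\tO(n)\right) \;=\; \polylog(n)\cdot F(m,n).
\]
Thus the total expected cost at a single level is $\tO(m) + \polylog(n)\cdot F(m,n)$.

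Finally, by \Cref{lem:depth} the recursion has depth $O(\log n)$, so multiplying by the number of levels absorbs into $\polylog(n)$ and gives a cumulative expected cost of $\tO(m) + \polylog(n)\cdot F(m,n)$ for all internal recursive calls. Adding the $\tO(n)$ total cost of the base cases from \Cref{lem:bk} yields the claimed bound. The main obstacle I anticipate is the summation of $F(m_i,n_i)$ across a level, which requires the mild assumption that $F$ is superadditive (polynomially bounded); this assumption is satisfied by all standard max-flow algorithms, but it must be invoked explicitly to convert the per-subproblem bound into the per-level bound in a black-box manner.
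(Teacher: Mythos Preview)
Your proof is correct and follows essentially the same approach as the paper: bound the per-call cost via \Cref{thm:ct-alg} and \Cref{lem:X}, sum across a level using \Cref{lem:layer}, multiply by the $O(\log n)$ depth from \Cref{lem:depth}, and add the base-case cost from \Cref{lem:bk}. The only cosmetic difference is that the paper phrases the per-level max-flow aggregation as ``the total time at a single level is maximized when there are $\polylog(n)$ subproblems containing $n$ vertices and $m$ edges each,'' whereas you invoke superadditivity of $F$ explicitly; both arguments rest on the same mild assumption that $F$ is polynomially bounded and at least linear, and you are arguably more careful in naming it.
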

\begin{proof}
    We have already shown a bound of $\tO(n)$ on the base cases in \Cref{lem:bk}. So, we focus on the recursive subproblems. 
    Cumulatively, over the recursive subproblems at a single level, \Cref{lem:layer} asserts that the total number of vertices and edges is $\tO(n)$ and $\tO(m)$ respectively. (Note that we can assume w.l.o.g. that $G$ is a connected graph and therefore $O(n\log^2 n) = \tO(m)$. If $G$ is not connected, we run the algorithm on each connected component separately.) Now, since $\tO(m) + F(m, n) = \Omega(m)$, the total time at a single level of the recursion tree is maximized when there are $\polylog(n)$ subproblems containing $n$ vertices and $m$ edges each. This gives a total running time bound of $\tO(m) + \polylog(n)\cdot F(m, n)$ on the subproblems at a single level. (Note that by \Cref{lem:X}, the expected number of choices of $s, t$ before \Cref{eq:X} is satisfied is a constant.) The lemma now follows by \Cref{lem:depth} which says that the number of levels of the recursion tree is $O(\log n)$.
\end{proof}

Next, we analyze the running time of the second phase of the algorithm.
To implement the second phase, we need to find the value of $\delta(X)$ for all subtrees of $T$. We use a dynamic tree data structure for this purpose. Initialize $\cnt[X] := 0$ for all subtrees $X$. For every edge $(u, v)\in E$, we make the following changes to $\cnt$:
\begin{itemize}
    \item Increase $\cnt[z]$ by $w(u, v)$ for all ancestors $z$ of $u$ and $v$ in $T$.
    \item Decrease $\cnt[z]$ by $2 w(u, v)$ for all ancestors $z$ of $\lca(u, v)$ in $T$.
\end{itemize}
Clearly, the value of $\cnt$ at the end of these updates is equal to $\delta(X)$ for every subtree $X$. Recall that during the postorder traversal for subtree $X$, we declare it to be an extreme set if and only if the value $\cnt[X]$ is {\em strictly} smaller than that of each of its children subtrees. 

This implementation of the second phase of the algorithm gives the following:
\begin{lemma}\label{lem:runtime2}
    The second phase of the extreme sets algorithm takes $\tO(m)$ time.
\end{lemma}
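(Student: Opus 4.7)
The plan is to implement the phase-2 updates using a dynamic tree data structure on $T$ that supports path-add-scalar and point-value queries, each in $O(\log n)$ amortized time (for instance, a link-cut tree or a top tree). Before proceeding, I will bound $|T|$: since $T$ has exactly $n$ leaves and every internal node has at least two children (a property maintained by the Bencz\'ur-Karger base-case trees and by the splicing step that combines the two recursive trees, where the splice replaces a leaf by a subtree and the discarded leaf cannot leave behind a unary internal node when combined appropriately), we have $|T| = O(n)$.

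I will preprocess $T$ with a static LCA oracle in $O(n)$ time supporting $O(1)$ queries, and initialize $\cnt[\cdot]=0$ on the dynamic tree. Then, iterating over all edges $(u,v)\in E$, I compute $\lca(u,v)$ in $O(1)$ and perform the three root-path updates described in the lemma statement, each in $O(\log n)$ amortized time, for a total of $O(m\log n)=\tO(m)$ across all edges. Correctness of the scheme follows from the identity that an edge $(u,v)$ contributes to $\delta_w(V(y))$ precisely when $y$ is an ancestor of exactly one of $u,v$, and the $+w(u,v)$, $+w(u,v)$, $-2w(u,v)$ contributions implement this indicator exactly at every node $y$ of $T$ (an ancestor of both $u$ and $v$ is an ancestor of $\lca(u,v)$ and gets net contribution zero).

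Finally, I perform the postorder traversal of $T$: at each node $y$, I query $\cnt[y]$ from the dynamic tree and compare with its children's values to decide whether to retain or splice out $y$. Since there are $O(n)$ nodes and each query costs $O(\log n)$, the traversal contributes $O(n\log n)$ time, while the splicing work (reattaching children to a grandparent) is linear in the total degree, hence $O(|T|)=O(n)$. Summing all contributions yields the $\tO(m)$ bound claimed in the lemma. There is no substantive obstacle here: the argument is a standard deployment of a link-cut (or top) tree together with an LCA oracle, and the only care point is verifying that the $+w,+w,-2w$ update rule correctly encodes $\delta_w(V(y))$ at every level of $T$, which we have already observed.
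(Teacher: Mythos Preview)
Your proposal is correct and follows essentially the same approach as the paper: both use a dynamic tree data structure to implement the root-path $+w,+w,-2w$ updates in $O(\log n)$ each, giving $\tO(m)$ for all edges, and both bound $|T|=O(n)$ from the $n$ leaves so that the postorder traversal is cheap. You supply a few extra implementation details (the explicit LCA oracle and the correctness check for the update rule) that the paper leaves implicit, but the argument is the same.
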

\begin{proof}
    First, note that the size of the tree $T$ output by the first phase is $O(n)$ since the leaves exactly correspond to the vertices of $G$. Thus, the number of subtrees of $T$ is also $O(n)$. The initialization of the dynamic tree data structure takes $O(n)$ time. Then, each dynamic tree update takes $O(\log n)$ time, and there are $O(m)$ such updates. So, the overall time for dynamic tree operations is $\tO(m)$. Finally, the time spent at a node of $T$ during postorder traversal is proportional to the number of its children, which adds to a total time of $O(n)$ for postorder traversal of $T$.
\end{proof}

\section{Augmentation on Extreme Sets}
\label{sec:augment}

In this section, we present our algorithm for degree-constrained edge connectivity augmentation (\dc) that uses extreme sets as a subroutine. Our goal is to prove \Cref{thm:deca}, restated below.
\Deca*

Throughout, we specify a \dc instance by a tuple $(G,\tau,\beta)$, indicating the graph $G$, the connectivity requirement $\tau$, and the (weighted) degree constraints $\beta(v)\ge 0$ for each vertex $v$.

\subsection{The Bencz\'ur-Karger Algorithm for \dc}
As mentioned before, our algorithm is essentially a speedup of the Bencz\'ur-Karger algorithm for \dc~\cite{BenczurK00} from $\tO(n^2)$ time to $\tO(m)$ given the extreme sets tree. We first describe the Bencz\'ur-Karger algorithm and then describe our improvements.

The algorithm consists of 3 phases.
\begin{enumerate}
 \item Using {\em external augmentation}, transform the degree constraints $\beta(v)$ to {\em tight} degree constraints $b(v)$ for all $v\in V$.
 \item Repeatedly add an augmentation chain to increase connectivity to at least $\tau -1$.
 \item Add a matching defined on the min-cut cactus if the connectivity does not reach $\tau$.
\end{enumerate}

We first describe the external augmentation problem and an algorithm (from \cite{BenczurK00}) to optimally solve it.

\paragraph{External augmentation.} 
The problem is defined as follows: Given a \dc instance $(G,\tau,\beta)$, insert a new node $s$, and find an edge set $F\subseteq \{s\}\times V$ with minimum total weight such that $\forall U\subset V$, $\delta_G(U) + \delta_F(U)\ge \tau$, and $\forall v\in V, d_F(v)\le \beta(v)$, where
$d_F(v) := \sum_{u\in V} w_F(u, v)$ is the (weighted) degree of $v$ in edges $F$.

The external augmentation problem can be solved using the following algorithm (from \cite{BenczurK00}): Let $b(v)$ denote the degree of $v$ in new edges. For any set $X\subseteq V$, let $b(X) := \sum_{v\in X} b(v)$. Initially, $b(v)=0$ for all $v\in V$. We do a postorder traversal on the extreme sets tree. When visiting an extreme set $X$ that is still deficient, i.e., $b(X) < \dem(X) := \max(\tau - \delta_G(X), 0)$, we add edges from vertices $v\in X$ with $b(v) < \beta(v)$ to $s$ until $b(X)=\dem(X)$. When we fail to find a vertex $v\in X$ such that $b(v) < \beta(v)$, the \dc instance is infeasible since we have $\delta(X)+\beta(X)<\tau$. This algorithm can be implemented in $O(n)$ time using a linked list to keep track of vertices $v$ with $b(v) < \beta(v)$ in a subtree, merging these lists as we move up the tree in the postorder traversal and removing vertices once $b(v) = \beta(v)$.

\begin{lemma}[Lemma 3.4 and 3.6 of \cite{BenczurK00}]
The algorithm described above outputs an optimal solution for the external augmentation problem.  
\end{lemma}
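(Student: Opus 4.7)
The plan is to prove optimality by establishing matching lower and upper bounds expressed in terms of the extreme sets tree, with feasibility handled via \Cref{lem:subset} and infeasibility reasoning as a separate case.

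First I would verify feasibility. Since $F\subseteq\{s\}\times V$, for any $U\subset V$ we have $\delta_F(U)=b(U)$, so the cut constraint becomes $b(U)\ge \dem(U)$. The algorithm maintains the invariant that $b(X)\ge\dem(X)$ for every extreme set $X$: when we visit $X$, either it is already non-deficient, or we explicitly boost $b$ on vertices in $X$ until $b(X)=\dem(X)$, and no subsequent step ever decreases $b(v)$. To extend this from extreme sets to arbitrary subsets $U$, I would apply \Cref{lem:subset}: there exists an extreme set $Z\subseteq U$ with $\delta_G(Z)\le \delta_G(U)$, hence $\dem(Z)\ge\dem(U)$, so $b(U)\ge b(Z)\ge\dem(Z)\ge\dem(U)$. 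The degree constraints $b(v)\le\beta(v)$ hold by construction since the algorithm refuses to raise $b(v)$ above $\beta(v)$.

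Next, for optimality, I would derive a lower bound that exactly matches the algorithm's total cost. Since singletons are extreme sets, $V$ admits partitions into extreme sets; these correspond precisely to complete antichains in the extreme sets tree. For any feasible solution $b^*$ and any such partition $V=X_1\sqcup\cdots\sqcup X_k$, additivity gives $b^*(V)=\sum_i b^*(X_i)\ge\sum_i\dem(X_i)$. Hence the optimum is at least $\max_\pi\sum_{X\in\pi}\dem(X)$. To match this, I would prove by induction on the extreme sets tree in postorder that $b(X)=\sum_{v\in X}b(v)$ equals $\max_\pi\sum_{Y\in\pi}\dem(Y)$ over antichains $\pi$ in the subtree rooted at $X$ that cover its leaves. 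The base case is a singleton leaf, where $b(v)=\dem(\{v\})$. For an internal node $X$ with children $X_1,\ldots,X_k$ (which partition $X$), the final value at the moment of processing is $b(X)=\max\!\bigl(\dem(X),\ \sum_i b(X_i)\bigr)$: the inductive hypothesis gives $\sum_i b(X_i)$ as the max over partitions refining $X_1,\ldots,X_k$, and the outer maximum introduces the partition $\{X\}$. Applying this at the root yields $b(V)=\max_\pi\sum_{X\in\pi}\dem(X)$, matching the lower bound.

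Finally, for the infeasibility case, if the algorithm aborts at extreme set $X$ because every $v\in X$ already has $b(v)=\beta(v)$ while $b(X)<\dem(X)$, then $\beta(X)=b(X)<\dem(X)=\tau-\delta_G(X)$. Any feasible $b^*$ would satisfy $b^*(X)\le \beta(X)<\dem(X)$, violating $\delta_G(X)+b^*(X)\ge\tau$, so the instance is indeed infeasible.

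The main subtlety, though not a deep obstacle, is that after the algorithm processes a node $X_i$, subsequent processing of an ancestor can raise $b(v)$ for some $v\in X_i$, so the inductive quantity must be interpreted as the value of $b(X_i)$ at the moment its own processing concludes, with the inductive identity still holding because any later increase inside $X_i$ is already captured by the outer $\max$ at its ancestor. Tracking this correctly is essentially a bookkeeping check rather than a conceptual hurdle, since the final $b(V)$ depends only on the sum telescoped up the tree.
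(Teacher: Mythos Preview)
The paper does not prove this lemma; it simply cites it as Lemmas~3.4 and~3.6 of~\cite{BenczurK00} and moves on. Your proposal therefore supplies an argument the paper omits, and that argument is correct.

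Your route is the natural one and is essentially what Bencz\'ur and Karger do: feasibility via the fact that it suffices to meet demands on extreme sets (your use of \Cref{lem:subset} is exactly the right tool here), and optimality via the partition lower bound $b^*(V)\ge\sum_{X\in\pi}\dem(X)$ together with an inductive verification that the postorder procedure achieves $b(X)=\max\bigl(\dem(X),\sum_i b(X_i)\bigr)$ at each node. The identification of complete antichains in the extreme sets tree with partitions of $V$ into extreme sets is clean.

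One remark on your closing paragraph: the ``subtlety'' you flag is milder than you suggest. Between the moment $X_i$ finishes processing and the moment its parent $X$ begins processing, the algorithm only touches siblings of $X_i$, which are disjoint from $X_i$; hence $b(X_i)$ is unchanged in that window, and $\sum_i b(X_i)$ at the start of $X$'s step really is the sum of the inductive values. Any later increase to some $b(v)$ with $v\in X_i$ happens \emph{during} the processing of $X$ (or higher ancestors), and is already accounted for by the outer $\max$ at that ancestor, exactly as you say. So no special bookkeeping interpretation is needed; the induction goes through verbatim.
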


The next lemma (from \cite{BenczurK00}) relates optimal solutions of the external augmentation and \dc problem instances:
\begin{lemma}[Lemma 2.6 of \cite{BenczurK00}]\label{lem:bk2.6}
If the optimal solution of the external augmentation instance has total weight $w$, then the optimal solution of \dc instance has value $\lceil w/2 \rceil$.
\end{lemma}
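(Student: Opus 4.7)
Let $k^*$ denote the optimum of the \dc instance. The plan is to establish the bounds $w \le 2 k^*$ and $k^* \le \lceil w/2 \rceil$ separately; integrality of $k^*$ then pins $k^* = \lceil w/2 \rceil$.

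For $w \le 2 k^*$, I would lift an optimal \dc solution $H$ to an external augmentation solution $F$ by replacing each edge $(u,v) \in H$ of weight $\alpha$ with a pair $(s,u), (s,v)$ each of weight $\alpha$. Feasibility of $F$ reduces to two short checks. A case analysis on how $\{u,v\}$ meets $U \subset V$ shows that every edge of $H$ contributes at least as much to $\delta_F(U)$ as to $\delta_H(U)$, so $\delta_G(U) + \delta_F(U) \ge \delta_G(U) + \delta_H(U) \ge \tau$; and $d_F(v) = d_H(v) \le \beta(v)$ holds vertex-wise. Since the total weight of $F$ equals twice the weight of $H$, namely $2 k^*$, we conclude $w \le 2 k^*$.

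For $k^* \le \lceil w/2 \rceil$, I would take an optimal external augmentation $F$ and apply Lov\'asz's splitting off theorem~\cite{Lovasz79} at $s$. When $w$ is even, Lov\'asz allows a complete splitting off of $s$ that preserves every pairwise connectivity among vertices of $V$; the resulting $H$ has weight $w/2$, inherits $\lambda_{G+H}(u,v) = \lambda_{G+F}(u,v) \ge \tau$, and satisfies $d_H(v) = d_F(v) \le \beta(v)$ because splitting off leaves each $V$-degree unchanged. When $w$ is odd, I would first add one unit of weight on an edge $(s, v^*)$ with $d_F(v^*) < \beta(v^*)$ to make the weighted degree of $s$ even, and then apply the same splitting off argument to obtain a \dc solution of weight $(w+1)/2 = \lceil w/2 \rceil$.

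The main obstacle is the odd case, which requires a slack vertex $v^*$. If no such vertex existed, we would have $d_F(v) = \beta(v)$ for every $v \in V$, hence $w = \beta(V)$. But for any optimal \dc solution $H^*$, the identity $2 k^* = \sum_v d_{H^*}(v)$ together with $d_{H^*}(v) \le \beta(v)$ yields $2 k^* \le \beta(V) = w$; combined with $w \le 2 k^*$ from the first direction, this forces $2 k^* = w$, contradicting $w$ odd. Hence $v^*$ exists, the splitting off step goes through, and the proof concludes.
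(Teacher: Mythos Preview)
The paper does not prove this lemma; it is simply quoted as Lemma~2.6 of Bencz\'ur and Karger~\cite{BenczurK00} and used as a black box. Your argument is the standard one and is correct. Two minor remarks. First, Lov\'asz's theorem guarantees that splitting off $s$ preserves $\lambda(u,v)\ge\tau$ for a fixed threshold $\tau\ge 2$, not that it preserves all pairwise connectivities exactly (that stronger statement is Mader's); since you only need the threshold version, citing Lov\'asz is appropriate, but the phrase ``preserves every pairwise connectivity'' overstates what you actually use. Second, your contradiction argument in the odd case presupposes that the \dc instance is feasible (so that an optimal $H^*$ exists); this matches how the paper applies the lemma, which immediately after it explicitly handles the no-slack-vertex situation as an infeasibility certificate rather than as part of the lemma.
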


After external augmentation, we have $w = b(V)$. If $w$ is odd, we claim there is at least one vertex with $\beta(v) \ge b(v)+1$, else the instance is infeasible. \Cref{lem:bk2.6} claims that the optimal solution of the \dc instance has weight $(w+1)/2$, i.e., the sum of degrees is $w+1$. Now, if $\beta(v) = b(v)$ for all vertices $v\in V$, then $\sum_{v\in V}\beta(v) = b(V) = w$. This shows that the instance is infeasible. If the instance is feasible, we add 1 to $b(v)$ for an arbitrary vertex $v\in V$ such that $\beta(v) \ge b(v)+1$. 

By \Cref{lem:bk2.6}, the optimal solution of \dc problem has $b(V)/2$ edges. Now, note that if we had used $b$ instead of $\beta$ as our degree constraints, we would still get the same external augmentation solution and consequently the same value of $w$. Therefore, we call $b$ the {\em tight} degree constraints. The \dc problem is now equivalent to splitting off the vertex $s$ on the external augmentation solution $H = (V+s, E\cup E_s)$ where $E_s$ is the set of weighted edges incident on $s$ where $w(v, s) = b(v)$. We denote this splitting off instance $(H, \tau, s)$.

The Bencz\'ur-Karger algorithm~\cite{BenczurK00} provides an iterative greedy solution for splitting off $s$ by using \emph{partial solutions}. Given a splitting off instance $(H = (V+s, E\cup E_s),\tau, s)$ where $w(v, s) = b(v)$ for all $v\in V$, define a \emph{partial solution} as an edge set $F$ defined on $V$ satisfying the following three properties:
 \begin{enumerate}
 \item For all vertices $v\in V$, the (weighted) degree of $v$ in edges $F$, denoted $d_F(v) := \sum_{u\in V} w_F(u, v)$, satisfies $d_F(v)\le b(v)$.
 \item For all edges $(u,v)\in F$, no extreme set can contain both $u$ and $v$. (Note that an extreme set is a proper subset of $V$, and hence $V$ is not an extreme set by definition.)
 \item Any extreme set in $(V, E\uplus F)$ is also extreme in $G$. (For two weighted edge sets $X, Y$ defined on $V$, we use $X\uplus Y$ to denote their union where the weights of parallel edges are added.) That is, adding $F$ to $G$ does not create new extreme sets (but some extreme sets may no longer be extreme).
 \end{enumerate}


The next lemma shows the optimality of iteratively adding partial solutions for the splitting off problem:
\begin{lemma}[Lemma 4.1 of \cite{BenczurK00}]
\label{lem:partial-solution-split}
Suppose we are given a splitting off instance $(H = (V+s, E\cup E_s),\tau,s)$ and a partial solution $F$ where $d_F(v) := \sum_{u\in V} w_F(u, v)$ is the degree of vertex $v\in V$ in $F$. Now, suppose $F'$ is a solution for the splitting off instance $(H' = (V+s, E'\cup E'_s), \tau, s)$ where the weight of edges in $E'$ and $E'_s$ are respectively given by $w'(u, v) = w(u, v) + w_F(u, v)$ for $u, v\in V$ and $w'(v, s) := w(v, s) - d_F(v)$ for $v\in V$. Then, $F\uplus F'$ is a solution for the splitting-off instance $(G = (V+s, E),\tau,s)$.
\end{lemma}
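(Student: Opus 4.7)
The plan is to verify directly that $F \uplus F'$ satisfies the two defining properties of a splitting-off solution for the instance $(H = (V+s, E\cup E_s), \tau, s)$: first, that the weighted degree of every vertex $v\in V$ in $F \uplus F'$ equals $w(v,s)$, so that all of $s$'s incident weight is accounted for; and second, that for every non-empty proper subset $U \subset V$, the cut value $\delta_{G \uplus F \uplus F'}(U) \ge \tau$, so that splitting $s$ away preserves $\tau$-connectivity on $V$.

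For the degree condition, I would combine the partial-solution property $d_F(v) \le b(v) = w(v,s)$ (which, in particular, makes the modified weights $w'(v,s) = w(v,s) - d_F(v)$ nonnegative, so that the instance $H'$ is well-defined) with the hypothesis that $F'$ solves the splitting-off instance $H'$: namely $d_{F'}(v) = w'(v,s) = w(v,s) - d_F(v)$ for every $v\in V$. Adding $d_F(v)$ to both sides gives $d_{F\uplus F'}(v) = w(v,s)$, as required.

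For the cut condition, I would observe that the weights on $E'$ are $w'(u,v) = w(u,v) + w_F(u,v)$, so that the graph $(V, E')$ is literally $G \uplus F$ as a weighted graph on $V$. Consequently, for every non-empty proper subset $U \subset V$,
\[
\delta_{G \uplus F \uplus F'}(U) \;=\; \delta_G(U) + \delta_F(U) + \delta_{F'}(U) \;=\; \delta_{(V,E')}(U) + \delta_{F'}(U),
\]
and the right-hand side is at least $\tau$ because $F'$ is, by assumption, a valid splitting-off solution for $H'$ with connectivity threshold $\tau$.

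I do not anticipate any serious technical obstacle: the argument is an unpacking of definitions, where one carefully tracks the transformation from $H$ to $H'$. The only point that deserves a moment's attention is that the additional partial-solution constraints on $F$ (namely, properties~2 and~3 in the definition, which concern extreme sets) play no role in this particular lemma—they are needed elsewhere to keep the iterative algorithm correct, but the composition step asserted here uses only the degree bound $d_F(v)\le b(v)$ together with the two conditions satisfied by $F'$.
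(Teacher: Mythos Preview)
The paper does not supply its own proof of this lemma; it is quoted verbatim as Lemma~4.1 of Bencz\'ur--Karger~\cite{BenczurK00} and used as a black box. Your direct verification is correct and is exactly the natural argument: the degree condition telescopes because $d_{F'}(v)=w'(v,s)=w(v,s)-d_F(v)$, and the cut condition follows since $(V,E')=G\uplus F$ so that $\delta_{G\uplus F\uplus F'}(U)=\delta_{(V,E')\uplus F'}(U)\ge\tau$. Your observation that only property~1 of a partial solution (the degree bound, ensuring $w'(v,s)\ge 0$) is needed here, while properties~2 and~3 are used elsewhere to guarantee that the residual instance remains well-behaved, is also correct.
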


By equivalence between the splitting off problem and edge augmentation with tight degree constraints $b(v)$, we get the following equivalent lemma for the \dc problem:
%
\begin{lemma}
\label{lem:partial-solution}
Given a \dc instance $(G,\tau,b)$ with tight degree constraints $b$ and given a partial solution $F$, if $F'$ is an optimal solution for \dc instance $(G,\tau,b')$ where $b'(v)=b(v)-d_F(v)$, then $F\uplus F'$ is an optimal solution for the original instance. 
\end{lemma}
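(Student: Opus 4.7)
The plan is to deduce this lemma as the \dc analog of \Cref{lem:partial-solution-split} via the equivalence between \dc with a tight degree bound and edge splitting off, set up in the discussion preceding \Cref{lem:bk2.6}. I would first fix the correspondence: the \dc instance $(G,\tau,b)$ with tight $b$ corresponds to the splitting-off instance $(H,\tau,s)$ on $H=(V+s,\,E\cup E_s)$ with $w(v,s)=b(v)$, and by \Cref{lem:bk2.6} an optimal \dc solution has weight $b(V)/2$ and arises from a complete splitting off of $s$ in $H$. It is routine to verify that the three conditions defining a partial \dc solution $F$ translate exactly to the conditions needed to perform a partial splitting-off of $s$ in $H$ along the edges of $F$ (degree budget at $s$, no extreme set containing both endpoints of an $F$-edge, and preservation of the extreme-set structure).

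Next, I would apply \Cref{lem:partial-solution-split} to $(H,\tau,s)$ and the partial solution $F$, obtaining a reduced splitting-off instance $(H',\tau,s)$ on $H'=(V+s,\,(E\uplus F)\cup E'_s)$ with $w'(v,s)=b(v)-d_F(v)=b'(v)$. Under the equivalence, this reduced splitting-off instance corresponds to the \dc instance on the current graph $G\uplus F$ with tight degree bound $b'$, which I take to be the meaning of ``$(G,\tau,b')$'' in the statement (with the partial solution $F$ already folded into the underlying graph), since this is the natural reading compatible with the splitting-off analog. An optimal \dc solution $F'$ for the reduced instance then corresponds to a complete splitting off of $s$ in $H'$, so \Cref{lem:partial-solution-split} guarantees that $F\uplus F'$ completely splits off $s$ in $H$. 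Translating back, $F\uplus F'$ is a feasible \dc solution for $(G,\tau,b)$, and its weight $\tfrac12 b(V)$ (by handshaking, since a complete splitting off of $s$ uses exactly $b(v)$ degree at each $v$) equals the \dc optimum by \Cref{lem:bk2.6}.

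The substantive content is already in \Cref{lem:partial-solution-split}, so the main obstacle is bookkeeping, especially verifying that $b'$ really is tight for the reduced \dc instance on $G\uplus F$ (so that the reduced \dc ``optimum'' corresponds to a full splitting off in $H'$). My plan for that: for any extreme set $X$ of $G\uplus F$, property 3 of a partial solution implies $X$ is also extreme in $G$, so its demand in the reduced instance is $\dem_G(X)-\delta_F(X)$; property 2 implies no edge of $F$ has both endpoints in $X$, so $d_F(X)=\delta_F(X)$; combining, $b'(X)=b(X)-d_F(X)=\dem_G(X)-\delta_F(X)$, which matches the reduced demand exactly, establishing tightness. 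Once these correspondences are in place, the lemma is an immediate corollary of \Cref{lem:partial-solution-split}.
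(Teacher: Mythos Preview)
Your proposal is correct and matches the paper's approach: the paper does not give a standalone proof but simply states that the lemma is the \dc analog of \Cref{lem:partial-solution-split} obtained ``by equivalence between the splitting off problem and edge augmentation with tight degree constraints,'' which is precisely the reduction you carry out. Your observation that the reduced instance should be read as $(G\uplus F,\tau,b')$ rather than literally $(G,\tau,b')$ is the right interpretation and is consistent with how the paper applies the lemma iteratively.
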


For an extreme set $X$ of a graph $G$, define its \emph{demand} as $\dem_G(X)=\tau-\delta_G(X)$. Note that if each extreme set has demand at most $0$, then the graph has connectivity at least $\tau$; this is because there exists a side of a global min-cut (in particular, any {\em minimal} vertex set that is a side of a global min-cut) which is an extreme set.

Consider all maximal extreme sets $X$ satisfying $\dem(X)\ge2$. List them out as $X_1,\ldots,X_r$, where the ordering is such that $X_1$ and $X_r$ have the two smallest values of $\delta_G(X_i)$ among $X_1,\ldots,X_r$.
An \textit{augmentation chain} is an edge set $\{(a_i,\tilde a_{i+1})\mid i\in[r-1]\}$ such that for each $i\in[r-1]$,
\begin{enumerate}
\item $a_i\in X_i$ and $\tilde a_{i+1}\in X_{i+1}$, i.e., edge $(a_i,\tilde a_{i+1})$ connects adjacent sets $X_i$ and $X_{i+1}$, and
\item $b(a_i)\ge d_F(a_i)$ and $b(\tilde{a}_i)\ge d_F(\tilde{a}_i)$ (we say $a_i$ and $\tilde a_i$ still has {\em vacant} degree). Note that $d_F(a_i)=1$ if $a_i\ne \tilde{a}_i$ (or if either $a_i$ or $\tilde a_i$ is undefined), and $d_F(a_i)=2$ if $a_i= \tilde{a}_i$.
\end{enumerate}
The significance of an augmentation chain is that it is always a partial solution. The lemma below is proved in Section~4.2 of~\cite{BenczurK00} and is one of the main technical contributions of that paper.

\begin{lemma}
An augmentation chain is a partial solution.
\end{lemma}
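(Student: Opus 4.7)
We verify the three defining properties of a partial solution in turn. Property~1 ($d_F(v)\le b(v)$ for all $v$) is immediate from the vacant-degree requirement built into the definition of an augmentation chain: each $(a_i,\tilde a_{i+1})$ is added only when its endpoints still satisfy $b\ge d_F$. Property~2 (no extreme set of $G$ contains both endpoints of any edge of $F$) follows from the structure of the extreme sets tree. Suppose for contradiction some extreme set $Y$ of $G$ contains both $a_i\in X_i$ and $\tilde a_{i+1}\in X_{i+1}$. By \Cref{lem:extreme-laminar} and the disjointness of $X_i$ and $X_{i+1}$, $Y$ must strictly contain both, so in the extreme-set tree $Y$ is a strict ancestor of $X_i$. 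Hence $\delta_G(Y)<\delta_G(X_i)\le \tau-2$, giving $\dem(Y)\ge 3$ and contradicting the maximality of $X_i$ among extreme sets of demand~$\ge 2$.

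The main obstacle is Property~3 (adding $F$ to $G$ creates no new extreme sets in $G':=(V,E\uplus F)$). The plan is to argue by contradiction: assume some $Z\subsetneq V$ is extreme in $G'$ but not in $G$, and pick a witness $W\subsetneq Z$ with $\delta_G(W)\le \delta_G(Z)$. Extremeness of $Z$ in $G'$ forces $\delta_{G'}(W)>\delta_{G'}(Z)$, which rearranges to
\[
\delta_F(W)-\delta_F(Z) \;>\; \delta_G(Z)-\delta_G(W) \;\ge\; 0.
\]
I would then exploit three features of the chain. First, after contracting each $X_j$ to a single node, $F$ becomes a simple path on $r$ nodes, so $\delta_F(S)$ is essentially determined by the trace $I(S):=\{j:X_j\subseteq S\}$ (together with a fringe from any partial overlaps). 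Second, using submodularity of $\delta_G$ and laminarity of $\{X_j\}_j$, I would uncross $W$ (and symmetrically $Z$) against each $X_j$ by replacing it with $W\cap X_j$, $W\cup X_j$, or $W\setminus X_j$ as appropriate; these moves preserve $\delta_G(W)\le \delta_G(Z)$ and do not increase $\delta_F(W)$, reducing to the case where both $W$ and $Z$ respect the laminar family $\{X_j\}_j$. Third, once $W$ and $Z$ both respect the family, the quantity $\delta_F(W)-\delta_F(Z)$ counts path-boundary transitions in $I(W)$ absent from $I(Z)$, and each such transition can be ``exchanged'' for a modification of $W$ that trades a bounded amount of $\delta_G$ against a reduction in $\delta_F$.

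The hard step will be making this exchange watertight. The ordering convention that $X_1$ and $X_r$ (the path endpoints) have the two smallest $\delta_G$-values among the $X_j$'s is precisely what makes it work: an interior $X_j$ contributes $2$ to $\delta_F$ whereas an endpoint contributes only $1$, so including or excluding an endpoint piece of the chain changes $\delta_F$ by $1$ at a $\delta_G$-cost that is dominated by the slack $\delta_G(Z)-\delta_G(W)$ guaranteed above. Carrying this out produces a proper subset $W'\subsetneq Z$ with $\delta_{G'}(W')\le \delta_{G'}(Z)$, contradicting extremeness of $Z$ in $G'$ and completing Property~3. This exchange argument is the core technical contribution of~\cite{BenczurK00} (Section~4.2 of that paper) and is the step I expect to consume the bulk of the proof effort.
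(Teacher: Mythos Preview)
The paper does not prove this lemma at all: it simply states it and attributes the proof to Section~4.2 of Bencz\'ur and Karger~\cite{BenczurK00}, calling it ``one of the main technical contributions of that paper.'' So there is no paper proof to compare against; your proposal already does more than the paper by attempting a sketch.

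Your arguments for Properties~1 and~2 are correct and complete. For Property~3, your plan (pick a witness $W\subsetneq Z$ with $\delta_G(W)\le\delta_G(Z)$, uncross against the $X_j$'s, then run an exchange argument exploiting the path structure and the placement of the two smallest-cut sets at the ends) is a reasonable outline, and you are right that this is where the real work lies. However, as written it is only a sketch, and several of the intermediate claims are not justified: in particular, the assertion that each uncrossing move against some $X_j$ simultaneously preserves $W\subsetneq Z$, preserves $\delta_G(W)\le\delta_G(Z)$, and does not increase $\delta_F(W)$ is far from automatic (submodularity gives you one of $W\cap X_j$ or $W\cup X_j$ with small $\delta_G$, but not necessarily the one you need for the other two constraints), and the final exchange step is stated too loosely to verify. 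You correctly flag this as the hard step and defer to~\cite{BenczurK00}; that is exactly what the paper does as well, so in that sense you and the paper are aligned.
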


Bencz\'ur and Karger's algorithm repeatedly constructs augmentation chains until there are no extreme sets with demand at least $2$ in the current graph. By applying \Cref{lem:partial-solution} after each iteration, any optimal solution to the instance after that iteration can be augmented to an optimal solution to the instance before that iteration. At the end, only extreme sets with demand $1$ remain in the instance, at which point Bencz\'ur and Karger use an algorithm of Naor~et~al.~\cite{NaorGM97} that runs in $O(n)$ time given the \emph{min-cut cactus representation} of $G$. Using the $\tilde{O}(m)$-time min-cut cactus algorithm of Karger and Panigrahi~\cite{KargerP09}, this last step takes $\tilde{O}(m)$ time.

On each iteration, Bencz\'ur and Karger compute an augmentation chain from scratch given the current extreme sets tree, which takes $O(n)$ time, and then augment with that chain for as long as it is feasible. In particular, they repeatedly augment until some vertex uses up its vacant degree, or the list $X_1,\ldots,X_r$ changes, which can happen in any of the following ways:
 \begin{enumerate}
 \item Some vertex $u$ in the chain has no more vacant degree.
 \item Some $X_i$'s demand decreases to below $2$, in which case it is removed from the list. \label{case:1}
 \item Some $X_i$ is no longer extreme, in which case we replace $X_i$ with the maximal extreme sets in the subtree rooted at $X_i$ of the (original) extreme set tree.\label{case:2}
 \item $X_1$ and $X_r$ are no longer the two extreme sets with smallest cut value in the current graph. Since $X_1$ and $X_r$ have their cut values increased by $1$ on each augmentation while the other extreme sets $X_2,\ldots,X_{r-1}$ have their cut values increased by $2$, this can never happen on its own. In particular, it can only happen alongside cases (\ref{case:1}) and (\ref{case:2}).
 \end{enumerate}
The algorithm therefore computes the minimum number of times $t(F)$ that an augmentation $F$ (i.e., a chain) can be added to the current graph. We can compute $t(F)$ as $\min\{t_1(F),t_2(F),t_3(F)\}$ where each $t_i(F)$ is the time of violation of the respective case above. In particular,
\begin{align*}
 t_1(F) &= \min_{u\in V} \left\lfloor \frac{b(u)}{d_F(u)}\right\rfloor.\\
 t_2(F) &= \min_{i\in[r]} \left\lfloor \frac{\tau-\delta(X_i)}{\delta_F(X_i)}\right\rfloor.\\
 t_3(F) &= \min_{i\in[r]} \min_{Y\in\text{desc}(X_i)}\left\lceil \frac{\delta(Y)-\delta(X_i)}{\delta_F(X_i)-\delta_F(Y)}\right\rceil,\\
\end{align*}
where desc$(U)$ is the set of descendants of $U$ (excluding $U$) in the (original) extreme sets tree.
Note that $d_F(u)$ and $\delta_F(U)$ can be either 1 or 2, and $\delta_F(X_i)\ge \delta_F(Y)$ for all $Y\in \text{desc}(U)$. Also, if the denominator of any of the above fractions is $0$, then we can ignore that expression in the minimum computation.

\subsection{Improved Algorithm}
We now speed up the Bencz\'ur-Karger algorithm so that it takes $O(n\log n)$ time given the extreme sets tree of the input graph (except the last step that uses a min-cut cactus and takes $\tO(m)$ time). Our main insight is the following: rather than computing each new augmentation chain from scratch, we want to reuse as many edges from the previous chain as possible. We show that any changes that must be made can be amortized to a total of $O(n\log n)$ time with the help of data structures. Our speedup changes can be summarized as follows:
 \begin{itemize}
 \item We maintain $t_1(F),t_2(F),t_3(F)$ using data structures so that $t(F)$ can be computed quickly at each iteration, and
 \item Instead of adding each augmentation chain explicitly to the graph in $O(n)$ time, we add it implicitly with the help of ``lazy'' tags on each edge.
 \end{itemize}

\subsubsection{Data Structure}
For $t_1(F)=\min_{u}  \lfloor b(u)/d_F(u)\rfloor$, we only need to consider vertices $u\in\{a_i,\tilde a_{i+1}\}$ for some $i$, since those are the only vertices with $d_F(u)>0$. Since only $d_F(u)\in\{1,2\}$ is possible for such $u$, we use two priority queues maintaining $b(u)$ for $d_F(u)=1$ and $d_F(u)=2$. Modifying $d_F(u)$ can be handled by deleting $u$ from one queue and insert it to the other. Other (single element) operations can be handled by normal priority queue operations in $O(\log n)$ time.
Call this data structure the \emph{dual} priority queue. Let $Q_1$ be the dual priority queue used to maintain $t_1(F)$, and let $Q_1[1]$ and $Q_1[2]$ be the two priority queues responsible for $d_F(u)=1$ and $d_F(u)=2$, respectively. Similarly, $t_2(F)$ can be maintained by a dual priority queue $Q_2$ since $\delta_F(X_i)\in\{1,2\}$ for all $i\in[r]$, and define $Q_2[1]$ and $Q_2[2]$ as before. Maintaining $t_3(F)$ is more involved, so we defer its discussion to later.

We maintain the edges $(a_i,\tilde{a}_{i+1})\in F$ and the list $X_1,\ldots,X_r$ explicitly. The function $b$ is implicitly maintained by $Q_1$, and values $\delta(Y)$ are implicitly maintained by $Q_2$ and $R(X_i)$. To maintain these implicitly, we keep a global ``timer'' $t_{global}$ that starts at $0$ and increases by $t(F)$ every time we add the current augmentation chain $F$ to the graph. Every time some edge $e$ is added to $F$, we maintain the edge's ``birth'' time $t_{birth}(e)$ which we set to the current global timer $t_{global}$. At any later point in time, if edge $e$ is still in $F$, then its weight is implicitly set to $t_{global}-t_{birth}(e)$. The moment an edge $e$ is removed from $F$, we explicitly add an edge $e$ of weight $t_{global}-t_{birth}(e)$ to the current graph. Similarly, every time a vertex $u$ has an incident edge added or removed from $F$, we set its birth time $t_{birth}(u)$ to the current $t_{global}$.

We now discuss how to implicitly maintain $b$ and $\delta$ in the dual priority queues $Q_1,Q_2$. Every time we add or delete an edge $e$ in $F$, we update $Q_1$ as follows. For each endpoint $u$ of $e$ whose value $d_F(u)$ after the modification is positive, we add $u$ to the priority queue of $Q_1[d_F(u)]$ and set its value to $b(u)-d_F(u)\cdot t_{birth}(u)$. (If $u$ already existed in $Q_1$ before, then delete it before inserting it again.) This way, we maintain the invariant that at any later time $t_{global}$, the true value of $b(u)$ is exactly $u$'s value in $Q_1$ plus $d_F(u)\cdot t_{global}$. The key observation is that for a given $t_{global}$ and a given $i\in\{1,2\}$, the true values $b(u)$ for each vertex $v$ in $Q_1[i]$ are off from their $Q_1$ values by the same additive $i\cdot t_{global}$. Therefore, by querying the minimum in $Q_1[i]$ for $i\in\{1,2\}$, we can recover the correct minimum $t_1(F)=\min_{u}  \lfloor b(u)/d_F(u)\rfloor$.

Similarly for $Q_2$, every time we add/delete an edge in $F$ that connects $X_i$ and $X_{i+1}$, we move each $X\in\{X_i,X_{i+1}\}$ to $Q_2[\delta_F(X)]$ and set its value to be its old value minus $t_{birth}(e)$ in the case of addition, and its old value plus $t_{global}$ in the case of deletion. After deletion, the edge $e$ has been explicitly added with weight $t_{global}-t_{birth}(e)$, which is exactly the net contribution over the insertion and deletion. Once again, for a given $t_{global}$ and a given $i\in\{1,2\}$, the true values $\delta_F(X_i)$ in $Q_2[i]$ are off from their $Q_2$ values by the same additive $i\cdot t_{global}$, so querying the minimum in $Q_2[i]$ for $i\in\{1,2\}$ lets us recover $t_2(F)=\min_{i\in[r]} \left\lfloor (\tau-\delta(X_i))/\delta_F(X_i)\right\rfloor$.

We now discuss how to maintain $t_3(F)$. We maintain values $t_3(F,X_i)=\displaystyle\min_{Y\in\text{desc}(X_i)}\left\lceil \frac{\delta(Y)-\delta(X_i)}{\delta_F(X_i)-\delta_F(Y)}\right\rceil$ for each $X_i$ in the current list $X_1,\ldots,X_r$. The value $t_3(F,X_i)$ is first computed when we add a new $X_i$ to the list, and it is updated whenever we add or remove an edge in $\delta(X_i)$, or we swap $X_i$ in the ordering (in particular, when a different extreme set becomes $X_1$ or $X_r$). From the values $t_3(F,X_i)$, we can easily maintain $t_3(F)=\min_{i\in[r]}t_3(F,X_i)$ using a priority queue whose values are the $t_3(F,X_i)$.

To maintain the values $t_3(F,X_i)$, we use a (static) tree data structure that maintains a real number at each vertex of the tree and supports the following operations, which can be implemented in $O(\log n)$ amortized time by, e.g., a top tree (see Section~6 of~\cite{goldberg1991use}).
 \begin{itemize}
 \item $\textsc{AddPath}(u,v,x)$: add real number $x$ to all vertices on the $u-v$ path in the tree,
 \item $\textsc{MinPath}(u,v)$: return the minimum value of all vertices on the $u-v$ path in the tree, and
 \item $\textsc{MinSubtree}(u)$: return the minimum value of all vertices in the subtree rooted at $v$.
 \end{itemize}
Our static tree is just the original extreme set tree itself, whose nodes are the extreme sets of the original graph. For each extreme set $X$ in the original graph, we implicitly maintain the value $\delta(X)$ at node $X$ in the tree. Every time an edge $(u,v)$ of weight $w$ is explicitly added to the graph (i.e., when it is removed from $F$), we explicitly update the values $\delta(X)$. Let $Y$ be the lowest common ancestor of extreme sets $\{u\}$ and $\{v\}$ in the tree. The extreme sets that contain edge $(u,v)$ are precisely those on the $\{u\}$-to-$\{v\}$ path in the tree, excluding $Y$. We can therefore call $\textsc{AddPath}(\{u\},\{v\},w)$ and $\textsc{AddPath}(Y,Y,-w)$ to explicitly update the values $\delta(X)$.

Of course, to compute $t_3(F)$, we also need to consider the edges implicitly added to the graph, i.e., the edges currently in $F$. We first assume that $1<i<r$. For each such $X_i$, let $Y_i$ be the lowest common ancestor of extreme sets $\{a_i\}$ and $\{\tilde a_i\}$. Then, observe that
 \begin{itemize}
 \item The extreme sets $Y\in\text{desc}(X_i)$ with $\delta_F(Y)=2$ are precisely those on the path from $Y_i$ to $X_i$, excluding $X_i$,
 \item The extreme sets $Y\in\text{desc}(X_i)$ with $\delta_F(Y)=1$ are precisely those on the path from $\{a_i\}$ to $\{\tilde a_i\}$, excluding $Y_i$, and
 \item All other extreme sets $Y\in\text{desc}(X_i)$ satisfy $\delta_F(Y)=0$.
 \end{itemize}
We compute the minimum $\delta(Y)$ conditioned on $\delta_F(Y)=0,1,2$ separately. We first call $\textsc{AddPath}(X_i,X_i,M)$ for a large value $M>0$ so that $X_i$ is no longer the minimum in any of our \textsc{MinPath} queries. For $\delta_F(Y)=2$, we call $\textsc{MinPath}(\{Y_i\},\{X_i\})$ and add the implicit weights of the edges incident to $a_i$ and $\tilde a_i$ in $F$. For $\delta_F(Y)=1$, we call $\textsc{MinPath}(\{a_i\},Y_i)$ and add the implicit weight of the edge incident to $a_i$ in $F$, then call $\textsc{MinPath}(\{\tilde a_i\},Y_i)$ and add the implicit weight of the edge incident to $\tilde a_i$ in $F$, and finally take the minimum of the two. For $\delta_F(Y)=0$, we call $\textsc{AddPath}(\{Y_i\},\{X_i\},M)$ and $\textsc{AddPath}(\{a_i\},\{\tilde a_i\},M)$ to exclude those extreme sets from the minimum computation, and then call $\textsc{MinSubtree}(X_i)$. Finally, we reverse all the \textsc{AddPath} queries by calling them again with $-M$ instead of $M$. The case $i\in\{1,r\}$ is handled similarly.

With the help of the tree data structure, we can also compute the new list $X_1,\ldots,X_r$ whenever a set $X_i$ is removed from it, i.e., when case~(\ref{case:1}) or~(\ref{case:2}) happens. Whenever a set $X_i$ is removed, we traverse down the subtree rooted at $X_i$ to determine the maximal extreme sets in the subtree with demand at least $2$. To determine whether a set $Y$ is still extreme, we compute $\min_{Y'\in\text{desc}(Y)}\delta(Y')$ by casing on the value of $\delta(Y')\in\{0,1,2\}$ in the same way as above, and comparing its value to $\delta(Y)$. Whenever we find an extreme set $Y$ with demand at least $2$, we stop traversing down the subtree at $Y$ and look elsewhere.

\subsubsection{Running Time}
We claim that the running time of our algorithm is $O(n\log n)$ given the original extreme sets tree. Recall that each iteration stops when one of the following occurs.
 \begin{enumerate}
 \item Some vertex $u$ has no more vacant degree. In this case, we replace the edges incident to $u$ in $F$, which is at most $2$ edges. This takes $O(\log n)$ time, and this case can happen at most $n$ times, once per vertex.
 \item Some $X_i$'s demand decreases to below $2$, or some $X_i$ is no longer extreme. In this case, we remove $X_i$ from the list and add the maximal extreme sets with demand at least $2$ in the subtree rooted at $X_i$ in the original extreme set tree. The algorithm traverses down the subtree rooted at $X_i$ to look for the new extreme sets to add to the list. Here, the key observation is that each extreme set in the original extreme set tree is visited at most once. Once it is visited in one of these traversals, it is either verified to be extreme with demand at least $2$, in which case it is added to the list, or not, in which case it is never visited again. Therefore, the total number of extreme sets to be verified is $O(n)$ over the iterations. Each verification takes $O(\log n)$ time for a total of $O(n\log n)$.

As for edge modifications, there are at most $2$ edge modifications each time some $X_i$ is added or removed from the list. Each extreme set is added and removed at most once, for a total of $O(n)$ modifications over the iterations. We only explicitly add edges to the graph after each such modification, and updating the data structures on each addition takes $O(\log n)$ time for a total of $O(n\log n)$.

 \end{enumerate}
Including the last step that uses the min-cut cactus and takes $\tO(m)$ time, the total running time is $\tO(m)$, which concludes \Cref{thm:deca}.

\section*{Acknowledgements} 

RC and DP were supported in part by an NSF CAREER award CCF-1750140, NSF award CCF-1955703, and ARO award W911NF2110230. JL was supported in part by NSF awards CCF-1907820, CCF-1955785, and CCF-2006953.

\bibliographystyle{plain}
\bibliography{ref}

\begin{thebibliography}{10}

\bibitem{AbboudKT20}
Amir Abboud, Robert Krauthgamer, and Ohad Trabelsi.
\newblock Cut-equivalent trees are optimal for min-cut queries.
\newblock In {\em 61st {IEEE} Annual Symposium on Foundations of Computer
  Science, {FOCS} 2020, Durham, NC, USA, November 16-19, 2020}, pages 105--118.
  {IEEE}, 2020.

\bibitem{AbboudKT21}
Amir Abboud, Robert Krauthgamer, and Ohad Trabelsi.
\newblock Subcubic algorithms for {G}omory-{H}u tree in unweighted graphs.
\newblock In Samir Khuller and Virginia~Vassilevska Williams, editors, {\em
  {STOC} '21: 53rd Annual {ACM} {SIGACT} Symposium on Theory of Computing,
  Virtual Event, Italy, June 21-25, 2021}, pages 1725--1737. {ACM}, 2021.

\bibitem{Benczur94}
Andr\'{a}s~A. Bencz\'{u}r.
\newblock Augmenting undirected connectivity in rnc and in randomized
  $\tilde{O}(n^3)$ time.
\newblock In {\em Proceedings of the 26th Annual ACM Symposium on Theory of
  Computing}, STOC '94, page 658–667, New York, NY, USA, 1994. Association
  for Computing Machinery.

\bibitem{BenczurK00}
Andr{\'{a}}s~A. Bencz{\'{u}}r and David~R. Karger.
\newblock Augmenting undirected edge connectivity in $\tilde{O}(n^2)$ time.
\newblock {\em J. Algorithms}, 37(1):2--36, 2000.

\bibitem{CaiS89}
Guo-Ray Cai and Yu-Geng Sun.
\newblock The minimum augmentation of any graph to a k-edge-connected graph.
\newblock {\em Networks}, 19(1):151--172, 1989.

\bibitem{Frank92}
Andr\'{a}s Frank.
\newblock Augmenting graphs to meet edge-connectivity requirements.
\newblock {\em SIAM Journal on Discrete Mathematics}, 5(1):25–53, February
  1992.

\bibitem{Gabow94}
Harold~N. Gabow.
\newblock Efficient splitting off algorithms for graphs.
\newblock In {\em Proceedings of the 26th Annual ACM Symposium on Theory of
  Computing}, STOC '94, page 696–705, New York, NY, USA, 1994. Association
  for Computing Machinery.

\bibitem{Gabow16}
Harold~N. Gabow.
\newblock The minset-poset approach to representations of graph connectivity.
\newblock {\em {ACM} Trans. Algorithms}, 12(2):24:1--24:73, 2016.

\bibitem{GaoLP21}
Yu~Gao, Yang~P. Liu, and Richard Peng.
\newblock Fully dynamic electrical flows: Sparse maxflow faster than
  {G}oldberg-{R}ao.
\newblock In {\em {FOCS 2021}, to appear}.

\bibitem{goldberg1991use}
Andrew~V Goldberg, Michael~D Grigoriadis, and Robert~E Tarjan.
\newblock Use of dynamic trees in a network simplex algorithm for the maximum
  flow problem.
\newblock {\em Mathematical Programming}, 50(1):277--290, 1991.

\bibitem{KargerP09}
David~R. Karger and Debmalya Panigrahi.
\newblock A near-linear time algorithm for constructing a cactus representation
  of minimum cuts.
\newblock In {\em Proceedings of the 20th Annual ACM-SIAM Symposium on Discrete
  Algorithms}, SODA '09, page 246–255, USA, 2009. Society for Industrial and
  Applied Mathematics.

\bibitem{KargerS96}
David~R. Karger and Clifford Stein.
\newblock A new approach to the minimum cut problem.
\newblock {\em J. ACM}, 43(4):601–640, July 1996.

\bibitem{LiP20deterministic}
Jason Li and Debmalya Panigrahi.
\newblock Deterministic min-cut in poly-logarithmic max-flows.
\newblock In {\em 61st {IEEE} Annual Symposium on Foundations of Computer
  Science, {FOCS} 2020, Durham, NC, USA, November 16-19, 2020}, pages 85--92.
  {IEEE}, 2020.

\bibitem{LiP21approximate}
Jason Li and Debmalya Panigrahi.
\newblock Approximate gomory-hu tree is faster than \emph{n} - 1 max-flows.
\newblock In Samir Khuller and Virginia~Vassilevska Williams, editors, {\em
  {STOC} '21: 53rd Annual {ACM} {SIGACT} Symposium on Theory of Computing,
  Virtual Event, Italy, June 21-25, 2021}, pages 1738--1748. {ACM}, 2021.

\bibitem{Lovasz79}
L{\'a}szl{\'o} Lov{\'a}sz.
\newblock {\em Combinatorial Problems and Exercises}.
\newblock North-Holland Publishing Company, Amsterdam, 1979.

\bibitem{Mader78}
W.~Mader.
\newblock A reduction method for edge-connectivity in graphs.
\newblock In B.~Bollobás, editor, {\em Advances in Graph Theory}, volume~3 of
  {\em Annals of Discrete Mathematics}, pages 145--164. Elsevier, 1978.

\bibitem{MulmuleyVV87}
Ketan Mulmuley, Umesh~V. Vazirani, and Vijay~V. Vazirani.
\newblock Matching is as easy as matrix inversion.
\newblock {\em Comb.}, 7(1):105--113, 1987.

\bibitem{NagamochiI97}
Hiroshi Nagamochi and Toshihide Ibaraki.
\newblock Deterministic {\~{o}}(nm) time edge-splitting in undirected graphs.
\newblock {\em J. Comb. Optim.}, 1(1):5--46, 1997.

\bibitem{NaorGM97}
Dalit Naor, Dan Gusfield, and Charles~U. Martel.
\newblock A fast algorithm for optimally increasing the edge connectivity.
\newblock {\em {SIAM} J. Comput.}, 26(4):1139--1165, 1997.

\bibitem{BrandLLSSSW21}
Jan van~den Brand, Yin~Tat Lee, Yang~P. Liu, Thatchaphol Saranurak, Aaron
  Sidford, Zhao Song, and Di~Wang.
\newblock Minimum cost flows, mdps, and
  {\(\mathscr{l}\)}\({}_{\mbox{1}}\)-regression in nearly linear time for dense
  instances.
\newblock In Samir Khuller and Virginia~Vassilevska Williams, editors, {\em
  {STOC} '21: 53rd Annual {ACM} {SIGACT} Symposium on Theory of Computing,
  Virtual Event, Italy, June 21-25, 2021}, pages 859--869. {ACM}, 2021.

\bibitem{WatanabeN87}
Toshimasa Watanabe and Akira Nakamura.
\newblock Edge-connectivity augmentation problems.
\newblock {\em Journal of Computer and System Sciences}, 35(1):96--144, 1987.

\end{thebibliography}

\end{document}